\documentclass[12pt,draftcls,onecolumn]{IEEEtran}

\usepackage{tabularx}

\usepackage{algorithmic,algorithm}
\usepackage{ae}
\usepackage[T1]{fontenc}
\usepackage{epsf}
\usepackage{epsfig}
\usepackage{amsmath}
\usepackage{amsfonts}
\usepackage{amssymb}
\usepackage{amsxtra}
\usepackage{latexsym}
\usepackage{color}
\usepackage{subfigure}
\usepackage{graphics}
\usepackage{algorithm}
\usepackage{algorithmic}
\usepackage{mathrsfs}
\usepackage{cite}
\usepackage{url}
\usepackage{ulem}

\usepackage[english]{babel}
\usepackage{caption} 

\usepackage[section]{placeins}


\begin{document} 

\title{Low-Coherence Frames from Group Fourier Matrices }
\author{Matthew Thill \hspace{12pt} Babak Hassibi \\ 
Department of Electrical Engineering, Caltech, Pasadena, CA} 
\maketitle 

\newtheorem{definition}{Definition}
\newtheorem{theorem}{Theorem}
\newtheorem{lemma}{Lemma}
\newtheorem{corollary}{Corollary}

\newcommand{\stexp}{\mbox{$\mathbb{E}$}}    
\newcommand{\beq}{\begin{equation}}
\newcommand{\eeq}{\end{equation}}
\newcommand{\bea}{\begin{eqnarray}}
\newcommand{\eea}{\end{eqnarray}}
\def\G{{\mathcal{G}}}
\newcommand{\Prob}{\ensuremath{\mathbb{P}}}

\newcommand{\matr}[1]{\mathbf{#1}}
\newcommand{\vect}[1]{\mathbf{#1}}

\newcommand{\ve}{\vect{e}}
\newcommand{\hve}{\vect{\hat e}}
\newcommand{\vnu}{\vect{\nu}}

\newcommand{\norm}[2]{\lVert #1 \rVert_{#2}}
\newcommand{\setweightR}[2]{\Sigma_{\R^{#1}}^{(#2)}} 

\newcommand{\eg}{\emph{e.g.}}
\newcommand{\ie}{\emph{i.e.}}
\newcommand{\etal}{\emph{et al.}}
\newcommand{\confer}{\emph{cf.}}

\newcommand{\Q}{\mathbb{Q}} 
\newcommand{\R}{\mathbb{R}} 
\newcommand{\Z}{\mathbb{Z}} 
\newcommand{\C}{\mathbb{C}} 
\newcommand{\F}{\mathbb{F}}
\newcommand{\vv}{\vect{v}} 
\newcommand{\U}{\matr{U}} 
\newcommand{\M}{\matr{M}} 
\newcommand{\x}{\vect{x}}
\newcommand{\y}{\vect{y}} 
\newcommand{\cc}{\vect{c}} 
\newcommand{\va}{\vect{a}} 
\newcommand{\w}{\vect{w}} 
\newcommand{\Tr}[1]{\text{Tr}(#1)}

\begin{abstract} 
Many problems in areas such as compressive sensing and coding theory seek to design a set of equal-norm vectors with large angular separation. This idea is essentially equivalent to constructing a frame with low coherence. The elements of such frames can in turn be used to build high-performance spherical codes, quantum measurement operators, and compressive sensing measurement matrices, to name a few applications. 

In this work, we allude to the group-frame construction first described by Slepian and further explored in the works of Vale and Waldron. We present a method for selecting representations of a finite group to construct a group frame that achieves low coherence. Our technique produces a tight frame with a small number of distinct inner product values between the frame elements, in a sense approximating a Grassmanian frame. We identify special cases in which our construction yields some previously-known frames with optimal coherence meeting the Welch lower bound, and other cases in which the entries of our frame vectors come from small alphabets. In particular, we apply our technique to the problem choosing a subset of rows of a Hadamard matrix so that the resulting columns form a low-coherence frame. Finally, we give an explicit calculation of the average coherence of our frames, and find regimes in which they satisfy the Strong Coherence Property described by Mixon, Bajwa, and Calderbank. 
\end{abstract} 

\begin{keywords}
Frame, coherence, unit norm tight frame, group representation, group frame, Fourier transform over groups, Welch bound, spherical codes, compressive sensing. 
\end{keywords}

\section{Introduction:\\ Matrix Coherence and Frames} 
\label{sec:matcohframe} 

Recall that a set of vectors $\{f_i\}_{i = 1}^n$ in $\C^m$ (respectively $\R^m$) forms a \textit{frame} if there are two positive constants $A$ and $B$ such that for any $f \in \C^m$ (respectively $\R^m$), we have 
\begin{align} 
A || f ||^2 \le \sum_{i = 1}^n | \langle f, f_i \rangle |^2 \le B || f ||^2. \label{eqn:framedefeqn} 
\end{align} 

If we arrange our frame elements to be the columns of a matrix $\M = [f_1, ..., f_n]$, the frame is said to be \textit{tight} if $\M \M^* = \lambda \textbf{I}$, where $\lambda$ is a real scalar and $\textbf{I}$ is the $m \times m$ identity matrix. This is equivalent to having $A = B$ in (\ref{eqn:framedefeqn}), and we can see that in this case we necessarily have $A = B = \lambda$. The frame is called \textit{unit norm} if $||f_i|| = 1$, $\forall i$, and a tight unit-norm frame satisfies $\M \M^* = \frac{n}{m} \textbf{I} \in \C^{m \times m}$. We will typically restrict our attention to unit-norm frames. 

Of particular interest in frame design are the magnitudes of the inner products between distinct frame elements, $| \langle f_i, f_j \rangle |$, $i \ne j$. A unit-norm frame is called \textit{equiangular} if all of these magnitudes are equal: $| \langle f_i, f_j \rangle | = \alpha$, $\forall i \ne j$, for some constant $\alpha$. In general, we would like all of the inner product magnitudes to be as small as possible so that the frame vectors are well-spaced about the $m$-dimensional unit sphere. Such frames popularly have applications to coding theory (for instance, spherical codes \cite{Delsarte, Heath} and LDPC codes \cite{Fan}) and compressive sensing \cite{Donoho, Huo, Candes, Romberg, Tao, Tropp07}. They also arise in areas such as quantum measurements \cite{Eldar, Scott, RenesBlumeKohoutScottCaves} and MIMO communications \cite{Bolcskei, Pailraj}. Recently, frame theory itself has proven to be an exciting field, and has been notably studied by Casazza, Kutyniok, Fickus, Dixon, and others \cite{CasazzaKutyniokFickusMixon, FickusMixonTremain, CasazzaArt, CasazzaKutyniok2, CasazzaKovacevic, CasazzaKutyniokLi}. 

On this note, let us formalize our particular problem: We would like to minimize the \textit{coherence} $\mu$ of a unit-norm frame defined as the largest inner product magnitude between two distinct frame elements: 
\begin{align} 
\mu := \max_{i \ne j} |\langle f_i, f_j \rangle |. 
\end{align} 
The following is a classical lower bound on the coherence due to Welch \cite{Welch}: 
\begin{theorem} 
\label{thm:framebound} 
Let $\{f_i\}_{i = 1}^n$ be a unit-norm frame in $\C^m$ or $\R^m$. The coherence $\mu := \max_{i \ne j} |\langle f_i, f_j \rangle |$ satisfies 
\begin{align} 
\mu \ge \sqrt{\frac{n-m}{m(n-1)}}, \label{eqn:welchbound} 
\end{align} 
with equality if and only $\{f_i\}$ is both tight and equiangular. In this case, the frame is called \textit{Grassmanian}. 
\end{theorem}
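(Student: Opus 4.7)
The plan is to use the Gram matrix $G = \M^* \M \in \C^{n \times n}$, whose $(i,j)$ entry is $\langle f_j, f_i \rangle$, and exploit the fact that $G$ has rank at most $m$ while $\text{Tr}(G) = n$. I would begin by computing $\text{Tr}(G^2) = \sum_{i,j} |\langle f_i, f_j \rangle|^2$ two ways.

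First, I would bound $\text{Tr}(G^2)$ from below. Since $G$ is positive semidefinite with rank at most $m$, it has at most $m$ nonzero (real, nonnegative) eigenvalues $\lambda_1, \dots, \lambda_m$. Applying Cauchy--Schwarz to the vector of eigenvalues against the all-ones vector in $\R^m$ yields
\begin{align}
\Bigl(\sum_{k=1}^m \lambda_k \Bigr)^2 \le m \sum_{k=1}^m \lambda_k^2,
\end{align}
which translates into $\text{Tr}(G^2) \ge (\text{Tr}(G))^2 / m = n^2/m$, with equality if and only if all nonzero eigenvalues coincide, i.e.\ $\M \M^*$ is a multiple of the identity (the tight-frame condition).

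Next, I would bound $\text{Tr}(G^2)$ from above using the coherence. Separating diagonal and off-diagonal terms gives
\begin{align}
\text{Tr}(G^2) = \sum_{i=1}^n \|f_i\|^4 + \sum_{i \ne j} |\langle f_i, f_j \rangle|^2 = n + \sum_{i \ne j} |\langle f_i, f_j \rangle|^2 \le n + n(n-1)\mu^2,
\end{align}
with equality if and only if every off-diagonal magnitude $|\langle f_i, f_j\rangle|$ equals $\mu$, i.e.\ the frame is equiangular. Chaining the two inequalities, $n^2/m \le n + n(n-1)\mu^2$, and solving for $\mu^2$ yields exactly the Welch bound (\ref{eqn:welchbound}). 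The equality condition follows because both inequalities must be tight simultaneously, matching the tight-and-equiangular (Grassmannian) characterization.

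The main conceptual step is recognizing that the rank constraint $\text{rank}(G) \le m$ is the essential ingredient—once that observation is in place, the rest is Cauchy--Schwarz plus bookkeeping. No genuine obstacle arises; the only subtlety is to verify the equality conditions carefully, noting that tightness of Cauchy--Schwarz on the eigenvalues is precisely the spectral statement $\M \M^* = (n/m)\textbf{I}$ used to define a unit-norm tight frame in the excerpt.
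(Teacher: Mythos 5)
Your proof is correct and is precisely the standard Gram-matrix eigenvalue argument that the paper itself defers to (it cites Welch and the treatment in Strohmer--Heath rather than writing out the details): bounding $\mathrm{Tr}(G^2)$ below via Cauchy--Schwarz on the at most $m$ nonzero eigenvalues and above via the coherence, then chaining. The only cosmetic point is that equality in your eigenvalue inequality forces all $m$ eigenvalues (including any zeros) to coincide, hence full rank plus equal spectrum, which is exactly the tight-frame condition $\M\M^* = (n/m)\mathbf{I}$ you state.
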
 

\begin{proof} 
The bound in (\ref{eqn:welchbound}) is one of a more general set of bounds originally derived by Welch in \cite{Welch}. This version of the theorem is typically proven (e.g. in \cite{Heath}) by considering the eigenvalues of the Gram matrix $\matr{G} := [ \langle f_i, f_j \rangle]$. 
\end{proof} 

It should be emphasized that the Welch bound is often quite difficult to achieve with equality, and in fact for certain values of $n$ and $m$ it is impossible to achieve it. So in general, we would like to construct frames which have coherence close to the Welch bound. 

If we omit the requirement that our frame be equiangular, we obtain the following relaxation of the Welch bound: 
\begin{lemma} 
Let $\{f_i\}_{i = 1}^n$ be a unit-norm frame in $\C^m$ or $\R^m$. Then the mean value of the $n(n-1)$ squared inner product norms $\{|\langle f_i, f_j \rangle |^2 \}_{i \ne j}$ satisfies 
\begin{align} 
\frac{1}{n(n-1)} \sum_{i \ne j} |\langle f_i, f_j \rangle |^2 \ge \frac{n-m}{m(n-1)}, \label{eqn:tightframevareqn} 
\end{align} 
with equality if and only if $\{f_i\}$ is a tight frame. 
\label{lem:tightframevar} 
\end{lemma}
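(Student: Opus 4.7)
The plan is to work with the Gram matrix of the frame and exploit the identity between traces of $\matr{M}\matr{M}^*$ and $\matr{M}^*\matr{M}$, then apply the power-mean (Cauchy--Schwarz) inequality to the eigenvalues of the frame operator. First, I would arrange the frame elements as columns of an $m \times n$ matrix $\matr{M} = [f_1, \ldots, f_n]$ and consider both the $m \times m$ frame operator $\matr{M}\matr{M}^*$ and the $n \times n$ Gram matrix $\matr{G} = \matr{M}^*\matr{M}$, whose off-diagonal entries are the inner products $\langle f_i, f_j\rangle$ and whose diagonal entries are all equal to $1$ by unit-normality.

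The key observation is that $\matr{M}\matr{M}^*$ and $\matr{M}^*\matr{M}$ share the same nonzero eigenvalues. Letting $\lambda_1, \ldots, \lambda_m$ denote the eigenvalues of the frame operator, I would compute two traces. The first trace gives
\begin{align}
\sum_{k=1}^m \lambda_k = \Tr{\matr{M}\matr{M}^*} = \Tr{\matr{M}^*\matr{M}} = \sum_{i=1}^n \|f_i\|^2 = n.
\end{align}
The second trace is the Frobenius norm of $\matr{G}$, which separates neatly into the diagonal and off-diagonal contributions:
\begin{align}
\sum_{k=1}^m \lambda_k^2 = \Tr{(\matr{M}\matr{M}^*)^2} = \Tr{\matr{G}^2} = \sum_{i,j} |\langle f_i, f_j\rangle|^2 = n + \sum_{i \ne j} |\langle f_i, f_j\rangle|^2.
\end{align}

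At this point the main step is a one-line application of the Cauchy--Schwarz (or equivalently power-mean) inequality: $\sum_k \lambda_k^2 \ge \frac{1}{m}\bigl(\sum_k \lambda_k\bigr)^2 = n^2/m$. Substituting this into the previous display and rearranging yields $\sum_{i \ne j} |\langle f_i, f_j\rangle|^2 \ge n(n-m)/m$, and dividing through by $n(n-1)$ gives precisely the bound (\ref{eqn:tightframevareqn}).

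For the equality characterization, the only inequality invoked is Cauchy--Schwarz on the eigenvalues, and equality there holds if and only if all $\lambda_k$ are equal. Combined with $\sum_k \lambda_k = n$, this forces $\lambda_k = n/m$ for every $k$, i.e., $\matr{M}\matr{M}^* = \tfrac{n}{m}\matr{I}$, which is precisely the condition that the unit-norm frame be tight. I do not anticipate a serious obstacle here; the only mildly subtle point is to be careful that the identity $\Tr{\matr{M}^*\matr{M}} = \sum_k \lambda_k$ uses the equal nonzero spectra of $\matr{M}\matr{M}^*$ and $\matr{M}^*\matr{M}$ (with $\matr{M}^*\matr{M}$ having an extra $n-m$ zero eigenvalues when $n > m$, which contribute nothing to the trace or the sum of squares).
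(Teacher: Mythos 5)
Your proof is correct and is essentially the standard argument behind the result the paper cites: the paper simply defers to the frame potential bound of Benedetto and Fickus (Theorem 6.2 of that work), whose proof is exactly your trace computation $\Tr{(\matr{M}\matr{M}^*)^2} = \sum_{i,j}|\langle f_i,f_j\rangle|^2$ combined with Cauchy--Schwarz on the eigenvalues of the frame operator. The equality analysis forcing $\matr{M}\matr{M}^* = \tfrac{n}{m}\matr{I}$ is also handled correctly, so there is nothing to fix.
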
 

\begin{proof} 
The quantity $\frac{1}{n(n-1)} \sum_{i \ne j} |\langle f_i, f_j \rangle |^2$ is very closely related to the \textit{frame potential} defined in \cite{BenedettoFickus}, and (\ref{eqn:tightframevareqn}) follows from Theorem 6.2 in that work. 
\end{proof} 

Lemma \ref{lem:tightframevar} allows us to obtain upper bounds on the coherence of tight frames, which become particularly effective when there are few distinct inner product values $\{|\langle f_i, f_j \rangle |\}_{i \ne j}$, with each value arising the same number of times in this set. 

\begin{lemma} 

Let $\{f_i\}_{i = 1}^n$ be a unit-norm tight frame in $\C^m$ or $\R^m$, such that the inner product norms $\{|\langle f_i, f_j \rangle |\}_{i \ne j}$ take on $\kappa$ distinct values, with each value arising the same number of times as such an inner product norm. Then the coherence $\mu$ of the frame is at most a factor of $\sqrt{\kappa}$ greater than the Welch bound: 
\begin{align} 
\mu & \le \sqrt{\kappa} \sqrt{\frac{n-m}{m(n-1)}}. \label{eqn:tightframecohbound} 
\end{align} 

\label{lem:tightframecohbound} 
\end{lemma}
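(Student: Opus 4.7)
The plan is to combine the tight-frame equality in Lemma \ref{lem:tightframevar} with the assumption that the $n(n-1)$ squared inner products $|\langle f_i, f_j\rangle|^2$, $i\ne j$, take only $\kappa$ distinct values with equal multiplicities. Since each distinct value appears the same number of times, the multiplicity of each must be $n(n-1)/\kappa$. Let $\alpha_1,\dots,\alpha_\kappa$ denote the distinct inner product magnitudes.

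First I would rewrite the average inner product norm squared from Lemma \ref{lem:tightframevar} in terms of these distinct values. Because the frame is tight, equality holds in (\ref{eqn:tightframevareqn}), so
\begin{align}
\frac{n-m}{m(n-1)} \;=\; \frac{1}{n(n-1)} \sum_{i\ne j} |\langle f_i,f_j\rangle|^2 \;=\; \frac{1}{n(n-1)} \cdot \frac{n(n-1)}{\kappa} \sum_{k=1}^{\kappa} \alpha_k^2 \;=\; \frac{1}{\kappa}\sum_{k=1}^{\kappa}\alpha_k^2.
\end{align}
Rearranging gives $\sum_{k=1}^\kappa \alpha_k^2 = \kappa\cdot\frac{n-m}{m(n-1)}$.

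Next I would bound the coherence using the fact that the maximum squared value is no larger than the sum of all squared values, i.e.\ $\mu^2 = \max_k \alpha_k^2 \le \sum_{k=1}^\kappa \alpha_k^2$. Substituting the identity from the previous step yields $\mu^2 \le \kappa\cdot\frac{n-m}{m(n-1)}$, and taking square roots gives (\ref{eqn:tightframecohbound}).

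I do not anticipate any serious obstacle: the argument is essentially a counting identity combined with the trivial inequality $\max_k \alpha_k^2 \le \sum_k \alpha_k^2$. The only point requiring care is the bookkeeping of multiplicities---one must observe that equal multiplicities together with $\kappa$ distinct values force each multiplicity to equal $n(n-1)/\kappa$, which in particular requires $\kappa$ to divide $n(n-1)$. This is implicit in the hypothesis and plays no further role in the estimate itself.
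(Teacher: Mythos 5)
Your proposal is correct and follows essentially the same route as the paper: both use the tight-frame equality from Lemma \ref{lem:tightframevar} together with the equal-multiplicity hypothesis to show the mean of the $\kappa$ distinct squared values is $\frac{n-m}{m(n-1)}$, and then bound the maximum by the sum. The only difference is presentational (you make the multiplicity $n(n-1)/\kappa$ explicit), so nothing further is needed.
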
 

\begin{proof} 
Consider the squares of the $\kappa$ distinct inner product values. Since each of these squares arises the same number of times as a squared norm $|\langle f_i, f_j \rangle|^2$, we have that the average $\frac{1}{n(n-1)}\sum_{i \ne j} |\langle f_i, f_j \rangle |^2$ is equal to the mean of the $\kappa$ squares. By Lemma \ref{lem:tightframevar}, this mean is equal to $\frac{n-m}{m(n-1)}$. This means that none of the squares can be larger than $\kappa \frac{n-m}{m(n-1)}$, so the coherence is bounded by $\sqrt{\kappa} \sqrt{\frac{n-m}{m(n-1)}}$. 
\end{proof} 

When $\kappa = 1$ in Lemma \ref{lem:tightframecohbound}, the frame becomes both tight and equiangular, and fittingly the coherence in (\ref{eqn:tightframecohbound}) achieves the Welch bound. In some of our previous work \cite{ThillAllerton}, we provided constructions of unit-norm tight frames in which we could control the number of distinct inner product values and ensure that each would arise with the same multiplicity. 

In what follows, we will generalize our original construction using a framework which also encompasses some of our subsequent results on frames arising from representations of various groups \cite{Thill_ICASSP2014, Thill_ICASSP2013, Thill_ISIT2013}. Frames constructed in this fashion are called ``group frames,'' and will be reviewed in Section \ref{sec:groupframes}. In Section \ref{sec:groupfouriermat}, we will discuss the connection between tight group frames and generalized group Fourier matrices. The problem of designing tight group frames with low coherence is essentially that of choosing a subset of rows of one of these matrices to form a matrix with low coherence.  Viewing the problem in this manner allows us to use character theory to design our frames, and leads us to a frame construction that we present in Section \ref{sec:choosingautomorphisms} along with bounds on the resulting frame coherence. We explore how this construction performs when using representations of several different types of groups, including general linear groups, vector spaces over finite fields, and special linear groups. We will present methods to maintain low alphabet sizes in our frame entries, sometimes producing frames which can be realized as a subset of rows of a Hadamard matrix (see Section \ref{sec:HadamardFrames}). In certain cases, we end up with frames arising from difference sets over finite fields, so in some sense we give insight into the frames described in \cite{Giannakis} and \cite{Ding}. Finally, in Section \ref{sec:strongcoherence} we examine when these frames satisfy the Coherence Property and the Strong Coherence Property described by \cite{BajwaCalderbankJafarpour} and \cite{MixonBajwaCalderbank}. In particular, we utilize the group structure of our frames to calculate their \textit{average coherence} explicitly.

\section{Group Frames} 
\label{sec:groupframes} 

One technique first used by Slepian \cite{Slepian} in constructing his ``group codes'' to reduce the number of inner product values between vectors involves taking the images of a unit-norm vector $\vv \in \C^{m \times 1}$ under a finite multiplicative group of unitary matrices $\mathcal{U} = \{\U_1, ..., \U_n\} \subset \C^{m \times m}$. The resulting set of vectors $\{\U_1 \vv, ..., \U_n \vv\}$ has mutual inner products in the form 
\begin{align} 
\vv^* \U_i^* \U_j \vv &= \vv^* \U_i^{-1} \U_j \vv = \vv \U_k \vv, \label{eqn:mutualinnerproducts} 
\end{align} 
where we have exploited the fact that $\U_i^{-1} \U_j$ is some other element $\U_k$ of $\mathcal{U}$. This immediately reduces the number of inner product values that can arise from a possible $n \choose 2$ to only at most $n-1$, corresponding to each of the nonidentity group elements of $\mathcal{U}$. (The identity element $\U_1$ is simply the identity matrix $\matr{I} \in \C^{m \times m}$. The corresponding inner product $\vv^* \vv = 1$ arises only when taking the inner product of a vector $\U_i \vv$ with itself, and is thus not considered in calculating the coherence of the frame.) 

One can quickly verify that each of the values $\vv^* \U_k \vv$ arises the same number of times as the inner product between two vectors $\U_i \vv$ and $\U_j \vv$ (since for each $\U_i$ there is exactly one element $\U_j$ that satisfies $\U_i^* \U_j = \U_k$, and it is given by $\U_j := \U_i \U_k$). As a result, the group codes naturally lend themselves to analysis using Lemmas \ref{lem:tightframevar} and \ref{lem:tightframecohbound}. 

Frames arising in this form are called ``group frames,'' and are well-studied in the works of Vale, Waldron, and others \cite{Vale, ValeWaldron2, ValeWaldron3, Waldron1, ChienWaldron, HayWaldron}. A great review of group frames is provided in \cite{CasazzaKutyniok}. 

When $\mathcal{U}$ is chosen to be abelian, so that all the $\U_i$ commute with each other, then the matrices can be simultaneously diagonalized by a unitary change of basis matrix $\matr{B}$ so that we may write $\matr{B}^* \U_i \matr{B} = \matr{D}_i$, where $\matr{D}_i$ is diagonal. In this case the inner product corresponding to $\U_i$ will take the form $$\vv^*\U_i \vv = \vv^* \matr{B}^* \matr{D}_i \matr{B} \vv, $$ so by replacing $\vv$ with $\matr{B}^* \vv$ without loss of generality, we may assume that the $\U_i$ are already diagonal. Furthermore, since each $\U_i$ must have a multiplicative order dividing the size of $\mathcal{U}$, we may take the diagonal entries of $\U_i$ to be powers of the $n^{th}$-root of unity $\omega := e^\frac{2 \pi i}{n}$. The matrices will then take the form $$\U_j = \text{diag}(\omega^{a_{1, j}}, ..., \omega^{a_{m, j}}) \in \C^{m \times m}, $$ where the $a_{i, j}$ are integers between $0$ and $n-1$. In the language of representation theory, we have decomposed $\mathcal{U}$ into its degree-1 irreducible representations. 

If we write the coordinates of our rotated vector as $\vv = (v_1, ..., v_m)^T \in \C^{m \times 1}$, then our inner products will now take the form 
\begin{align} 
\vv^* \U_j \vv & = \sum_{i = 1}^m \omega^{a_{i, j}} |v_i|^2, \label{eqn:abelianinnerproducts}  
\end{align} 
so we see that the inner products depend only on the magnitudes of the $v_i$, which weight the diagonal entries of the $\U_j$. A natural choice is therefore to choose all the $v_i$ to be real and of equal magnitude (i.e., take $\vv := \frac{1}{\sqrt{m}} [1, ..., 1]^T \in \C^{m \times 1}$). When we do this, and in addition require the sets of diagonal components $\{\omega^{a_{i, j}}\}_{i = 1}^m$ to form distinct representations of $\mathcal{U}$, we obtain what is called a \textit{harmonic frame}, which we will define concretely as follows: 

\begin{definition} 
Let $m$ and $n$ be integers, $\omega = e^\frac{2 \pi i}{n}$, and $\U_j = \text{diag}(\omega^{a_{1, j}}, ..., \omega^{a_{m, j}}) \in \C^{m \times m}$ for $j = 1, ..., n$, where the $a_{i, j}$ are integers between $0$ and $n-1$. If we set $\vv = \frac{1}{\sqrt{m}} [1, ..., 1]^T \in \C^{m \times 1}$, and $\M = [\U_1 \vv, ..., \U_n \vv]$, then if the rows of $\M$ are distinct, we call the set of columns $\{\U_j \vv\}_{j = 1}^n$ a \textit{harmonic frame}. 
\end{definition} 

Harmonic frames are one of the most thoroughly-studied types of structured frames \cite{ChienWaldron, HayWaldron}. As a preliminary result, we have the following: 

\begin{lemma} 
A harmonic frame is a tight, unit-norm frame. 
\end{lemma}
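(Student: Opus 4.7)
The plan is to verify the two defining properties of a unit-norm tight frame separately, both by direct computation.

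For the unit-norm property, I would compute $\|\U_j \vv\|^2 = \vv^* \U_j^* \U_j \vv$. Since $\U_j$ is diagonal with entries of unit modulus (being $n$-th roots of unity), we have $\U_j^* \U_j = \matr{I}$, and hence $\|\U_j \vv\|^2 = \|\vv\|^2 = \frac{1}{m} \sum_{i = 1}^m 1 = 1$.

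For the tightness property, I would expand
\[
\M \M^* \;=\; \sum_{j = 1}^n (\U_j \vv)(\U_j \vv)^* \;=\; \sum_{j = 1}^n \U_j \, \vv \vv^* \, \U_j^*.
\]
Since $\vv \vv^* = \frac{1}{m} \matr{J}$, where $\matr{J}$ is the $m \times m$ all-ones matrix, a direct calculation shows that the $(i,i')$ entry of the sum is
\[
(\M \M^*)_{i, i'} \;=\; \frac{1}{m} \sum_{j = 1}^n \omega^{a_{i, j} - a_{i', j}}.
\]
For $i = i'$ each summand equals $1$, yielding $\frac{n}{m}$ on the diagonal, which is the correct tight-frame constant for a unit-norm frame. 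For $i \ne i'$, I would argue that the sum vanishes by invoking character theory: since $\mathcal{U} = \{\U_1, \ldots, \U_n\}$ is a finite abelian group consisting of diagonal matrices, the map $\chi_i : \U_j \mapsto \omega^{a_{i, j}}$ is a group homomorphism $\mathcal{U} \to \C^*$ (the $i$-th diagonal entry of a product of diagonal matrices is the product of their $i$-th diagonal entries), i.e., a character of $\mathcal{U}$. The hypothesis that the rows of $\M$ are distinct is exactly the statement that $\chi_i \ne \chi_{i'}$, and the sum above is precisely $\sum_j \chi_i(\U_j)\,\overline{\chi_{i'}(\U_j)}$, which equals zero by the Schur orthogonality relations for distinct characters of a finite abelian group.

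The only real subtlety, more than a genuine obstacle, is recognizing that the rows of $\M$ should be interpreted as characters of the underlying abelian group $\mathcal{U}$; once this identification is in place, the off-diagonal cancellation in $\M \M^*$ is immediate, and the lemma follows in a handful of lines.
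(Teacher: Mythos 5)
Your proof is correct. The paper itself does not carry out this argument: it notes that the unit-norm property is immediate and defers tightness to a citation of \cite{CasazzaKutyniok} and to the group-Fourier-matrix discussion of Section \ref{sec:groupfouriermat}, where $\mathcal{F}$ is shown to be unitary via Schur orthogonality and tight $G$-frames are identified with row-subsets of $\mathcal{F}$. Your computation is exactly that deferred argument specialized to the abelian case, made self-contained: identifying each row of $\M$ with a character $\chi_i$ of $\mathcal{U}$, observing that distinctness of rows is distinctness of characters, and killing the off-diagonal entries of $\M\M^*$ by orthogonality of characters (equivalently, the vanishing of $\sum_j (\chi_i\overline{\chi_{i'}})(\U_j)$ for the nontrivial character $\chi_i\overline{\chi_{i'}}$). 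The one point worth making explicit is that this step uses the fact that $\{\U_1,\dots,\U_n\}$ is a finite abelian \emph{group} of diagonal unitaries --- Definition 1 as literally written does not restate this, but it is the standing assumption of the surrounding discussion, and without it the exponential sum $\sum_j \omega^{a_{i,j}-a_{i',j}}$ need not vanish. With that hypothesis in place your argument is complete and arguably more transparent than the citation the paper offers.
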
 

\begin{proof} 
The fact that harmonic frames are unit-norm follows straight from the definition. The rest of this lemma is proven in \cite{CasazzaKutyniok}, and we will also explain the tightness of harmonic frames in Section \ref{sec:groupfouriermat}. 
\end{proof}

An important example of a harmonic frame arises when we choose the group $\mathcal{U}$ to be cyclic, meaning that each $\U_j$ is a power of a single matrix $\U = \text{diag}(\omega^{a_1}, ..., \omega^{a_m})$, so we may write $\U_j := \U^j$. In this case, if we again take $\vv$ to be the normalized vector of all 1s, our frame matrix takes the form 
\begin{align} 
\M & = \begin{bmatrix}  \vv & \U \vv & \hdots & \U^{n-1} \vv \end{bmatrix} \\  
 & =  \frac{1}{\sqrt{m}} \begin{bmatrix} 1 &  \omega^{a_{1}} & \omega^{a_{1}\cdot 2} & \hdots & \omega^{a_{1}\cdot (n-1)}\\ 
						1 & \omega^{a_{2}} & \omega^{a_{2} \cdot 2} & \hdots & \omega^{a_{2}\cdot (n-1)}\\ 
						\vdots & \vdots & \vdots & \ddots & \vdots \\ 
						1& \omega^{a_{m}} & \omega^{a_{m} \cdot 2} & \hdots & \omega^{a_{m} \cdot (n-1)} \end{bmatrix},  
\label{eqn:Meqn} 
\end{align} 
where the columns form a harmonic frame precisely when the $a_i$ are distinct. In this form, we see that $\M$ is a subset of rows of the $n \times n$ discrete Fourier matrix, so it becomes clear that the frame is tight since $\M \M^* = \frac{n}{m} \matr{I} \in \C^{m \times m}$. The question now becomes how to choose the frequencies $a_i$ to produce frames with low coherence?

In our previous work \cite{ThillAllerton, Thill_ICASSP2013}, we developed a method to obtain a range of frames with few distinct inner product values when $\mathcal{U}$ is a cyclic group and $\vv$ is the normalized vector of all 1s. We presented resulting upper bounds on the coherence of our frames which came reasonably close to the Welch bound. Our method was simple: choose $n$ to be a prime so that the set of integers $\{1, 2, ..., n-1\}$ forms a cyclic group under multiplication modulo $n$, denoted $(\Z/n\Z)^\times$. Then choose $m$ to be a divisor of $n-1$. Since $(\Z/n\Z)^\times$ is cyclic, it has a unique cyclic subgroup $a$ of size $m$. We choose the $a_i$ to be the integer elements of this subgroup. 

\begin{theorem} 
Let $n$ be a prime, $m$ a divisor of $n-1$, and $A = \{a_i\}_{i = 1}^m$ the unique subgroup of $(\Z/n\Z)^\times$ of size $m$. Set $\omega = e^{\frac{2 \pi i}{n}}$, $\vect{v} = \frac{1}{\sqrt{m}}[1, ..., 1]^T \in \R^{m}$, and $\matr{U} = \text{diag}(\omega^{a_1}, ..., \omega^{a_m})$. Then the inner products between the vectors $\{\U^\ell \vv\}_{\ell = 0}^{n-1}$ take on only $\frac{n-1}{m}$ values (possibly non-unique), each value occurring with the same number of times as one  of these mutual inner products. 
\label{thm:cosetinnerproducts} 
\end{theorem}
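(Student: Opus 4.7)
The plan is to reduce everything to the coset structure of $A$ inside the cyclic group $(\Z/n\Z)^\times$. Starting from equation (\ref{eqn:abelianinnerproducts}) with $|v_i|^2 = \frac{1}{m}$, the inner product between $\U^j\vv$ and $\U^k\vv$ satisfies
\begin{align}
(\U^j \vv)^* (\U^k \vv) \;=\; \vv^* \U^{k-j} \vv \;=\; \frac{1}{m}\sum_{i=1}^m \omega^{a_i (k-j)},
\end{align}
so the value depends only on $\ell := k - j \bmod n$. For $j \ne k$ we have $\ell \in \{1,\dots,n-1\} = (\Z/n\Z)^\times$, so the task is to understand the function $\ell \mapsto S(\ell) := \sum_{i=1}^m \omega^{a_i \ell}$ restricted to $(\Z/n\Z)^\times$.

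The next step is the key structural observation: since $A$ is a subgroup of the abelian group $(\Z/n\Z)^\times$, the set $\ell A = \{\ell a_1, \dots, \ell a_m\}$ is precisely the coset of $A$ containing $\ell$, and one can rewrite $S(\ell) = \sum_{b \in \ell A} \omega^b$. Consequently, if $\ell' \in \ell A$ then $\ell' A = \ell A$ and $S(\ell) = S(\ell')$. Lagrange's theorem gives $|(\Z/n\Z)^\times / A| = (n-1)/m$ cosets, each of size $m$, so the map $\ell \mapsto S(\ell)$ takes at most $(n-1)/m$ distinct values on $(\Z/n\Z)^\times$, and each such value is shared by exactly $m$ values of $\ell$.

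Finally I would count multiplicities as mutual inner products of the $n$ frame vectors. For each fixed $\ell \in \{1,\dots,n-1\}$, the ordered pairs $(j,k)$ with $j,k \in \{0,\dots,n-1\}$ and $k-j \equiv \ell \pmod n$ number exactly $n$ (one for each $j$, with $k := j+\ell \bmod n$). Hence each value $S(\ell)$ arises as an inner product exactly $n \cdot m$ times, and summing over the $(n-1)/m$ distinct values recovers the total of $n(n-1)$ ordered pairs of distinct frame elements, as a consistency check. This establishes both claims of the theorem.

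There is no real obstacle here; the entire argument is a matter of recognizing the coset partition and doing a clean double count. The only place to be careful is to note that the $(n-1)/m$ values are declared \emph{possibly non-unique}, since distinct cosets can in principle produce the same character sum $S(\ell)$ (e.g.\ via Galois conjugacy of the $\omega^b$); the theorem only requires uniform multiplicity, which follows from the fact that cosets of $A$ all have the same size $m$.
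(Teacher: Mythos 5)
Your proposal is correct and follows essentially the same route as the paper: the inner product depends only on $\ell = k-j$, the sum $S(\ell)$ is constant on cosets of $A$ in $(\Z/n\Z)^\times$, and the uniform multiplicity follows from all cosets having size $m$ together with the count of $n$ ordered pairs per difference $\ell$. The paper only sketches this (deferring details to a prior reference), so your explicit double count and your remark on possibly coinciding coset sums are welcome but not a departure in method.
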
 

\begin{proof} 
This proof appears in \cite{Thill_GroupMatrixCoherence}. The idea is that the inner product corresponding to the element $\U^\ell$ in (\ref{eqn:abelianinnerproducts}) is 
\begin{align} 
\vv^* \U^\ell \vv &= \sum_{i = 1}^n \omega^{\ell a_i}, \label{eqn:ellAsum} 
\end{align} 
and we can see that this sum depends only on the \textit{coset} of $A$ in which $\ell$ lies in the group $(\Z/n\Z)^\times$. There is only one inner product value for each of the $\frac{n-1}{m}$ cosets, and it is not too difficult to see that each of these values arises the same number of times as an inner product $\vv^* \U^{-\ell_1} \U^{\ell_2} \vv$ between two frame vectors $\U^{\ell_1} \vv$ and $\U^{\ell_2} \vv$. 
\end{proof}

\begin{figure}
\begin{center}
 \includegraphics[height = 230pt, width=230pt]{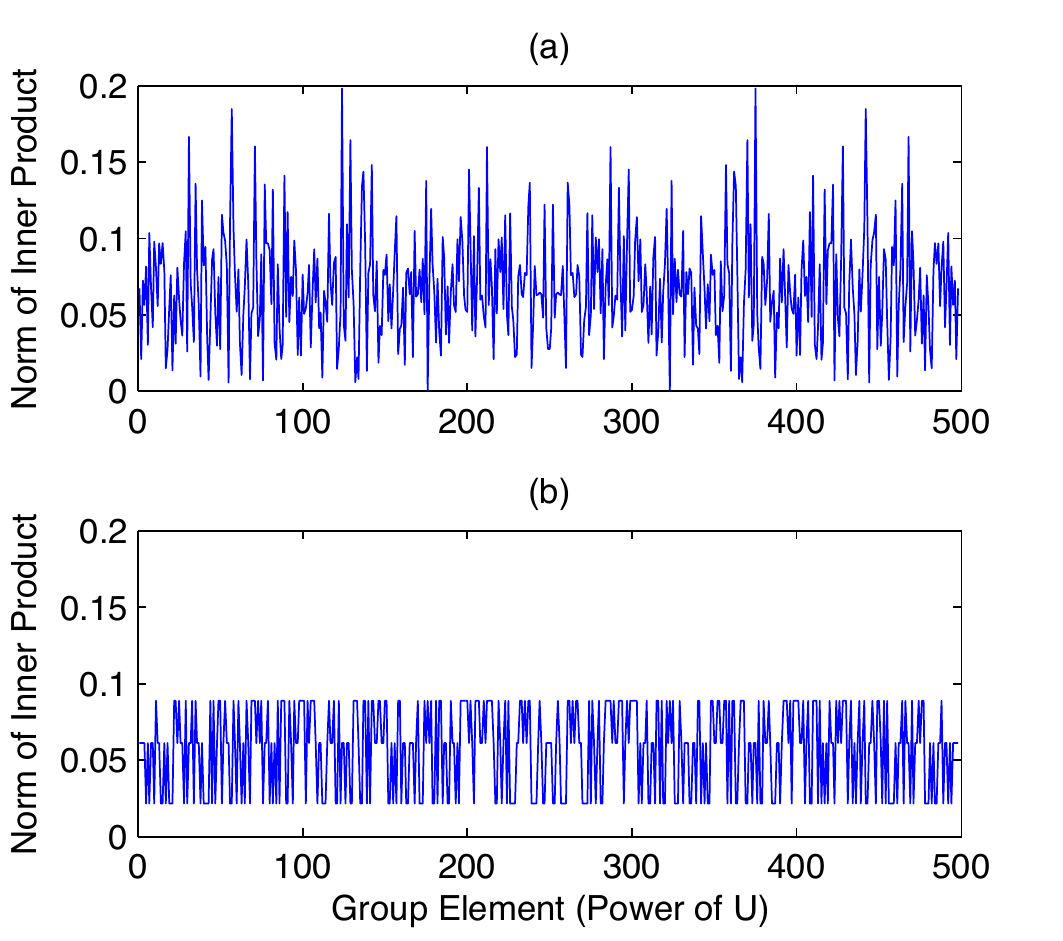}
 \caption{The norms of the inner products between the frame elements of $\M$ in (\ref{eqn:Meqn}). Here, $n = 499$ and $m = 166$. There is one inner product for each element $U^\ell$ of the cyclic group $\langle U \rangle \cong (\Z/n\Z)^\times$, where $n$ is prime. (a) The exponents $A = \{a_i\}_{i = 1}^m$ in (\ref{eqn:Meqn}) are chosen randomly. (b) $A$ is chosen to be the unique subgroup of $(\Z/n\Z)^\times$ of index 3. In this case, there are only 3 distinct inner product norms---one for each coset of $A$---each arising the same number of times. This leads to lower coherence. }
 \label{fig:innerprod_499_166}
\end{center}
\end{figure}

Figure \ref{fig:innerprod_499_166} depicts how choosing the exponents in (\ref{eqn:Meqn}) according to Theorem \ref{thm:cosetinnerproducts} reduces the number of distinct inner product values and yields low coherence frames. This theorem allows us to bound the coherence of our frames using Lemma \ref{lem:tightframecohbound}, but it turns out that we can achieve even tighter bounds by exploiting the algebraic structure employed in constructing our frames. In \cite{Thill_ICASSP2013}, we presented the following bounds: 

\begin{theorem}[General $\kappa$] 
Let $n$ be a prime, $m$ a divisor of $n-1$, and $\omega = e^{\frac{2 \pi i}{n}}$. Let $A = \{a_1, ..., a_m\}$ be the unique subgroup of $(\Z/n\Z)^\times$ of size $m$, and set $\U = diag(\omega^{a_1}, ..., \omega^{a_m}) \in \C^{m \times m}$, $\vv = \frac{1}{\sqrt{m}} [1, ..., 1]^T \in \C^{m \times 1}$, and $\matr{M} = [\vv, \U \vv, ..., \U^{n-1} \vv]$. 

If $\kappa := \frac{n-1}{m}$, then the coherence $\mu$ of $\matr{M}$ satisfies the following upper bound: 
\begin{align} 
\mu & \le \frac{1}{\kappa} \left( (\kappa-1) \sqrt{\frac{1}{m} \left(\kappa + \frac{1}{m}\right)} + \frac{1}{m} \right). 
\end{align} 
\label{thm:coherenceupperbound} 
\end{theorem}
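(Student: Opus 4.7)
The plan is to exploit two global identities satisfied by the coset sums and then optimize. By (\ref{eqn:ellAsum}) and $\|\vv\|=1$, each inner product $\vv^*\U^\ell\vv$ equals $\frac{1}{m}\eta_\ell$, where $\eta_\ell := \sum_{i=1}^m \omega^{\ell a_i}$ depends only on the coset $\ell A$ in $(\Z/n\Z)^\times$. Labeling the $\kappa$ distinct coset sums $\eta_1,\dots,\eta_\kappa$, we have $\mu = \frac{1}{m}\max_j|\eta_j|$, so it suffices to bound $\max_j|\eta_j|$.

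First, I would extract two constraints on the coset sums. Summing over all nonzero $\ell$ and using $\sum_{k=0}^{n-1}\omega^k = 0$ gives $\sum_{j=1}^\kappa \eta_j = -1$. Next, because each coset sum appears as $\eta_\ell$ for exactly $m$ values of $\ell$, the tight-frame identity of Lemma~\ref{lem:tightframevar} (with each value $\vv^*\U^\ell\vv$ arising $n$ times as a frame inner product) yields $\sum_{j=1}^\kappa |\eta_j|^2 = n - m$.

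To bound a single coset sum, I would isolate $x := \eta_1$ and apply Cauchy--Schwarz to the remaining $\kappa-1$ values: $|1+x|^2 = \bigl|\sum_{j\ge 2}\eta_j\bigr|^2 \le (\kappa-1)\sum_{j\ge 2}|\eta_j|^2 = (\kappa-1)(n-m-|x|^2)$. Setting $r := |x|$ and combining with the reverse triangle inequality $|1+x|^2 \ge (1-r)^2$ (attained when $x$ is real and negative) collapses this to the pure quadratic inequality $\kappa r^2 - 2r + 1 \le (\kappa-1)(n-m)$.

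Solving this quadratic and using the identity $\kappa(n-m) - 1 = (\kappa-1)n$, which follows from $n = m\kappa + 1$, reduces the discriminant to $(\kappa-1)^2 n$ and gives the positive root $r \le \frac{1 + (\kappa-1)\sqrt n}{\kappa}$. Dividing by $m$ and noting that $\sqrt n /m = \sqrt{(\kappa + 1/m)/m}$ then reproduces the stated bound exactly. The main subtlety is the phase step: one must use the reverse triangle inequality rather than simply drop the $|1+x|^2$ term, since the weaker replacement $|1+x|^2 \ge 0$ recovers only the Lemma~\ref{lem:tightframecohbound} bound of order $\sqrt{\kappa}$ above the Welch bound, whereas exploiting the $(1-r)^2$ lower bound uses the linear constraint $\sum \eta_j = -1$ and yields the sharper $\frac{1}{\kappa}$ prefactor. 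All remaining steps are routine algebra once the correct phase extremum is identified.
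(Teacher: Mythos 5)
Your proof is correct and arrives at exactly the stated bound, but by a genuinely different and more elementary route than the paper's. The paper (Appendix \ref{sec:generalkappa}) passes to the discrete Fourier transform $\w = F\cc$ of the vector of normalized coset sums and, via the translation-degree computation of Lemma \ref{lem:wvec}, determines the \emph{individual} magnitudes exactly: $w_1 = -\tfrac{1}{m}$ and $|w_i| = \sqrt{\tfrac{1}{m}(\kappa+\tfrac{1}{m})}$ for $i\ne 1$ (these are Gauss sums of the multiplicative characters trivial on $A$); the bound then follows from the triangle inequality applied to $\cc = \tfrac{1}{\kappa}F^*\w$. You instead use only two aggregate identities on the coset sums themselves: the linear one $\sum_j \eta_j = -1$ (the paper's Lemma \ref{lem:csumto0}) and the quadratic one $\sum_j |\eta_j|^2 = n-m$, which you correctly extract from tightness via Lemma \ref{lem:tightframevar} (each coset value occurs $nm$ times among the $n(n-1)$ ordered pairs), and then combine Cauchy--Schwarz with the reverse triangle inequality and solve the resulting quadratic in $r=|\eta_1|$; the simplification $\kappa(n-m)-1=(\kappa-1)n$ is right and the algebra checks out. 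The two bounds coincide exactly because Cauchy--Schwarz on $\kappa-1$ quantities of equal modulus is tight, and by Parseval your second-moment identity is equivalent to $\sum_{i\ne 1}|w_i|^2 = (\kappa-1)n/m^2$. What your argument buys is brevity and self-containedness --- it bypasses the translation-degree machinery entirely and transfers verbatim to the $\F_{p^r}$ setting of Theorem \ref{thm:groupframebound}, since both moment identities hold there. What it gives up is the exact knowledge of each $|w_i|$ and the conjugate symmetry among the $w_i$, which is precisely what the paper exploits to beat the plain triangle inequality and obtain the sharper bound of Theorem \ref{thm:coherenceupperboundmodd} when $m$ is odd; your method does not obviously extend to that refinement. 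One cosmetic point: you should take $\eta_1$ to be the coset sum of largest modulus (or remark that the argument is symmetric in the $\eta_j$), but this is immediate.
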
 

\begin{proof} 
We will actually realize this result in a slightly more general context, and will prove it in Appendix \ref{sec:generalkappa}. 
\end{proof} 

Interestingly, we can find an even tighter bound for the coherence of our frames in the case when $m$ is odd: 

\begin{theorem}[$m$ odd] 
Let $n$ be an odd prime, $m$ a divisor of $n-1$, and $\omega = e^{\frac{2 \pi i}{n}}$. Let $A = \{a_1, ..., a_m\}$ be the unique subgroup of $(\Z/n\Z)^\times$ of size $m$, and set $\U = {\rm diag}(\omega^{a_1}, ..., \omega^{a_m}) \in \C^{m \times m}$, $\vv = \frac{1}{\sqrt{m}} [1, ..., 1]^T \in \C^{m \times 1}$, and $\matr{M} = [\vv, \U \vv, ..., \U^{n-1} \vv]$. Set $\kappa := \frac{n-1}{m}$. 

If $m$ is odd, then the coherence of $\matr{M}$ is upper-bounded by 
\begin{align} 
\mu & \le \frac{1}{\kappa} \sqrt{\left(\frac{1}{m} + \left(\frac{\kappa}{2} - 1 \right) \beta \right)^2 + \left(\frac{\kappa}{2}\right)^2 \beta^2},  
\end{align} 
where $\beta = \sqrt{\frac{1}{m} \left( \kappa + \frac{1}{m} \right) }$. 
\label{thm:coherenceupperboundmodd} 
\end{theorem}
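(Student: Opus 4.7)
The plan is to mimic the proof of Theorem \ref{thm:coherenceupperbound} but exploit an additional identity available only when $m$ is odd. The key observation is that $A$, being a subgroup of odd order, cannot contain the order-$2$ element $-1$, so $\kappa = (n-1)/m$ is even and the cosets of $A$ in $(\Z/n\Z)^\times$ come in conjugate pairs $\{cA, -cA\}$. Writing the coset inner product values as $T_c := \vv^* \U^\ell \vv = \frac{1}{m}\sum_{a \in A}\omega^{\ell a}$ for $\ell$ in the $c$th coset, this pairing gives $T_{-c} = \overline{T_c}$, so the $T_c$ themselves come in complex conjugate pairs.

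The new ingredient will be the identity $\sum_{c=1}^\kappa T_c^2 = -\tfrac{1}{m}$. To derive it, I expand
\begin{align}
m^2 \sum_c T_c^2 = \sum_c \sum_{b, b' \in cA} \omega^{b+b'} = \sum_{a \in A} \sum_{b \in (\Z/n\Z)^\times} \omega^{b(1+a)},
\end{align}
after substituting $b' = ba$ with $a \in A$ (the ratios of two elements lying in the same coset). Since $-1 \notin A$, every $a \in A$ has $1+a \ne 0$, so each inner sum equals $-1$, giving $m^2 \sum_c T_c^2 = -m$. Combining with the tightness identity $\sum_c |T_c|^2 = \tfrac{n-m}{m^2}$ from Lemma \ref{lem:tightframevar} and writing $T_c = x_c + i y_c$, I separate the real and imaginary energies:
\begin{align}
\sum_c x_c^2 = \tfrac{n - 2m}{2 m^2}, \qquad \sum_c y_c^2 = \tfrac{n}{2m^2} = \tfrac{\beta^2}{2}.
\end{align}

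Next I isolate the maximum. Let $T_1 = a + b i$ be a coset sum of maximum modulus $\mu$, and choose one representative from each conjugate pair, with $T_1$ among them; the remaining $\kappa/2 - 1$ representatives then satisfy
\begin{align}
\sum_{c \ge 2} x_c = -\tfrac{1}{2m} - a, \quad \sum_{c \ge 2} x_c^2 = \tfrac{n-2m}{4 m^2} - a^2, \quad \sum_{c \ge 2} y_c^2 = \tfrac{n}{4 m^2} - b^2.
\end{align}
Cauchy--Schwarz on the first two yields the one-variable quadratic constraint $(a + \tfrac{1}{2m})^2 \le (\tfrac{\kappa}{2} - 1)(\tfrac{n - 2m}{4 m^2} - a^2)$, while nonnegativity of $\sum_{c \ge 2} y_c^2$ gives $b^2 \le \tfrac{n}{4 m^2}$. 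Hence $\mu^2 = a^2 + b^2 \le a^2 + \tfrac{n}{4 m^2}$, and the bound reduces to maximizing $|a|$ subject to the quadratic.

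The final step is to solve that quadratic. Using $n = \kappa m + 1$, the discriminant simplifies cleanly to $(\kappa - 2)^2 n / (4 m^2)$, yielding $|a| \le \tfrac{1}{\kappa}\bigl(\tfrac{1}{m} + (\tfrac{\kappa}{2} - 1)\beta\bigr)$. Substituting into $\mu^2 \le a^2 + \tfrac{n}{4 m^2}$ and observing that $\tfrac{n}{4 m^2} = \tfrac{1}{\kappa^2}(\tfrac{\kappa}{2})^2 \beta^2$ factors out $1/\kappa^2$ to produce exactly the stated bound. The main obstacle is the identity $\sum_c T_c^2 = -\tfrac{1}{m}$; in the $m$ even case $-1 \in A$ contributes $n - 1$ instead of $-1$ to the inner sum and the identity collapses, which is the algebraic reason the odd case admits a strictly sharper bound than Theorem \ref{thm:coherenceupperbound}.
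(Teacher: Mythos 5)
Your proof is correct, and it takes a genuinely different route from the paper's. The paper establishes this bound (as Theorem \ref{thm:groupframebound2ndhalf} in Appendix \ref{sec:generalkappa}) by a spectral argument: it forms the vector of the $\kappa$ coset sums, takes its discrete Fourier transform, and shows via a computation with translation degrees $N_{x_1 A, x_2 A}$ that the spectrum is flat ($w_1 = -1/m$, $|w_j| = \beta$ for $j \ne 1$, Lemma \ref{lem:wvec}); the hypothesis that $m$ is odd enters through the symmetry $w_i^* = (-1)^{i-1} w_{\kappa - i + 2}$ of Lemma \ref{lem:cstarwstar}, which forces half of the paired Fourier coefficients to contribute purely real amounts and half purely imaginary amounts to each coset value, whence the Pythagorean form of the bound. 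You instead use only three scalar moments of the coset sums $T_c$: the first moment $\sum_c T_c = -\tfrac{1}{m}$, the tightness identity $\sum_c |T_c|^2 = \tfrac{n-m}{m^2}$, and the new identity $\sum_c T_c^2 = -\tfrac{1}{m}$, whose Gauss-sum derivation is valid and is precisely where $-1 \notin A$ (i.e.\ $m$ odd) is used. These split the energy between real and imaginary parts, and Cauchy--Schwarz on the real parts of the $\tfrac{\kappa}{2}-1$ non-maximal pair representatives, plus nonnegativity on the imaginary side, reproduces the identical constant: the discriminant simplification $\kappa(n-2m)-2 = n(\kappa-2)$ and the identification $\sqrt{n}/m = \beta$ both check out, as does $\tfrac{n}{4m^2} = \tfrac{1}{\kappa^2}\bigl(\tfrac{\kappa}{2}\bigr)^2\beta^2$. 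Your route is shorter and avoids the translation-degree bookkeeping entirely, at the cost of using less information (three moments rather than the full flat spectrum); the bounds nevertheless coincide because equality in your Cauchy--Schwarz step (all non-maximal real parts equal) corresponds exactly to the worst-case phase configuration in the paper's triangle-inequality step. Your closing observation about why the identity collapses when $m$ is even is likewise consistent with the remark at the end of Appendix \ref{sec:generalkappa}.
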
 

\begin{proof} 
This is proved in Appendix \ref{sec:generalkappa}. 
\end{proof}

It turns out that we can realize our method of constructing frames from cyclic groups in a much more general context, which we will detail in the upcoming sections. While Theorems \ref{thm:coherenceupperbound} and \ref{thm:coherenceupperboundmodd} will only generalize in certain scenarios, we will always be able to use Lemma \ref{lem:tightframecohbound} to bound our frames' coherence.

\section{Tight Group Frames and the Group Fourier Matrix} 
\label{sec:groupfouriermat} 

In light of Lemmas \ref{lem:tightframevar} and \ref{lem:tightframecohbound}, we will first establish the tools we need to ensure that our group frames are tight. It turns out that the tight group frames have been completely classified \cite{Vale, CasazzaKutyniok}. On this note, we review some basics on representation theory, which the interested reader can read about in greater depth in the first few chapters of \cite{Serre}. 

Let $G$ be a group of size $n$, and recall that a complex representation of $G$ is formally defined as a complex vector space $V$ together with a function $\rho: G \to GL(V)$ such that $\rho(g g') = \rho(g) \rho(g')$, $\forall g, g' \in G$. If $V$ has dimension $d$, then $\rho(g)$ is simply a $d \times d$ invertible complex matrix---a \textit{degree} $d$ representation. Two representations $\rho_1$ and $\rho_2$ with corresponding vector spaces $V_1$ and $V_2$ are \textit{equivalent} if there is an invertible transformation $T : V_1 \to V_2$ such that $T \rho_1(g) T^{-1} = \rho_2(g) $ for all $g \in G$. A basic result in representation theory says that every representation of a finite group is equivalent to a unitary representation, in which all the $\rho(g_i)$ are unitary matrices, which is why we have used the notation $\rho(g_i) = \U_i$ in our previous discussion. We will typically assume our representations are unitary without loss of generality. 

A representation $\rho$ is \textit{reducible} if there is a nontrivial subspace $V'$ of $V$ which is mapped to itself by $\rho(g)$ for every $g \in G$. Otherwise, it is called \textit{irreducible}. As matrices, the representation is reducible if the $\rho(g)$ can be simultaneously block-diagonalized by a similarity transformation. For any finite group $G$ of size $n$, there are only a finite number of inequivalent, irreducible unitary representations. If we call them $\rho_1$, ..., $\rho_{n_r}$ with corresponding degrees $d_1, ..., d_{n_r}$, then it can be shown \cite{Serre} that these degrees satisfy the relation 
\begin{align} 
\sum_{i = 1}^{n_r} d_i^2 = |G|. 
\label{eqn:irreddimsum} 
\end{align} 

Every complex representation of $G$ is equivalent to an orthogonal direct sum of irreducible representations. Formally, this means that there is an invertible linear transformation $T: V \to V_1 \oplus ... \oplus V_m$ such that the $V_i$ are mutually orthogonal vector spaces and $T \rho(g) T^{-1} = \rho_1(g) \oplus ... \oplus \rho_m(g)$, where for each $i$, $\rho_i$ and $V_i$ give an irreducible representation of $G$. These irreducible representations can again be taken to be unitary. As matrices, this means that the $\rho(g)$ can be simultaneously block-diagonalized in the form $\rho(g) = \text{diag}(\rho_1(g), ..., \rho_m(g))$. A basic result of representation theory is that this decomposition into irreducible components is unique up to isomorphism. We are now ready to give a classification of all the tight $G$-frames: 

\begin{theorem}[\cite{CasazzaKutyniok}] 
\label{thm:tightGframes} 

Let $G = \{g_i\}_{i = 1}^n$ be a finite group, and $\rho: G \to GL(V)$ a complex representation of $G$ which has the decomposition into orthogonal unitary irreducible representations: $$V = V_1 \oplus ... \oplus V_m, $$ $$\rho(g) = \rho_1(g) \oplus ... \oplus \rho_m(g). $$ Let $\vv = \vv_1 + ... + \vv_m$, $\vv_k \in V_k$, and set $\vect{f}_i = \rho(g_i) \vv$. Then $\{\vect{f}_i\}_{i = 1}^n$ is a tight $G$-frame if and only if 
\begin{itemize} 
\item $\frac{||\vv_i||_2^2}{||\vv_j||_2^2} = \frac{\text{dim}(V_i)}{\text{dim}(V_j)}$, and 
\item if the $i^{th}$ and $j^{th}$ irreducible components are equivalent via $T: V_i \to V_j$, then $T \vv_i$ and $\vv_j$ are orthogonal. 
\end{itemize} 

\end{theorem}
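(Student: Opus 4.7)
The plan is to analyze the frame operator
\begin{equation*}
S := \sum_{i=1}^n \vect{f}_i \vect{f}_i^* \;=\; \sum_{g \in G} \rho(g)\, \vv \vv^* \rho(g)^*,
\end{equation*}
since tightness of $\{\vect{f}_i\}_{i=1}^n$ is equivalent to $S = \lambda \matr{I}$ for some scalar $\lambda$. First I would observe that reindexing the summation $g \mapsto hg$ gives $\rho(h) S \rho(h)^* = S$ for every $h \in G$, so $S$ is a self-intertwiner of $\rho$. Writing $\vv = \vv_1 + \ldots + \vv_m$ and decomposing $S$ block-wise according to $V = V_1 \oplus \ldots \oplus V_m$, the blocks $S_{kl} : V_l \to V_k$ take the form
\begin{equation*}
S_{kl} \;=\; \sum_{g \in G} \rho_k(g)\, \vv_k \vv_l^* \rho_l(g)^*,
\end{equation*}
and a direct calculation (again by reindexing) shows $\rho_k(h) S_{kl} = S_{kl} \rho_l(h)$, so each $S_{kl}$ is itself a $G$-intertwiner from $\rho_l$ to $\rho_k$.

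Schur's lemma then controls every block. On the diagonal it forces $S_{kk} = c_k I_{V_k}$, and tracing gives $c_k \dim(V_k) = n \, \|\vv_k\|_2^2$; hence the diagonal blocks equal a common scalar $\lambda$ on each $V_k$ if and only if the ratios $\|\vv_k\|_2^2 / \dim(V_k)$ coincide across $k$, which is precisely condition (1). Off-diagonal blocks indexed by inequivalent irreducibles vanish automatically by Schur. When $V_k$ and $V_l$ are equivalent via a unitary intertwiner $T: V_l \to V_k$, Schur instead gives $S_{kl} = c\,T$, and I would isolate $c$ by left-multiplying by $T^*$, using the intertwining identity $T^* \rho_k(g) = \rho_l(g) T^*$ to collapse the summand to $\rho_l(g) T^* \vv_k \vv_l^* \rho_l(g)^*$, and then taking a trace. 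The resulting scalar is $c = n\, \langle \vv_k,\, T\vv_l \rangle / \dim(V_l)$, so requiring every such $S_{kl}$ to vanish amounts to the orthogonality stated in condition (2).

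The step I expect to be most delicate is the bookkeeping around isotypic components, i.e.\ when several of the $V_k$ are mutually equivalent: the intertwiner $T$ between two equivalent summands is only unique up to a scalar, so one must verify that the condition $T\vv_i \perp \vv_j$ does not depend on the choice of $T$. This is immediate since rescaling $T$ by a nonzero constant rescales $\langle T\vv_i, \vv_j\rangle$ by the same constant; nevertheless, when multiplicities exceed one the off-diagonal analysis has to be carried out for every pair of equivalent components simultaneously rather than just once. Once all blocks of $S$ are accounted for, the ``if'' and ``only if'' directions follow together from the block decomposition, yielding tightness exactly when (1) and (2) hold, with tightness constant $\lambda = n\,\|\vv_k\|_2^2/\dim(V_k)$.
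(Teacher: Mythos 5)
Your proposal is correct and follows essentially the same route the paper indicates: the paper's proof is a citation to Theorem 5.4 of Casazza--Kutyniok together with the one-line sketch "apply Schur's Lemma to $\M\M^*$," and your argument is precisely that computation carried out in full — block-decomposing the frame operator $S = \M\M^* = \sum_g \rho(g)\vv\vv^*\rho(g)^*$, using $G$-invariance and Schur's Lemma to reduce the diagonal blocks to the trace identity $c_k \dim(V_k) = n\|\vv_k\|_2^2$ (condition (1)) and the off-diagonal blocks between equivalent components to the vanishing of $\langle \vv_k, T\vv_l\rangle$ (condition (2)). The details, including the trace evaluations and the observation that the orthogonality condition is independent of the scalar ambiguity in $T$, all check out.
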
 

\begin{proof} 
This is Theorem 5.4 in \cite{CasazzaKutyniok}. It follows from considering the frame matrix $\M := \begin{bmatrix} \hdots & \rho(g_i) \vv & \hdots \end{bmatrix}_{i = 1}^n$ and applying Schur's Lemma (Section 2.2, \cite{Serre}) to the product $\M\M^*$ to see when it is a scalar matrix, which is equivalent to the columns of $\M$ forming a tight frame. 
\end{proof} 

We will now establish a tool that will allow us to easily use this theorem to construct tight frames. On this note, consider the following well-studied generalization of the classical discrete Fourier transform \cite{Terras}: 

\begin{definition} 
We define the \textit{group Fourier transform} of a complex-valued function on $G$, $f: G \to \C$, to be the function that maps a degree $d$ representation $\rho$ to the $d \times d$ complex matrix 
\begin{align} 
\hat{f}(\rho) = \sum_{g \in G} f(g) \rho(g). 
\end{align} 
There is an inverse transformation given by 
\begin{align} 
f(g) = \frac{1}{|G|} \sum_{i = 1}^{n_r} d_i \Tr{\rho_i(g^{-1}) \hat{f}(\rho_i)}, 
\end{align} 
where the sum is taken over all the inequivalent irreducible representations of $G$. 
\end{definition}

Much like the traditional discrete Fourier transform, this transformation has a matrix representation in the form 
\begin{align} 
\mathcal{F} & = \begin{bmatrix} \sqrt{d_1} \text{vec}(\rho_1(g_1)) & \hdots & \sqrt{d_1} \text{vec}(\rho_1(g_n))\\ 
\sqrt{d_2} \text{vec}(\rho_2(g_1)) & \hdots & \sqrt{d_2} \text{vec}(\rho_2(g_n))\\ 
\vdots &  \ddots & \vdots \\ 
\sqrt{d_{n_r}} \text{vec}(\rho_{n_r} (g_1)) & \hdots & \sqrt{d_{n_r}} \text{vec}(\rho_{n_r} (g_n)) \end{bmatrix}, 
\label{eqn:matrixF} 
\end{align} 
where for a $d \times d$ matrix $A$, $\text{vex}(A)$ is the \textit{vectorization} of $A$, i.e., the vector formed by stacking the columns of $A$ into a single $d^2 \times 1$ column. From equation (\ref{eqn:irreddimsum}), we see that $\mathcal{F}$ is a square matrix. 

Notice that when $G$ is a cyclic group of size $n$, then the group elements are $\{0, 1, ..., n-1\}$ (with the group operation being addition modulo $n$). Since this group is abelian, there are exactly $n$ irreducible representations, $\{\rho_\ell\}_{\ell = 1}^n$, each degree-1. $\rho_\ell$ is simply the function that maps $k \mapsto \omega^{k \ell}$, $k \in \{0, ..., n-1\}$, where $\omega = e^\frac{2 \pi i}{n}$. In this case our group Fourier transform and matrix become the familiar discrete time Fourier transform and DFT matrix.

\begin{theorem} 
Let $G = \{g_i\}_{i = 1}^n$ be a finite group with inequivalent, irreducible representations $\{\rho_{i}\}_{i = 1}^{n_r}$, and $\mathcal{F}$ the group Fourier matrix of $G$ as in (\ref{eqn:matrixF}). Then the columns of $\mathcal{F}$ form a tight $G$-frame, so $\mathcal{F}$ is a unitary matrix. In fact, if $\tilde{\rho}: G \to \C^{d \times d}$ is a representation of $G$ and $\tilde{\vv} \in \C^{d \times 1}$ such that the columns of $\M := [\hdots ~ \tilde{\rho}(g_i) \tilde{\vv} ~ \hdots]_{i = 1}^n$ form a tight frame, then the rows of $\M$ are a subset of the rows of $\mathcal{F}$ up to an equivalence of $\tilde{\rho}$ or a change of basis of $\C^{d \times 1}$. 
\label{thm:groupfouriertightframes} 
\end{theorem}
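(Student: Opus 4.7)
The plan is to prove both parts using standard tools from representation theory: Schur's orthogonality relations and the structure theorem for unitary representations of finite groups. For the first part, I would first identify the columns of $\mathcal{F}$ as arising from a genuine group action. Using the elementary identity $\text{vec}(\rho_i(g)) = (I_{d_i} \otimes \rho_i(g))\,\text{vec}(I_{d_i})$, the $j$-th column of $\mathcal{F}$ can be written as $\rho_{\mathrm{reg}}(g_j)\vv$, where $\rho_{\mathrm{reg}} := \bigoplus_i (I_{d_i} \otimes \rho_i)$ and $\vv := \bigoplus_i \sqrt{d_i}\,\text{vec}(I_{d_i})$. Since $I_{d_i} \otimes \rho_i$ is a direct sum of $d_i$ copies of $\rho_i$ and $\sum_i d_i^2 = |G|$ by (\ref{eqn:irreddimsum}), this is (equivalent to) the regular representation and $\mathcal{F}$ is square. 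Tightness, and unitarity of $\mathcal{F}/\sqrt{|G|}$, then follow from Schur's orthogonality: the inner product of the rows indexed by $(i,k,\ell)$ and $(j,k',\ell')$ is $\sqrt{d_id_j}\sum_g \overline{\rho_i(g)_{k\ell}}\,\rho_j(g)_{k'\ell'} = |G|\,\delta_{ij}\delta_{kk'}\delta_{\ell\ell'}$, giving $\mathcal{F}\mathcal{F}^* = |G|\,I$.

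For the converse, which I expect to be the crux, I would decompose $\tilde{\rho}$ into unitary irreducibles and group copies of each $\rho_i$ into its isotypic component, viewed as $W_i \otimes V_i$ where $W_i \cong \C^{m_i}$ is the multiplicity space. The restriction of $\tilde{\vv}$ to this component may be written as a matrix $X_i \in \C^{m_i \times d_i}$, whose $k$-th row is the component of $\tilde{\vv}$ living in the $k$-th copy of $\rho_i$. Since the $G$-intertwiners between any two copies of the same irreducible are scalar multiples of the identity by Schur's lemma, the orthogonality condition in Theorem \ref{thm:tightGframes} forces the rows of $X_i$ to be mutually orthogonal; the norm condition forces them to share a common squared norm $c^2 d_i$, with a single constant $c$ common to all $i$.

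A unitary change of basis on $W_i$ commutes with $\tilde{\rho}$ restricted to the isotypic component, so it is an allowed equivalence; it can be chosen so that the nonzero rows of $X_i$ lie in the first $m_i$ coordinates of $W_i$. A subsequent unitary change of basis on $V_i$ (applied uniformly across all copies, and therefore also an allowed equivalence) rotates those $m_i$ orthogonal rows to scalar multiples of the first $m_i$ standard basis vectors of $\C^{d_i}$, putting $X_i$ into the canonical form $c\sqrt{d_i}\,[e_1,\ldots,e_{m_i}]^T$. With $\tilde{\vv}$ so aligned, the rows of $\M$ in the $V_i$-block reduce to $c\sqrt{d_i}\,\rho_i(g_j)_{\ell,k}$ for $\ell \le d_i$, $k \le m_i$, which (up to the uniform scaling $c$ that may be absorbed into the norm of $\tilde{\vv}$) form a subset of the rows of $\mathcal{F}$ corresponding to the irreducible $\rho_i$. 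Ranging over all $i$ yields the claim.

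The main obstacle will be the bookkeeping around the isotypic decomposition: ensuring that the two conditions of Theorem \ref{thm:tightGframes} translate correctly into the claimed orthogonality and equal-norm properties of the rows of $X_i$ inside each isotypic component, and then confirming that the SVD-style alignment of $X_i$ can be realized entirely by representation equivalences rather than an arbitrary linear change of coordinates. Once this is done, the identification with rows of $\mathcal{F}$ is immediate from the definition of the group Fourier matrix.
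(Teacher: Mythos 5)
Your proposal is correct and follows essentially the same route as the paper: both identify the columns of $\mathcal{F}$ as the orbit of $\bigoplus_i \sqrt{d_i}\,\mathrm{vec}(\matr{I}_{d_i})$ under $\bigoplus_i (\matr{I}_{d_i}\otimes\rho_i)$ and then invoke the Vale--Waldron tightness criterion (Theorem \ref{thm:tightGframes}) to get both tightness and the converse. The only (minor) difference is that you verify $\mathcal{F}\mathcal{F}^* = |G|\matr{I}$ directly from Schur orthogonality and spell out the isotypic-component alignment that the paper leaves implicit, which amounts to the same underlying use of Schur's lemma.
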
 

\begin{proof} 
The group Fourier matrix $\mathcal{F}$ can be realized as a $G$-frame as follows: For each $i = 1, ..., n_r$, define the representation 
\begin{align} 
\tilde{\rho}_i(g) = \text{diag}(\rho_i(g), ..., \rho_i(g)) \in \C^{d_i^2 \times d_i^2}, 
\label{eqn:tilderhoidef} 
\end{align} 
a direct sum of $d_i$ copies of the irreducible representation $\rho_i$. Also define the vector 
\begin{align} 
\vv_i = \text{vec}(\matr{I}_{d_i}) = [\vect{e}_i^{(1) T }, ..., \vect{e}_i^{(d_i) T }]^T \in \C^{d_i^2}, 
\label{eqn:videf} 
\end{align} 
where $\matr{I}_{d_i}$ is the $d_i \times d_i$ identity matrix and $\vect{e}_i^{(j)} \in \C^{d_i \times 1}$ is the $j^{th}$ column of $\matr{I}_{d_i}$---a vector of all zeros except for a 1 in the $j^{th}$ position. 

Now choose the representation 
\begin{align} 
\rho(g) = \text{diag}(\tilde{\rho}_1(g), \tilde{\rho}_2(g), ..., \tilde{\rho}_{n_r}(g)) \in \C^{n \times n}, 
\label{eqn:directprodrho} 
\end{align}
and the vector 
\begin{align} 
\vv = [\sqrt{d_1} \vv_1^T, \sqrt{d_2} \vv_2^T, \hdots, \sqrt{d_{n_r}} \vv_{n_r}^T]^T \in \C^{n}. 
\label{eqn:directprodv} 
\end{align} 
Then $\mathcal{F}$ is the $G$-frame with columns $\rho(g_i) \vv$. For any $i$, $\{\vect{e}_i^{(j)}\}_{j = 1}^{d_i}$ is a complete orthonormal set in $\C^{d_i}$, and $\frac{||e_{i_1}^{(j_1)}||_2^2}{||e_{i_2}^{(j_2)}||_2^2} = \frac{d_{i_1}}{d_{i_2}}$. From Theorem \ref{thm:tightGframes}, we see that not only do the columns of $\mathcal{F}$ form a tight $G$-frame, but in fact up to a change of basis of the $\vect{e}_i^{(j)}$ or a similarity transformation of the $\rho_i$, every tight $G$-frame can be realized as a subset of the rows of $\mathcal{F}$ by forming each $\vv_i$ from a corresponding \textit{subset} of the columns $\{\vect{e}_i^{(j)}\}_{j = 1}^{d_i}$. 
\end{proof}

Theorem \ref{thm:groupfouriertightframes} reduces the task of constructing tight $G$-frames to selecting blocks of rows of the corresponding group Fourier matrix $\mathcal{F}$. Our job will now be to find good choices of the group $G$, and to identify which rows of $\mathcal{F}$ to choose to create a tight group frame with low coherence. We should mention that this problem was explored for abelian groups $G$ in \cite{DingFeng}, with a focus on finding frames with coherence equal to the Welch Bound. We will find, however, that by not placing any restrictions on our group $G$, and by allowing our coherence to be slightly above the Welch lower bound, we can produce a vastly larger and richer collection of frames.

\section{Reducing the Number of Distinct Inner Products in Tight Group Frames} 
\label{sec:reducinginnerproducts} 

In our original construction from Theorem \ref{thm:cosetinnerproducts}, we designed harmonic frames in the form of $\M$ from (\ref{eqn:Meqn}) which arose from representations of the cyclic group $G = \Z/n\Z$, where $n$ is a prime. Indeed, the $j^{th}$ row of $\M$ is $[1, \omega^{k_j}, \omega^{2 k_j}, ..., \omega^{(n-1) k_j}$, where $\omega = e^{\frac{2 \pi i}{n}}$, and we can now see that this is simply the row of the group Fourier matrix of $G$ corresponding to the n-dimensional representation $\rho_{k_j}(\ell) = \omega^{\ell k_j}$, for $\ell \in \{0, ..., n-1\}$. We wish to generalize our original method from Theorem \ref{thm:cosetinnerproducts} of constructing frames with few distinct inner product values.

On this note, we will consider constructing frames by choosing the blocks of rows corresponding to $m$ of the representations, which we may assume are $\rho_1, ..., \rho_m$ up to a reordering, so that our frame matrix takes the form 
\begin{align} 
\M & = \begin{bmatrix} \sqrt{d_1} \text{vec}(\rho_1(g_1)) & \hdots & \sqrt{d_1} \text{vec}(\rho_1(g_n))\\ 
\vdots &  \ddots & \vdots \\ 
\sqrt{d_{m}} \text{vec}(\rho_{m} (g_1)) & \hdots & \sqrt{d_{m}} \text{vec}(\rho_{m} (g_n)) \end{bmatrix}. 
\label{eqn:Fmatmrows} 
\end{align} 
As an analog to Equations (\ref{eqn:directprodrho}) and (\ref{eqn:directprodv}) from the proof of Theorem \ref{thm:groupfouriertightframes}, this corresponds to the tight group frame whose elements are the images of the vector $\vv = [\sqrt{d_1} \vv_1^T, \sqrt{d_2} \vv_2^T, \hdots, \sqrt{d_{m}} \vv_{m}^T]^T $ under the representation $\rho(g) = \text{diag}(\tilde{\rho}_1(g), \tilde{\rho}_2(g), ..., \tilde{\rho}_{m}(g))$, where $\tilde{\rho}_i$ and $\vv_i$ are defined as in Equations (\ref{eqn:tilderhoidef}) and (\ref{eqn:videf}) respectively. The dimension of this representation is easily seen to be $\sum_{i = 1}^m d_i^2$. Note that in the setting of Theorem \ref{thm:cosetinnerproducts}, the representations $\rho_i$ are all 1-dimensional, so the block $\begin{bmatrix} \sqrt{d_i} \text{vec}(\rho_i(g_1)) & \hdots & \sqrt{d_i} \text{vec}(\rho_i(g_n)) \end{bmatrix}$ is just a single row. 

The inner product between the $i^{th}$ and $j^{th}$ columns of $\M$ in (\ref{eqn:Fmatmrows}) takes the form 
\begin{align} 
\hspace{-10pt} \sum_{t = 1}^m d_t \text{vec}(\rho_t(g_i))^* \text{vec}(\rho_t(g_j)) &= \sum_{t = 1}^m d_t \Tr{\rho_t(g_i)^*\rho_t(g_j)} \\
&= \sum_{t = 1}^m d_t \Tr{\rho_t(g_i^{-1} g_j)} \label{eqn:groupFTinnerprodstep}\\ 
&= \sum_{t = 1}^m d_t \chi_t(g_i^{-1} g_j). \label{eqn:groupFTinnerprodstep2}
\end{align} 
Here, $\chi_i(g) := \Tr{\rho_i(g)}$ is the \textit{character function} associated to the representation $\rho_i$. Equation (\ref{eqn:groupFTinnerprodstep2}) actually arises in \cite{DingFeng}, though only 1-dimensional representations are considered, in which case each representation is essentially just its own character. Note that in this form the frame is unnormalized, but all of the columns have the same norm, which is given by the square root of the inner product associated to the identity element: 
\begin{align} 
||\rho(g) \vv||_2 = \sqrt{\sum_{t = 1}^m d_t \chi_t(1)} = \sqrt{\sum_{t = 1}^m d_t^2}, \label{eqn:GFMcolnorm} 
\end{align} 
where we have used the fact the character evaluated at $1$ is simply the dimension of the representation. Alternatively, we could have simply seen this to be the norm of $\vv$ by speculation. 

Basic representation theory tells us that a character $\chi$ completely determines its representation up to isomorphism, and as such the characters of many groups are well-studied. In light of this fact, we can often compute the coherence of frames in the form of (\ref{eqn:Fmatmrows}) for different choices of representations $\{\rho_i\}_{i = 1}^m$ without explicitly building the frame matrix $\M$, which can often be a tedious computation. From (\ref{eqn:groupFTinnerprodstep}) and (\ref{eqn:groupFTinnerprodstep2}) we see that the inner product depends only on the group element $g_k := g_i^{-1} g_j$, so a priori there are only $n-1$ possible nontrivial distinct inner product values, and each of these values arises the same number of times as the inner product between two columns. This was to be expected, since the columns of $\M$ form a group frame in light of Theorem \ref{thm:groupfouriertightframes}. If we could generalize our method for choosing rows of the classical Fourier matrix, however, we could hope to reduce this number even further. 

Toward this end, we consider the group of \textit{automorphisms} of $G$. An automorphism of $G$ is a bijective function $\sigma: G \to G$ which respects the group multiplication, i.e. $\sigma(g g') = \sigma(g) \sigma(g')$ for any $g, g' \in G$. The automorphisms of $G$ form a group under composition, denoted $Aut(G)$. An important subgroup of $Aut(G)$ is that of the \textit{inner automorphisms}, denoted $Inn(G)$. These are the automorphisms which arise from \textit{conjugation} by an element $h \in G$, which is the function $\sigma_h(g) = h g h^{-1}$. Two elements $g$ and $g'$ are said to be \textit{conjugate} if there is some $h \in G$ such that $g' = h g h^{-1}$, and the set of all elements conjugate to $g$ is called the \textit{conjugacy class} $\mathcal{C}_g$. We see that the relation $\{g \sim g' \iff g \text{ is conjugate to } g' \}$ is an equivalence relation on $G$, so $G$ can be partitioned into a disjoint union of its conjugacy classes. $Inn(G)$ is easily verified to be a normal subgroup of $Aut(G)$, and the quotient group $Aut(G)/Inn(G)$ is called the group of \textit{outer automorphisms}, denoted $Out(G)$. 

Any conjugation $\sigma_h \in Inn(G)$ fixes a representation's character function. Indeed, if $\rho$ is a representation of $G$ with associated character $\chi$, then 
\begin{align} 
\chi(\sigma_h(g)) = \chi(h g h^{-1}) = \Tr{\rho(h) \rho(g) \rho(h)^{-1}} = \Tr{\rho(g)} = \chi(g). 
\end{align} 
Thus, since the inner products between the columns of $\M$ in (\ref{eqn:Fmatmrows}) can be expressed as in (\ref{eqn:groupFTinnerprodstep2}) in terms of the characters of the irreducible representations of $G$ (i.e. a so-called \textit{character function} on the group elements), we see that there is really only one inner product value for each conjugacy class of $G$. Note that while this observation has the advantage of reducing the number of distinct inner product values to consider, we unfortunately cannot readily apply Lemma \ref{lem:tightframecohbound} to obtain a tighter coherence bound since these values no longer occur with the same multiplicity. Indeed, for each $g \in G$, the corresponding inner product value $\sum_{t = 1}^m d_t \chi_t(g)$ will arise once for each element in the conjugacy class $\mathcal{C}_g$, and the conjugacy classes need not have the same size. 

Since an automorphism essentially preserves the structure of the group $G$, it is no surprise that it also preserves the structure of its representations: 

\begin{lemma} 
$\rho(g)$ is an irreducible representation of the finite group $G$ if and only if $\rho(\sigma(g))$ is also an irreducible representation for any $\sigma \in Aut(G)$. Furthermore, $\rho(g)$ and $\rho(\sigma(g))$ have the same degrees. 
\label{lem:autrep} 
\end{lemma}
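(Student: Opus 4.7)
The plan is to verify directly that pre-composing with an automorphism preserves the representation structure and hence irreducibility. Given the representation $\rho : G \to GL(V)$ and an automorphism $\sigma \in Aut(G)$, I would form the map $\rho \circ \sigma : G \to GL(V)$ sending $g \mapsto \rho(\sigma(g))$ and argue this is itself a representation. This is just a quick check using that $\sigma$ is a group homomorphism: $(\rho \circ \sigma)(g g') = \rho(\sigma(g)\sigma(g')) = \rho(\sigma(g)) \rho(\sigma(g')) = (\rho\circ\sigma)(g)(\rho\circ\sigma)(g')$. Since $\rho \circ \sigma$ takes values in $GL(V)$ where $V$ is the same underlying vector space as for $\rho$, the claim about equality of degrees is immediate.

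The main content is the equivalence of irreducibility, and here I would invoke the fact that $\sigma$ is a bijection on $G$, so $\{\sigma(g) : g \in G\} = G$. This means the images $\rho(G)$ and $(\rho \circ \sigma)(G)$ coincide as subsets of $GL(V)$. Consequently, a subspace $W \subseteq V$ is stabilized by every $\rho(g)$ if and only if it is stabilized by every $\rho(\sigma(g))$. In other words, $\rho$ and $\rho \circ \sigma$ have exactly the same collection of invariant subspaces, and in particular one is irreducible if and only if the other is.

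There is no real obstacle here; the lemma is essentially a tautology once one writes down the definitions and notices the bijectivity of $\sigma$. The only cosmetic care is to be precise about what ``$\rho(\sigma(g))$ is a representation'' means, namely that $g \mapsto \rho(\sigma(g))$ is the representation under discussion, so that one does not confuse the domain element $g$ with the evaluation point $\sigma(g)$ when checking the homomorphism property.
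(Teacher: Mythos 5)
Your proof is correct and follows essentially the same route as the paper's: verify that $\rho \circ \sigma$ is a homomorphism into $GL(V)$ (hence of the same degree), then use the bijectivity of $\sigma$ to conclude that the two representations have the same image set and therefore the same invariant subspaces. No gaps.
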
 

\begin{proof} 
If $\rho: G \to GL(V)$ is a representation, then composing with the automorphism $\sigma: G \to G$ yields a function $\rho \circ \sigma: G \to GL(V)$ which respects the group multiplication: $\rho(\sigma(g g')) = \rho(\sigma(g) \sigma(g')) = \rho(\sigma(g)) \rho(\sigma(g'))$. Thus, $\rho(\sigma(g))$ is a well-defined representation which clearly has the same dimension as $\rho(g)$. Furthermore, since $\sigma$ is a bijection of $G$, the matrices $\{\rho(\sigma(g))~:~g \in G\}$ are simply a permutation of the matrices $\{\rho(g)~:~g \in G\}$, so the first representation is irreducible if and only if the second is. 
\end{proof} 

If $\rho$ is a representation with character $\chi$, and $\sigma \in Aut(G)$, we will use the notation $\rho_{\sigma}$ to indicate the representation 
\begin{align} 
\rho_{\sigma}(g) := \rho(\sigma(g)), \label{eqn:rhosigmanotation} 
\end{align} 
which is irreducible if $\rho$ is. $\rho_{\sigma}$ has corresponding character 
\begin{align} 
\chi_{\sigma}(g) := \chi(\sigma(g)). \label{eqn:chisigmanotation} 
\end{align} 
Under this notation, if $\bold{1} \in Aut(G)$ denotes the identity automorphism $\bold{1}(g) = g$, then $\rho_{\bold{1}}$ and $\chi_{\bold{1}}$ are simply $\rho$ and $\chi$ respectively. From Lemma \ref{lem:autrep}, we see that $Aut(G)$ has a group action on the irreducible representations and characters of $G$ given by 
\begin{align} 
\sigma' \cdot \rho_{\sigma} := \rho_{\sigma \sigma'}, \label{eqn:rhosigmaaction} \\
\sigma' \cdot \chi_{\sigma} := \chi_{\sigma \sigma'}. \label{eqn:chisigmaaction} 
\end{align}

Let us consider case in our original construction from Theorem \ref{thm:cosetinnerproducts} where $G$ was the (additive) cyclic group $\Z/n\Z = \{0, ..., n-1\}$. In this case, $Aut(G)$ is isomorphic to the (multiplicative) group of elements relatively prime to $n$, $(\Z/n\Z)^\times$. For each $\ell \in (\Z/n\Z)^\times$, the corresponding automorphism $\sigma_\ell \in Aut(G)$ is given by $\sigma_\ell(g) = \ell g$. When we required that $n$ be prime in Theorem \ref{thm:cosetinnerproducts}, we ensured that every nonzero element had a multiplicative inverse modulo $n$, so in this case $(\Z/n\Z)^\times$ is the set $\{1, ..., n-1\}$. 

Refer back to the structure of our harmonic frame from (\ref{eqn:Meqn}): 
\begin{align} 
\M =  \frac{1}{\sqrt{m}} \begin{bmatrix} 1 &  \omega^{a_{1}} & \omega^{a_{1}\cdot 2} & \hdots & \omega^{a_{1}\cdot (n-1)}\\ 
						1 & \omega^{a_{2}} & \omega^{a_{2} \cdot 2} & \hdots & \omega^{a_{2}\cdot (n-1)}\\ 
						\vdots & \vdots & \vdots & \ddots & \vdots \\ 
						1& \omega^{a_{m}} & \omega^{a_{m} \cdot 2} & \hdots & \omega^{a_{m} \cdot (n-1)} \end{bmatrix},  
\label{eqn:Meqn2} 
\end{align} 
where $\omega = e^{2 \pi i / n}$. As we have discussed, selecting the frequencies $\{a_1, ..., a_m\}$ is equivalent to choosing rows of the group Fourier matrix corresponding to $\Z/n\Z$, each of which corresponds to a degree-1 representation. By choosing the frequencies $\{a_1, ..., a_m\}$ in (\ref{eqn:Meqn2}) to be a subgroup of $(\Z/n\Z)^\times$ as in Theorem \ref{thm:cosetinnerproducts}, we can now see that we are actually choosing a subgroup of $Aut(G)$. Without loss of generality, let $a_1 = 1$ so that the first row of (\ref{eqn:Meqn2}) corresponds to the representation $\rho(g) = \omega^{g}$. Then the $i^{th}$ row corresponds to the representation $\rho_i(g) := \rho(\sigma_{a_i}(g)) = \omega^{a_i g}$. Thus, we have formed $\M$ by choosing the rows of the group Fourier matrix corresponding to a subset of representations of the form $\{\sigma_i \cdot \rho\}$, where the $\{\sigma_i\}$ form a subgroup of automorphisms. 

We wish to generalize this process to groups $G$ other than $\Z/n\Z$ by choosing an irreducible representation $\rho$ of $G$ and taking its image under a subgroup of automorphisms $\{\sigma_i\} \le Aut(G)$. Note that from Lemma \ref{lem:autrep}, the representations $\{\sigma_i \cdot \rho\}$ will all be irreducible, and hence correspond to easily-identified blocks of rows from the group Fourier matrix $\mathcal{F}$ in (\ref{eqn:matrixF}). It is not clear, however, whether these representations will be distinct. The question now becomes how to choose the subgroup of automorphisms?


\section{Choosing the Automorphism Subgroup} 
\label{sec:choosingautomorphisms} 

Let $H \le Aut(G)$ be a group of automorphisms of $G$, and fix an irreducible representation $\rho$ with character $\chi$. Define $K$ to be the subgroup of $H$ which fixes $\chi$: 
\begin{align} 
K = \{ \sigma \in H ~: ~ \chi(\sigma(g)) = \chi(g), ~ \forall g \in G\}. \label{eqn:Kdefeqn} 
\end{align} 
Immediately we see that $K$ contains every inner automorphism in $H$. Thus, it is effectively the group of \textit{outer} automorphisms which acts nontrivially on the representations. Now choose a subgroup $A \le H$ such that the group product $KA := \{ka~:~k \in K, ~ a \in A\}$ is a subgroup of $H$. This is equivalent to the group products $KA$ and $AK$ being equal as sets. We consider choosing the rows of the generalized Fourier matrix corresponding to the representations $\{\rho_a~:~ a \in A\}$, with notation as in (\ref{eqn:rhosigmanotation}). From Lemma \ref{lem:autrep}, all of these representations have the same degree $d$. Thus, if $A = \{a_1, ..., a_m\} \le Aut(G)$, then $\M$ takes the form 
\begin{align} 
\M & = \sqrt{d} \begin{bmatrix}  \text{vec}(\rho_{a_1}(g_1)) & \hdots &  \text{vec}(\rho_{a_1}(g_n))\\ 
\vdots &  \ddots & \vdots \\ 
 \text{vec}(\rho_{a_m} (g_1)) & \hdots &  \text{vec}(\rho_{a_m} (g_n)) \end{bmatrix}. 
\label{eqn:Aframe} 
\end{align} 
Notice that if $A$ and $K$ have nontrivial intersection, then some of the blocks of rows of $\M$ above may correspond to repeated or isomorphic representations. If this is the case our frame will no longer be tight. We can avoid this by assuming that $|K \cap A| = 1$, though we will typically not make use of this assumption in our following proofs.

Now let us examine the inner products between our frame elements. From (\ref{eqn:groupFTinnerprodstep2}), the inner product corresponding to the group element $g$ is 
\begin{align} 
d \sum_{a \in A} \chi_a(g) & = d \sum_{a \in A} \chi(a(g)). \label{eqn:daginnerprod} 
\end{align} 
Our aim is to generalize the concept from Theorem \ref{thm:cosetinnerproducts} of having one inner product per coset of a subgroup of $Aut(G)$.  We first establish the following preliminary lemma:

\begin{lemma} 
Let $A$ and $K$ be subgroups of a finite group $H$ such that the set product $KA$ is a group, and let $\{a_i\}_{i = 1}^{|A|/|A\cap K|}$ be a set of right coset representatives for $(A \cap K) \backslash A$. Then for each fixed $a_i$ and $k \in K$, there is a unique $a_{i'}$ and $k' \in K$ such that $a_{i'} k = k' a_i$, and a unique $a_{i''}$ and $k'' \in K$ such that $a_i k = k'' a_{i''}$. 
\label{lem:defineki} 
\end{lemma}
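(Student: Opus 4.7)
The plan is to set $L := KA$, which is a subgroup of $H$ by hypothesis, and to observe that the given right coset representatives $\{a_i\}$ for $(A \cap K)\backslash A$ are simultaneously a complete set of right coset representatives for $K$ in $L$. Both halves of the lemma then reduce to the basic fact that each element of $L$ lies in exactly one coset $Ka_j$.

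First I would verify the coset identification. The map $A \to L/K$, $a \mapsto Ka$, is surjective, since every element of $L = KA$ has the form $ka$, and two elements $a, a' \in A$ satisfy $Ka = Ka'$ exactly when $a'a^{-1} \in K$, i.e.\ when $a' a^{-1} \in A \cap K$, so the fibers of this map are precisely the right cosets of $A \cap K$ in $A$. Consequently $[L:K] = |A|/|A \cap K|$, and $L = \bigsqcup_i Ka_i$ with distinct indices giving distinct cosets.

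For the first assertion, I would fix $a_i$ and $k \in K$ and note that $a_i k \in L$, hence $a_i k$ falls into a unique right coset $Ka_{i''}$; unpacking that containment gives a unique decomposition $a_i k = k'' a_{i''}$ with $k'' \in K$. For the second assertion, I would apply the same idea to $a_i k^{-1} \in L$: it lies in a unique coset $Ka_{i'}$, so $a_i k^{-1} = \tilde k a_{i'}$ for a unique $\tilde k \in K$, and setting $k' := \tilde k^{-1}$ and rearranging recovers $a_{i'} k = k' a_i$.

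I do not anticipate a real obstacle; the one point worth flagging is that the hypothesis ``$KA$ is a subgroup'' is used in two places, namely to form the quotient $L/K$ and to guarantee that both $a_i k$ and $a_i k^{-1}$ remain inside $L$ so that they can be assigned to one of the cosets $Ka_j$. Everything else is a routine consequence of the partition of $L$ by right cosets of $K$.
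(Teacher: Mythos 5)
Your proof is correct, and it is organized differently from the paper's. The paper argues element-by-element: it establishes existence of the decomposition $a_i k = k'' a_{i''}$ by writing $a_ik = \tilde k\tilde a$ with $\tilde k\in K$, $\tilde a\in A$ and then splitting $\tilde a$ along $(A\cap K)\backslash A$; it proves that \emph{at most} one pair $(a_{i'},k')$ satisfies $a_{i'}k=k'a_i$ by a cancellation argument showing $a_ta_j^{-1}\in A\cap K$; and it then upgrades ``at most one'' to ``exactly one'' by a finiteness/pigeonhole count on the map $a_j\mapsto a_{j''}$. You instead front-load the key structural fact: the representatives $\{a_i\}$ of $(A\cap K)\backslash A$ are simultaneously a complete, repetition-free set of right coset representatives for $K$ in $L=KA$ (your fiber computation $Ka=Ka'\iff a'a^{-1}\in A\cap K$ is exactly the second-isomorphism-theorem correspondence at the level of coset spaces, and also yields $[L:K]=|A|/|A\cap K|$, which the paper never states). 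Once that is in place, both assertions — existence and uniqueness together — drop out of the fact that $a_ik$ and $a_ik^{-1}$ each lie in exactly one coset $Ka_j$, and no pigeonhole step is needed. The two arguments use the same underlying computation, but yours is the more conceptual packaging and arguably generalizes more cleanly. One cosmetic point: since $K$ need not be normal in $L$, you should refer to the right coset space $K\backslash L$ rather than a ``quotient $L/K$,'' but this does not affect the argument.
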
 

\begin{proof} 
Since $KA$ is a group (by assumption) which obviously contains both $K$ and $A$, we can write $a_i k = \tilde{k} \tilde{a}$ for some $\tilde{k} \in K$ and $\tilde{a} \in A$. Then $\tilde{a}$ can further be written uniquely in the form $\tilde{k}_2 a_{i''}$ for some $\tilde{k}_2 \in A \cap K$ and $a_{i''}$ one of the right coset representatives of $A \cap K$ in $A$. Setting $k'' = \tilde{k} \tilde{k}_2$ gives us the second part of this theorem. 

Now suppose there are two pairs $(a_j, k'_j)$ and $(a_t, k'_t)$ such that 
\begin{align} 
a_j k = k'_j a_i, \label{eqn:akeq1} \\ 
a_t k = k'_t a_i. \label{eqn:akeq2} 
\end{align} 
Then from (\ref{eqn:akeq2}) we have $a_t (a_j)^{-1} a_j k = k'_t (k'_j)^{ -1} k'_j a_i$, and we can use (\ref{eqn:akeq1}) to cancel out $a_j k$ and $k'_j a_i$ from this expression to arrive at 
\begin{align} 
a_t (a_j)^{-1} = k'_t (k'_j)^{-1} \in A \cap K. 
\end{align} 
But since $a_t$ and $a_j$ are representatives of distinct right cosets of $A \cap K$ in $A$, they must be equal, hence $a_t = a_j$ and $k'_t = k'_j$. This shows that there can only be \textit{at most} one pair $(a_{i'}, k')$ such that $a_{i'} k = k' a_i$. But since we have already shown that every $a_j k$ can be written uniquely in the form $k'' a_{j''}$ for some $a_{j''}$, then since our groups are finite there must be some $j$ for which $a_{j''} = a_i$, so there is \textit{exactly} one such pair $(a_{i'}, k') = (a_j, k'')$ which satisfy the hypotheses of the lemma. 
\end{proof}

The next lemma now extends the coset idea of Theorem \ref{thm:cosetinnerproducts} to drastically reduce the number of distinct inner product values we need consider. 

\begin{lemma} 
Let $G$ be a finite group, $H \le Aut(G)$, $\rho$ an irreducible representation of $G$ with character $\chi$, and $K$ the subgroup of $H$ which fixes $\chi$ as in (\ref{eqn:Kdefeqn}). Let $A$ be a subgroup of $H$ such that $KA$ is a group. Then for any $\sigma_1, \sigma_2 \in H$ which are in the same \textit{right} coset of $KA$, the inner products associated to $\sigma_1(g)$ and $\sigma_2(g)$ respectively are equal for any $g \in G$. That is, 
\begin{align} 
d \sum_{a \in A} \chi_a(\sigma_1(g)) = d \sum_{a \in A} \chi_{a}(\sigma_2(g)). 
\end{align}  
\label{lem:gencosetinnerprod} 
\end{lemma}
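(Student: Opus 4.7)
The plan is to unwind everything down to the character functions and exploit the two defining properties of $K$ and $A$: that $K$ pointwise fixes $\chi$, and that $KA = AK$ is a subgroup. Since $\sigma_1$ and $\sigma_2$ lie in the same right coset of $KA$, we can write $\sigma_1 = k a \sigma_2$ for some $k \in K$ and $a \in A$. Because the action of $a' \in A$ on $\sigma_1(g)$ is just precomposition of automorphisms, this converts the task into showing that
\[
\sum_{a' \in A} \chi_{a' k a}(\sigma_2(g)) \;=\; \sum_{a' \in A} \chi_{a'}(\sigma_2(g)).
\]
So the essential content of the lemma is the identity $\sum_{a' \in A} \chi_{a' k a} = \sum_{a' \in A} \chi_{a'}$ as functions on $G$, for any $k \in K$ and $a \in A$.

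My second step would be to deal with the "$a$ on the right" by a trivial reindexing: as $a'$ ranges over the group $A$, so does $a' a$, so it suffices to prove $\sum_{a' \in A} \chi_{a' k}(\sigma_2(g)) = \sum_{a' \in A} \chi_{a'}(\sigma_2(g))$. (Equivalently, just fold the $a$ into the final reindexing step below.) This reduces the problem to demonstrating invariance of the sum under right multiplication of the index by any $k \in K$.

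The core step is then to partition $A$ into right cosets of $A \cap K$. Fix a system $\{a_i\}$ of right coset representatives of $(A \cap K)\backslash A$. For any $a' = c a_i$ with $c \in A \cap K \subseteq K$, the fact that $K$ fixes $\chi$ gives $\chi_{c a_i k a}(\sigma_2(g)) = \chi_{a_i k a}(\sigma_2(g))$, so the sum collapses to $|A \cap K| \sum_i \chi_{a_i k a}(\sigma_2(g))$. Now I would invoke Lemma \ref{lem:defineki}: for each $a_i$ and the fixed $k$, there is a unique coset representative $a_{i''}$ and $k'' \in K$ with $a_i k = k'' a_{i''}$, and the map $i \mapsto i''$ is a bijection on the coset representatives (the uniqueness yields injectivity, and finiteness yields bijectivity). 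Pulling $k''$ out of $\chi$ by the $K$-fixing property and relabeling, the sum becomes $|A \cap K| \sum_i \chi_{a_i a}(\sigma_2(g))$. Finally, using that right multiplication by $a$ permutes the cosets of $A \cap K$ in $A$, the set $\{a_i a\}$ is again a complete set of coset representatives, so reassembling gives exactly $\sum_{a' \in A} \chi_{a'}(\sigma_2(g))$. Multiplying by $d$ yields the claimed equality of inner products.

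The main obstacle, if any, is bookkeeping rather than conceptual: one has to be careful to track that the several reindexings (absorbing $A \cap K$, applying Lemma \ref{lem:defineki}, and finally multiplying cosets by $a$) compose to a genuine bijection of $A$. As long as one splits $A$ by $A \cap K$ and applies Lemma \ref{lem:defineki} only on the representatives, everything lines up cleanly and the $A \cap K \neq \{1\}$ case poses no extra difficulty.
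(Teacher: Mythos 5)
Your proof is correct and follows essentially the same route as the paper's: use the coset condition to reduce to a common element, split $A$ into right cosets of $A\cap K$, collapse with the $K$-fixing property, apply Lemma \ref{lem:defineki} to move $k$ past the coset representatives, and reassemble. The only (harmless) imprecision is the suggestion that the reindexing $a'\mapsto a'a$ by itself reduces $\chi_{a'ka}$ to $\chi_{a'k}$ --- the $a$ sits to the right of $k$, not adjacent to $a'$ --- but your parenthetical fallback of folding $a$ into the final coset-permutation step is exactly what is needed and is what your core argument actually carries out.
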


\begin{proof} 
Since $\sigma_1$ and $\sigma_2$ are in the same right coset of $KA$ (which is equal to $AK$), there is some $h \in H$ such that $\sigma_1 = a_1 k_1 h$ and $\sigma_2 = a_2 k_2 h$ for some $a_1, a_2 \in A$ and some $k_1, k_2 \in K$. Thus, (\ref{eqn:daginnerprod}) becomes 
\begin{align} 
d \sum_{a \in A} \chi(a \sigma_1(g)) & = d \sum_{a \in A} \chi(a a_1 k_1h(g)) \\ 
							& = d \sum_{a \in A} \chi(a k_1h(g)), 
\end{align} 
which follows from the fact that multiplication by $a_1$ permutes the elements of $A$. 

Now let $\{a_i\}$ be a set of right coset representatives for $(A \cap K)\backslash A$. Our sum now becomes 
\begin{align} 
d \sum_{a \in A} \chi(a k_1h(g)) & = d \sum_{a_i} \sum_{\gamma \in A \cap K} \chi(\gamma a_i k_1h(g)) \\ 
						& = d \sum_{a_i} |A \cap K| \chi(a_i k_1 h(g)), 
\end{align} 
which follows from fact that elements of $K$ fix $\chi$. Now for each $a_i$, we know from Lemma \ref{lem:defineki} that $a_i k_1$ is uniquely expressible in the form $k'_j a_j$ for some right coset representative $a_j$ and some $k'_j \in K$. Thus, since the $\{a_i\}$ and $\{a_j\}$ are in one to one correspondence by Lemma \ref{lem:defineki}, we can further rewrite our sum as 
\begin{align} 
d \sum_{a_i} |A \cap K| \chi(a_i k_1 h(g)) & = d \sum_{a_j} |A \cap K| \chi(k'_j a_j h(g)) \\ 
								& = d \sum_{a_j} |A \cap K| \chi(a_j h(g)) \\ 
								& = d \sum_{a_j} \sum_{\gamma \in A \cap K} \chi(\gamma a_j h(g))\\ 
								& = d \sum_{a \in A} \chi(a h(g)). 
\end{align} 
Since the inner product depends only on $h$, we are done. 
\end{proof} 

We can now express each inner product in terms of a right coset of $KA$ and an \textit{orbit} of $G$ under the automorphism group $H$. Two elements $g, g' \in G$ are said to be in the same orbit if there is an automorphism $h \in H$ such that $h(g) = g'$. Note that since $g = h^{-1}(g')$, this is an equivalence relation, so the orbits partition $G$. We may write this orbit as $Hg := \{ h(g)~|~h \in H\}$, and we say that $g$ is a \textit{representative} of this orbit. It should be clear that the identity element $1 \in G$ is in its own orbit.


We are now equipped to bound both the number of distinct inner product values, as well as the coherence of our new frames. The following theorem contains the analogs of Lemma \ref{lem:tightframecohbound} and Theorem \ref{thm:cosetinnerproducts} to the broader class of frames we have just constructed. 

\begin{theorem} 
Let $G$ be a finite group of size $n$ and $\rho$ a degree-$d$ irreducible representation of $G$ with character $\chi$. Define
\begin{itemize} 
\item $H \le Aut(G)$ a group of automorphisms of $G$, 
\item $K := \{ \sigma \in H ~: ~ \chi(\sigma(g)) = \chi(g), ~ \forall g \in G\}$, the subgroup of $H$ consisting of automorphisms which fix $\chi$, 
\item $A = \{a_i\}_{i = 1}^m \le H$, any subgroup of $H$ such that the set product $KA$ is also subgroup of $H$ with $A \cap K = 1$, 
\item $\{h_i\}_{i = 1}^{n_c}$ representatives of the distinct cosets of $KA$ in $H$ 
\item $\{g_j\}_{j = 1}^{n_o}$ representatives of the distinct orbits of $G$ under $H$ 
\end{itemize} 
Finally, let $\M$ be the frame with elements $\{\sqrt{d} [\text{vec}(\rho_{a_1}(g))^T, ..., \text{vec}(\rho_{a_m}(g))^T]^T\}_{g \in G}$ as in (\ref{eqn:Aframe}). Then $\M$ is a tight frame with at most $n_c(n_o - 1)$ distinct inner product values between its vectors. If $\mu_W$ is the lower bound on coherence given by the Welch bound (explicitly $\mu_W = \sqrt{\frac{n - dm}{dm(n - 1)}})$, then the coherence $\mu$ of our frame is bounded by 
\begin{align} 
\mu \le \sqrt{\frac{|G| - 1}{\min_{\{(i, j) : g_j \ne 1\}} |KAh_i(g_j)|}} \mu_W. 
\label{eqn:generalizedfourierframebound} 
\end{align}  
\label{thm:generalizedfourierframebound} 
\end{theorem}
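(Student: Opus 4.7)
The plan is to decompose the theorem into its three claims and handle each separately: tightness of $\M$, the bound of $n_c(n_o-1)$ on the number of distinct inner products, and the coherence bound~(\ref{eqn:generalizedfourierframebound}). The scaffolding is provided by Lemma~\ref{lem:gencosetinnerprod} (which flattens inner products along right cosets of $KA$) and Lemma~\ref{lem:tightframevar} (which fixes the average squared inner product at $\mu_W^2$ once tightness is known).

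First, I would establish tightness. The hypothesis $A\cap K = \{1\}$ forces the characters $\chi_{a_1},\ldots,\chi_{a_m}$ to be pairwise distinct: if $\chi_{a_i} = \chi_{a_j}$, then substituting $g\mapsto a_j^{-1}(g)$ into $\chi(a_i(g)) = \chi(a_j(g))$ gives $\chi(a_i a_j^{-1}(g)) = \chi(g)$ for all $g$, so $a_i a_j^{-1}\in A\cap K = \{1\}$, hence $a_i = a_j$. Consequently $\rho_{a_1},\ldots,\rho_{a_m}$ are pairwise inequivalent by Lemma~\ref{lem:autrep}, and $\M$ selects blocks of the group Fourier matrix corresponding to $m$ distinct degree-$d$ irreducible representations. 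The decomposition $\rho = \mathrm{diag}(\tilde{\rho}_{a_1},\ldots,\tilde{\rho}_{a_m})$ as in (\ref{eqn:tilderhoidef})-(\ref{eqn:directprodv}) then satisfies both hypotheses of Theorem~\ref{thm:tightGframes}: all irreducible summands have equal dimension $d$ and their assigned sub-vectors $\sqrt{d}\,\vect{e}_i^{(j)}$ have equal norm, while the $d$ equivalent copies of $\rho_{a_i}$ inside $\tilde{\rho}_{a_i}$ get assigned mutually orthogonal sub-vectors. So $\M$ is a tight $G$-frame.

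Second, for the distinct-value count, I would use (\ref{eqn:groupFTinnerprodstep2}) to write the inner product between the columns indexed by $g_i,g_j$ as $d\sum_{a\in A}\chi_a(g_i^{-1}g_j)$, depending only on $g := g_i^{-1}g_j$. Partition $G\setminus\{1\}$ into the $n_o-1$ nontrivial $H$-orbits, and write each $g\neq 1$ as $g = h(g_j)$ for some orbit representative $g_j$ and some $h\in H$. Lemma~\ref{lem:gencosetinnerprod} then says the inner product value depends only on the right coset of $KA$ containing $h$, of which there are $n_c$. Hence at most $n_c(n_o-1)$ distinct values arise.

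Third, for the coherence bound, let $\alpha^2(g) := |\langle \tilde{f}_1,\tilde{f}_g\rangle|^2$ denote the squared inner product for the unit-normalized frame (all columns of $\M$ share a common norm by (\ref{eqn:GFMcolnorm}), so normalization is legitimate). Since every nonidentity $g$ is realized by exactly $n = |G|$ ordered pairs $(g_i,g_j)$ with $g_i^{-1}g_j = g$, tightness combined with Lemma~\ref{lem:tightframevar} yields
\begin{align*}
n\sum_{g\neq 1}\alpha^2(g) \;=\; \sum_{i\neq j}|\langle \tilde{f}_i,\tilde{f}_j\rangle|^2 \;=\; n(n-1)\mu_W^2,
\end{align*}
so $\sum_{g\neq 1}\alpha^2(g) = (|G|-1)\mu_W^2$. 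If $\mu^2$ is attained at some coset-orbit pair $(i^*,j^*)$ with $g_{j^*}\neq 1$, then Lemma~\ref{lem:gencosetinnerprod} guarantees $\alpha^2(g) = \mu^2$ for every $g\in KAh_{i^*}(g_{j^*})$, and discarding the other nonnegative terms gives
\begin{align*}
\mu^2\cdot |KAh_{i^*}(g_{j^*})| \;\le\; \sum_{g\neq 1}\alpha^2(g) \;=\; (|G|-1)\mu_W^2.
\end{align*}
Replacing $|KAh_{i^*}(g_{j^*})|$ by the minimum over all eligible $(i,j)$ and taking square roots yields (\ref{eqn:generalizedfourierframebound}). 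The main technical hurdle is really just the first step---verifying the orthogonality and norm-balance conditions of Theorem~\ref{thm:tightGframes} for the specific vector $\vv$ of~(\ref{eqn:directprodv}) when repeated irreducibles appear, and checking that $A\cap K = \{1\}$ is exactly what rules out equivalent blocks. The orbit/coset bookkeeping in steps two and three is routine given Lemma~\ref{lem:gencosetinnerprod}, provided one is careful to exclude the trivial orbit $\{1\}$, which corresponds to the diagonal of the Gram matrix and does not contribute to the coherence.
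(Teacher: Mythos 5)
Your proposal is correct and follows essentially the same route as the paper's proof: reduce inner products to group elements via (\ref{eqn:groupFTinnerprodstep2}), collapse them along $H$-orbits and right cosets of $KA$ using Lemma~\ref{lem:gencosetinnerprod}, and then combine tightness with Lemma~\ref{lem:tightframevar} to turn the fixed average squared inner product $(|G|-1)\mu_W^2$ into the min-max bound. The only difference is that you spell out why $A\cap K = \{1\}$ forces the characters $\chi_{a_i}$ to be pairwise distinct and hence the frame tight, a detail the paper delegates to Theorem~\ref{thm:groupfouriertightframes} and its surrounding discussion; this is a welcome addition rather than a departure.
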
 

\begin{proof} 
By hypothesis, $G$ is partitioned into distinct orbits $Hg_1, ..., Hg_{n_o}$ with representatives $g_1, ..., g_{n_o}$. Let $g \in G$ be in the $j^{th}$ orbit so that for some $h \in H$ we have $h(g_j) = g$. Suppose that $h \in KA h_i$. Then from Lemma \ref{lem:gencosetinnerprod}, the inner product associated to $g$ is 
\begin{align} 
d \sum_{a \in A} \chi(a (g)) & = d \sum_{a \in A} \chi(a h(g_j)) = d \sum_{a \in A} \chi(a h_i (g_j)). 
\end{align} 
Thus, excluding the orbit corresponding to the identity element (which corresponds to taking the inner product of a column of $\M$ with itself), the number of nontrivial inner products that we must consider is $n_c ( n_o - 1)$, and the number of times the inner product corresponding to the pair $(h_i, g_j)$ arises is 
\begin{align} 
|KAh_i(g_j)| = \#\{kah_i(g_j)~:~  k \in K, ~ a \in A\}. 
\end{align} 

Now since our frame $\M$ is tight by Theorem \ref{thm:groupfouriertightframes}, then from Lemma \ref{lem:tightframecohbound}, the mean squared inner product between the frame vectors is equal to $\mu^2_W$, and this mean can be written as 
\begin{align} 
\mu_W^2 & = \frac{1}{\sum_{h_i} \sum_{g_j \ne 1}|KAh_i(g_j)|} \cdot \sum_{h_i} \sum_{g_j \ne 1} |KAh_i(g_j)| |\alpha_{i, j}|^2 \label{eqn:sumhgalphaismu1}\\ 
& = \frac{1}{|G| - 1} \cdot \sum_{h_i} \sum_{g_j \ne 1}  |KAh_i(g_j)| |\alpha_{i, j}|^2, \label{eqn:sumhgalphaismu2} 
\end{align} 
where $\alpha_{i, j}$ is the inner product associated to the pair $(h_i, g_j)$. From this, it follows that $$(|G| - 1) \mu^2_W \ge \left(\min_{\{(i, j) : g_j \ne 1\}}|KAh_i(g_j)|\right) \left( \max_{\{(i, j) : g_j \ne 1\}} |\alpha_{i, j}|^2\right), $$ from which our result follows. 

\end{proof}

We can see from Theorem \ref{thm:generalizedfourierframebound} that in general our coherence will be closer to the Welch bound if we have fewer orbits, and the sets $KAh_i(g_j)$ are close to each other in size. We articulate this in the following corollary. 

\begin{corollary} 
In Theorem \ref{thm:generalizedfourierframebound}, if the sets $KAh_i(g_j)$ are the same size for all $h_i$ and all nonidentity $g_j$, we achieve our optimal upper bound in (\ref{eqn:generalizedfourierframebound}): 
\begin{align} 
\mu & \le \sqrt{n_c(n_o - 1)} \mu_W.  
\label{eqn:ncnomuw} 
\end{align} 
\label{cor:samesizeorbits} 
\end{corollary}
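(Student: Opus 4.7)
The plan is to apply Theorem \ref{thm:generalizedfourierframebound} directly; the corollary reduces to evaluating the minimum in (\ref{eqn:generalizedfourierframebound}) under the hypothesis that all multiplicities coincide. The key ingredient is already extracted in the proof of Theorem \ref{thm:generalizedfourierframebound} (equations (\ref{eqn:sumhgalphaismu1})--(\ref{eqn:sumhgalphaismu2})), namely the counting identity
\begin{equation*}
\sum_{h_i}\sum_{g_j \ne 1} |KAh_i(g_j)| \;=\; |G|-1,
\end{equation*}
which reflects the fact that the nonidentity elements of $G$ are accounted for exactly once in total by the contributions $KAh_i(g_j)$ as $(h_i,g_j)$ ranges over the coset and orbit representatives.

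The first step is to invoke the equal-size hypothesis: let $s$ denote the common value of $|KAh_i(g_j)|$ across all pairs $(h_i, g_j)$ with $g_j \ne 1$. Since there are $n_c$ right-coset representatives of $KA$ in $H$ and $n_o - 1$ nonidentity orbit representatives of $G$ under $H$, the counting identity immediately yields $s \cdot n_c(n_o-1) = |G|-1$, whence
\begin{equation*}
\min_{\{(i,j)\,:\,g_j \ne 1\}} |KAh_i(g_j)| \;=\; s \;=\; \frac{|G|-1}{n_c(n_o-1)}.
\end{equation*}

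The second and final step is to substitute this value into the bound (\ref{eqn:generalizedfourierframebound}) of Theorem \ref{thm:generalizedfourierframebound}. The $|G|-1$ factors cancel, leaving $\mu \le \sqrt{n_c(n_o-1)}\,\mu_W$, exactly as in (\ref{eqn:ncnomuw}). I do not foresee any substantive obstacle: the corollary is simply the tightness case of the ``max-exceeds-average'' estimate used to derive (\ref{eqn:generalizedfourierframebound}), articulating the familiar principle (compare Lemma \ref{lem:tightframecohbound} with $\kappa = n_c(n_o-1)$) that a nonnegative quantity whose weighted average is fixed is maximized over the uniform-weight configuration precisely when all terms are equal, so that equal multiplicities give the smallest possible upper bound on the coherence.
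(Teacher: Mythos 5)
Your proposal is correct and is essentially identical to the paper's own argument: both rest on the counting identity $\sum_{h_i}\sum_{g_j \ne 1} |KAh_i(g_j)| = |G|-1$ already established in the proof of Theorem \ref{thm:generalizedfourierframebound}, deduce that under the equal-size hypothesis the minimum multiplicity equals $\frac{|G|-1}{n_c(n_o-1)}$, and substitute into (\ref{eqn:generalizedfourierframebound}). The paper phrases this as ``the minimum is at most $\frac{|G|-1}{n_c(n_o-1)}$ with equality iff all sets have the same size,'' while you evaluate the minimum directly under the hypothesis; these are the same computation.
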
 

\begin{proof} 
If there are $n_c$ cosets of $KA$ in $H$, and $n_o$ orbits of $G$ under the action of $H$, then since $\sum_{h_i} \sum_{g_j \ne 1} |KAh_i(g_j)| = |G| - 1$, we have 
\begin{align} 
\min_{\{(i, j):g_j \ne 1\}} |KAh_i(g_j)| \le \frac{|G| - 1}{n_c(n_o - 1)}, 
\end{align} 
with equality if and only if the sets $KAh_i(g_j)$ are all the same size. The result follows immediately. 
\end{proof} 


For clarity, let us reiterate how our frames from Theorem \ref{thm:cosetinnerproducts} fall into the more general framework of Theorem \ref{thm:generalizedfourierframebound}. In this case,  

\begin{itemize} 
\item $G$ is the cyclic additive group $\Z/n\Z = \{0, 1, ..., n-1\} \mod n$, where $n$ is a prime. 
\item $\rho$ is the representation $\rho(x) = e^\frac{2 \pi i x}{n}$ for any $x \in G$. 
\item $\chi(x)$ is equal to $\rho(x)$ for any $x \in G$, since $\rho$ is a degree-1 representation. 
\item $H$ is the multiplicative group $(\Z/n\Z)^\times = \{1, 2, ..., n-1\} \mod n$, where each element $h \in (\Z/n\Z)^\times$ is viewed as an automorphism $h(x) = h \cdot x$. 
\item $K$ is the subgroup of $H$ such that $e^\frac{2 \pi i k x}{n} = e^\frac{2 \pi i x}{n}$, $\forall x \in G$. In this case, we can see that $K = \{1\}$. 
\item $A$ is the size $m$ subgroup of $H$, where $m | (n-1)$. Since $K$ is trivial, $KA$ is automatically a subgroup of $H$, and $A \cap K = 1$. 
\item $n_c$ is the number of cosets of $A$ in $H$, which is $\frac{n-1}{m}$. $\{h_i\}_{i = 0}^{n_c}$ are the representatives of these cosets. If $x$ is a cyclic generator for $H$, then the $h_i$ can be taken to be the powers of $x$: $h_i = x^i$, $i = 1, ..., n_c$. 
\item $n_o = 2$, because there are only two orbits of $G$ under $H$. One of these is the trivial orbit, $\{0\}$, and indeed $h \cdot 0 = 0$, $\forall h \in H$. All the nonzero elements $\{1, ..., n-1\} \subset G$ are in the same orbit, since any two of these elements differ only by a multiplicative factor in $H$. Thus we may take our two orbit generators to be $g_1 = 1$ (the generator of the nontrivial orbit) and $g_2 = 0$ (the generator of the tribal orbit). 
\end{itemize} 

In light of this last point, we see that these frames trivially satisfy the hypothesis of Corollary \ref{cor:samesizeorbits} since the sets $KAh_i(g_j)$ are simply the cosets $Ah_i$, which all have the same size as desired. (Note that since we write $G$ additively in this situation, the identity element is $0$ instead of $1$, so the hypothesis of Corollary \ref{cor:samesizeorbits} effectively becomes that the sets $KAh_i(g_j)$ are the same size for $g_j \ne 0$). Thus the frames from Theorem \ref{thm:cosetinnerproducts} give us our optimal bound in Theorem \ref{thm:generalizedfourierframebound}, and the bound in (\ref{eqn:ncnomuw}) becomes $\mu \le \sqrt{n_c} \mu_W$, which is the same bound we saw in Corollary \ref{cor:samesizeorbits}. We will explore this connection more in the next section. 


\section{Subgroups and Quotients of General Linear Groups} 
\label{sec:matrixgroups} 
We will now identify a class of groups that yield frames with remarkably low coherence using this framework, a subclass of which consists of the groups used in Theorem \ref{thm:cosetinnerproducts}. Recall that in our original construction of Theorem \ref{thm:cosetinnerproducts}, we chose $G$ to be the additive group $\Z/n\Z$, where $n$ was a prime $p$, and $H$ was isomorphic to the multiplicative group $(\Z/n\Z)^\times$, which contains all the nonzero elements of $\Z/n\Z$ when $n$ is prime. This is equivalent to choosing $G$ and $H$ respectively to be the additive and multiplicative groups of the finite field with $p$ elements, $\F_p$. In this case, $H$ is the simplest example of a general linear group. Indeed, $H$ can be interpreted as the 1-dimensional invertible matrices with entries in $\F_p$. As we will now see, subgroups and quotients of matrix groups over finite fields lend themselves naturally to our construction.

\subsection{Frames from Vector Spaces Over Finite Fields} 
\label{sec:vectorspacesfinitefields} 

Recall from our discussion following Theorem \ref{thm:generalizedfourierframebound} that in general our coherence will be closer to the Welch bound if we have fewer orbits, and the sets $KAh_i(g_j)$ are close to each other in size. The optimal case is when their sizes are all equal, in which case we obtain the bound in Corollary \ref{cor:samesizeorbits}. 
Equation (\ref{eqn:ncnomuw}) in this corollary closely resembles the result from Lemma \ref{lem:tightframecohbound}. This is no coincidence, since the condition that the sets $KAh_i(g_j)$ have the same size is equivalent to requiring that each corresponding inner product value arises the same number of times as the inner product between two frame elements. (Recall that we exploited this latter property in deriving Lemma \ref{lem:tightframecohbound}.) In a sense, the best case is when we have exactly one nontrivial orbit, so that $n_o = 2$. And if in addition the sets $KAh_i(g_j)$ have the same size for all $h_i$ and $g_j \ne 1$, Corollary \ref{cor:samesizeorbits} shows that the coherence is bounded by a factor of $\sqrt{n_c}$ of the Welch bound.

We saw at the end of Section \ref{sec:choosingautomorphisms} that this happens in our original frames constructed in Theorem \ref{thm:cosetinnerproducts}, when $G$ was the additive group of a prime-sized finite field $\F_p \cong \Z/p\Z$ and $H$ the set of automorphisms given by multiplication by elements of $\F_p^\times \cong (\Z/p\Z)^\times$. As we remarked at the beginning of this section, $H$ is the simplest example of a general linear group $GL(r, \F_p)$---the multiplicative group of $r \times r$ invertible matrices with entries in $\F_n$ (in this case $r=1$). It turns out that even higher-dimensional general linear groups fit the framework of Corollary \ref{cor:samesizeorbits}. If we set $H := GL(r, \F_p)$ then it is the automorphism group of $G := (\F_p)^r$, the $r$-dimensional vector space over $\F_p$ (viewed only as an additive abelian group). For any two nonzero vectors $\vv_1$ and $\vv_2$ in $(\F_p)^r$, there is an invertible matrix $\matr{W} \in GL(r, \F_p)$ such that $\matr{W}\vv_1 = \vv_2$, so all nontrivial elements of $(\F_p)^r$ lie in the same orbit under $H$. 



Alternatively, we may view $(\F_p)^r$ as the additive group of the finite field with $p^r$ elements, $\F_{p^r}$, which is a vector space over its subfield $\F_p$. An irreducible representation $\rho$ of $\F_{p^r}$ (and hence of $(\F_p)^r$) is the function 
\begin{align} 
\rho(x) = e^{\frac{2 \pi i \text{Tr}(x)}{p}}, 
\label{eqn:rhoexptrace} 
\end{align}
where $\text{Tr}(x)$ is the \textit{trace} of the field element $x$, defined as 
\begin{align} 
\text{Tr}: &\F_{p^r} \to \F_p, \nonumber\\ 
\Tr{x} & = x + x^p + x^{p^2} + ... + x^{p^{r-1}}.  
\end{align}
The trace function in our context is the sum of the automorphisms of $\F_{p^r}$ fixing the subfield $\F_{p}$, and is so named because $\Tr{x}$ is the trace of the matrix associated with the linear transformation of multiplication by $x$. This transformation acts on the additive group of $\F_{p^r}$ viewed as a vector space over $\F_p$. As such, the trace is an additive function: $\Tr{x + y} = \Tr{x} + \Tr{y}$, and consequently $\Tr{-x} = - \Tr{x}$. In the case where $r = 1$, the trace becomes the identity function, and we see that as expected we recover a familiar representation of $\F_p$ similar to the ones used in Theorem \ref{thm:cosetinnerproducts}. 

We should point out that the \textit{general} form of an irreducible representation of the additive group of $\F_{p^r}$ is $\rho_a(x) := \omega^{\Tr{ax}}$, where $\omega = e^{\frac{2 \pi i}{p}}$ and $a \in \F_{p^r}$. This is the image of the function $\rho$ in (\ref{eqn:rhoexptrace}) under the action of $k$ viewed as a matrix in $GL(r, \F_p)$ as just described. As such, it is fitting that the notation ``$\rho_k$'' bears resemblance to that of equation (\ref{eqn:rhosigmanotation}). Note also that since each of these is a degree-1 representation, each is equal to its own character: $\chi_k(x) = \rho_k(x)$. Each of the $p^r$ representations $\rho_k$, $k \in \F_{p^r}$, is unique, and from equation (\ref{eqn:irreddimsum}) we see that they indeed comprise \textit{all} of the inequivalent irreducible representations of $G$. 

Now, we can concisely describe the group $K$ of character-preserving automorphisms from Theorem \ref{thm:generalizedfourierframebound} as follows: $K$ is simply the set of automorphisms in $H$ which preserve the field trace, $K = \{k \in H ~|~\Tr{kx} = \Tr{x}, ~\forall x \in G\}$. It can be easily shown that the size of $K$ is $|K| = |H|/|\F_{p^r}^\times| = (p^r - p) \hdots (p^r - p^{r-1})$. What is not clear, however, is the form that each element of $K$ will take as a matrix in $H = GL(r, \F_p)$. The same issue arises when we attempt to compute the group $A$ from Theorem \ref{thm:generalizedfourierframebound}. 

To rectify this issue, we will shift our focus to the interpretation of $G$ as the additive group of the field $\F_{p^r}$. And instead of choosing $H$ to be the entire automorphism group $GL(r, \F_p)$, we will let $H$ be the size-$(p^r-1)$ subgroup of matrices corresponding to the nonzero field elements $\F_{p^r}^\times$. (Recall, each element of $\F_{p^r}^\times$ acts linearly on $\F_{p^r}$ by multiplication, and as such has a matrix representation when viewed as a linear transformation of $(\F_p)^r$.) In this new setting, the only element of $H$ which fixes the field trace is $1$, so $K$ is now the trivial group.

It is reasonable to ask if we lose anything by choosing $H$ to be only a proper subgroup of $GL(r, \F_p)$. But in fact, we can see from Lemma \ref{lem:gencosetinnerprod} and Theorem \ref{thm:generalizedfourierframebound} that the coherence of our frames depends only on the right cosets of $K$ in $H$. The following lemma shows that we do not lose anything by choosing $H$ to be $\F_{p^r}^\times$ instead of $GL(r, \F_p)$: 

\begin{lemma} 
Let $G = (\F_p)^r$ (which is the additive group of $ \F_{p^r}$), and $\chi$ a character of $G$. Let $H_1 = GL(r, \F_p)$ with $K_1 \le H_1$ the subgroup that fixes $\chi$, and $H_2 = \F_{p^r}^\times \le H_1$ with corresponding subgroup $K_2 = H_2 \cap K_1$. For every subgroup $A_1$ of $H_1$ with $A_1 \cap K_1 = 1$, there is a subgroup $A_2$ of $H_2$ with $A_2 \cap K_2 = 1$ such that the groups $A_1$ and $A_2$ give rise to the same inner products described by Lemma \ref{lem:gencosetinnerprod}.  
\end{lemma}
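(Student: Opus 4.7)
In the lemma's setting $G = (\F_p)^r$ is abelian, so every irreducible representation is one-dimensional and coincides with its character, and the inner product formula of Lemma~\ref{lem:gencosetinnerprod} collapses to $\sum_{a \in A} \chi(a(g))$. The central observation is that $\chi \circ a = \chi \circ a'$ if and only if $a$ and $a'$ lie in the same right coset $K_1 a = K_1 a'$ of $K_1$ in $H_1$, and, because $A \cap K_1 = 1$, each such coset meets $A$ in at most one element. Consequently $\sum_{a \in A} \chi(a(g))$ depends only on the subset $K_1 A \subseteq H_1$, so it suffices to exhibit a subgroup $A_2 \le H_2$ with $A_2 \cap K_2 = 1$ and $K_1 A_2 = K_1 A_1$.

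The first step is to show that $H_2$ is a transversal for $K_1$ inside $H_1$, i.e.\ $H_1 = K_1 H_2 = H_2 K_1$ and $H_2 \cap K_1 = \{1\}$. Both $H_1 = GL(r,\F_p)$ and $H_2 = \F_{p^r}^\times$ act transitively on the $p^r - 1$ nontrivial characters of $G$: for $H_1$, because it acts transitively on nonzero vectors of $G$ and these parameterize characters; for $H_2$, because multiplication in $\F_{p^r}$ is transitive on nonzero elements and the nontrivial characters take the form $x \mapsto \omega^{\Tr{ax}}$ with nonzero $a \in \F_{p^r}$. Orbit--stabilizer then yields $|H_1|/|K_1| = p^r - 1 = |H_2|/|K_2|$, and since $|H_2| = p^r - 1$ this forces $K_2 = \{1\}$. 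Hence the natural map $H_2 \hookrightarrow H_1/K_1$ is an injection between sets of equal cardinality, giving the transversal property.

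Next, define $A_2 := H_2 \cap K_1 A_1$. Since $K_1 A_1$ is (by the standing hypothesis inherited from Theorem~\ref{thm:generalizedfourierframebound}) a subgroup of $H_1$, $A_2$ is an intersection of two subgroups and hence a subgroup of $H_2$; the condition $A_2 \cap K_2 = 1$ is automatic once we know $K_2 = \{1\}$. The inclusion $K_1 A_2 \subseteq K_1 A_1$ is immediate. For the reverse, take any $a_1 \in A_1 \subseteq H_1 = K_1 H_2$, write $a_1 = k h_2$ with $k \in K_1$, $h_2 \in H_2$, and observe that $h_2 = k^{-1} a_1 \in K_1 A_1 \cap H_2 = A_2$, so $a_1 \in K_1 A_2$. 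Combining inclusions gives $K_1 A_1 = K_1 A_2$, and by the opening observation the two frames produce identical inner product values.

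The main obstacle is the transversal step: it simultaneously bundles the orbit-stabilizer count $|H_2| = p^r - 1 = [H_1 : K_1]$ with the triviality of $K_2$, both of which ultimately rest on the non-degeneracy of the trace pairing (hidden inside the transitive action of $\F_{p^r}^\times$ on the nontrivial characters of $G$). Once this transversal is in hand, the construction of $A_2$ and verification of its properties are routine manipulations of subgroup products.
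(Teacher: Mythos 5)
Your proposal is correct and follows essentially the same route as the paper: both rest on the fact that $H_2 = \F_{p^r}^\times$ is a transversal for $K_1$ in $H_1$ (exactly one element of $H_2$ per right coset of $K_1$, because every nontrivial character of $G$ arises as $x \mapsto \chi(h_2(x))$ for a unique $h_2 \in H_2$ and $K_2 = 1$), and both take $A_2$ to be precisely the elements of $H_2$ lying in the cosets $K_1 a_1$ for $a_1 \in A_1$, i.e.\ $A_2 = H_2 \cap K_1 A_1$. The only notable difference is that you verify $A_2$ is a subgroup by exhibiting it as an intersection of two subgroups, invoking the standing hypothesis that $K_1 A_1$ is a group, which is a tidier justification than the paper's element-by-element check that $a_2' a_2^{-1}$ lies in the same right coset of $K_1$ as $a_1' a_1^{-1}$.
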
 

\begin{proof} 
As we touched on above, since our character is a function of the form $\chi(x) = e^\frac{2 \pi i \Tr{a x}}{p}$, we observe that no nontrivial element of $H_2$ fixes $\chi$. Thus, $K_2 = 1$. Since the right cosets $K_1 \ H_1 = \{K_1 h_1~:~h_1 \in H_1\}$ partition $H_1$, each element $h_2 \in H_2$ must lie in some such coset. We claim that no two elements of $H_2$ are in the same right coset of $K_1$. To see this, assume we have $h_2$ and $h_2'$ in $H_2$ which lie in the same right coset of $K_1$. This means that $h_2' h_2^{-1} \in K_1 \cap H_2 = K_2$, hence $h_2$ and $h_2'$ must be equal. Furthermore, we know that there is one element of $H_2$ in \textit{each} right coset of $K_1$ in $H_1$, since \textit{every} character of $G$ can be written in the form $\chi(h_2(x))$ for some multiplicative field element $h_2 \in H_2$. (This is a well-known fact that can be found, for example, in \cite{Reeder}.) 

Now, if $A_1$ is a subgroup of $H_1$ which intersects $K_1$ trivially, each element of $A_1$ must lie in a distinct right coset of $K_1$. For each element $a_1 \in A_1$, let $a_2$ be the unique element of $H_2$ which lies in the same such coset, and let $A_2$ be the set of all these elements. Clearly $A_2$ has trivial intersection with $K_2$, since it is a subset of $H_2$. The fact that $A_2$ is itself a group is easy to verify. For example, for elements $a_2$ and $a_2'$ in $A_2$, with corresponding elements $a_1$ and $a_1'$ in $A_1$, we see that the product $a_2' a_2^{-1}$ is also an element of $A_2$ since it is the field element lying in the same right coset of $K_1$ as $a_1' a_1^{-1} \in A_1$. Since elements $a$ in the same right coset of $K_1$ give rise to the same character $\chi(a(x))$, we see also that the groups $A_1$ and $A_2$ will give rise to the same frame inner products as described in Lemma \ref{lem:gencosetinnerprod}. 
\end{proof}






Let us explicitly match this example with the framework of Theorem \ref{thm:generalizedfourierframebound}. We note that 
\begin{itemize} 
\item $G$ is the additive group of the vector space $(\F_p)^r$, or equivalently the additive group of the field $\F_{p^r}$. 

\item $\rho(x) = e^{\frac{2 \pi i \text{Tr}(x)}{p}}$.  

\item $\chi(x) = \rho(x)$, since $\rho$ is a 1-dimensional representation, hence is equal to its own character.  

\item $H = \F_{p^r}^\times = \F_{p^r} \setminus \{0\}$, where $G$ is viewed as the additive group of $\F_{p^r}$. Basic field theory tells us that $H$ is isomorphic to the cyclic group of size $p^r - 1$. 

\item $K = 1$, since the only field element $h \in H$ such that $\chi(h(x)) = \chi(x)$ is the identity. 

\item $A = \{a_1, ..., a_m\}$ is any subgroup of $H$, which will necessarily be a cyclic group of size $m$, where $m$ is a divisor of $p^r - 1$. Since $H$ is cyclic, there is a unique subgroup for each such $m$, and it consists of the $\left( \frac{p^r - 1}{m}\right)^{th}$ powers in $H$. Thus, if $x$ is a cyclic generator for $H$, we may set $y = x^{\frac{p^r - 1}{m}}$ and $a_i = y^i$ for each $i = 1, ..., m$. 

\item $n_c = \frac{p^r - 1}{m}$, the number of cosets of $A$ in $H$. If $x$ is a generator for the cyclic group $H$, these cosets are $h_i = x^i$, $i = 1, ..., n_c$. 

\item $n_o = 2$, since again $0 \in \F_{p^r}$ is in its own orbit, and all the nontrivial elements are in their own orbit under $H$ (generated by $1 \in \F_{p^r}$). 

\end{itemize}

Our new frame matrix $\M$ from (\ref{eqn:Aframe}) becomes 
\begin{align} \M  = \begin{bmatrix} \omega^{\Tr{a_1x_1}} & \omega^{\Tr{a_1x_2}} & \hdots & \omega^{\Tr{a_1 x_{n}}} \\ 
						    \vdots & \vdots & \vdots & \ddots & \vdots \\ 
						    \omega^{\Tr{a_m x_1}} & \omega^{\Tr{a_m x_2}} & \hdots & \omega^{\Tr{a_m x_{n}}}\end{bmatrix}, 
\label{eqn:Mfieldmat} 
\end{align} 
where we have expressed the elements of our field as $\{x_i\}_{i = 1}^{n}$. If $x_j - x_i = x_\ell$, the inner product between the $i^{th}$ and $j^{th}$ columns now becomes 
\begin{align} 
\sum_{a_t} \left(\omega^{\Tr{a_t x_i}} \right)^* \left(\omega^{\Tr{a_t x_j} } \right) & = \sum_{a_t} \omega^{\Tr{a_t (x_j - x_i)}} \\
& = \sum_{a_t} \omega^{\Tr{a_t x_\ell} }. 
\label{eqn:fieldinnerprod} 
\end{align} 

We can see from (\ref{eqn:fieldinnerprod}) that as in our original frames from Theorem \ref{thm:cosetinnerproducts}, we have exactly $\frac{n-1}{m}$ nontrivial inner product values: one for each element of $\F^\times_{p^r}$ (each of which represents a right coset of $K$ in $H$). Again, each of these values arises as an inner product the same number of times.

Since these new frames are a generalization our original frames constructed in Theorem \ref{thm:cosetinnerproducts}, it should come as no surprise that the bounds in Theorems \ref{thm:coherenceupperbound} and \ref{thm:coherenceupperboundmodd} generalizes as well: 

\begin{theorem} 
\label{thm:groupframebound} 
If $n$ is prime power $p^r$, $m$ a divisor of $n-1$, and $\{a_i\}$ the unique subgroup of $\F^\times_{p^r}$ of size $m$, then setting $\omega = e^{\frac{2 \pi i}{p}}$, and $\kappa := \frac{n-1}{m}$, the coherence $\mu$ of our frame $\M$ in (\ref{eqn:Mfieldmat}) satisfies 
\begin{align} 
 \mu & \le \frac{1}{\kappa} \left( (\kappa-1) \sqrt{\frac{1}{m} \left(\kappa + \frac{1}{m}\right)} + \frac{1}{m} \right). 
\end{align} 
\noindent 
If both $p$ and $m$ are odd, $\mu$ satisfies the tighter bound 
\begin{align} 
 \mu & \le \frac{1}{\kappa} \sqrt{\left(\frac{1}{m} + \left(\frac{\kappa}{2} - 1 \right) \beta \right)^2 + \left(\frac{\kappa}{2}\right)^2 \beta^2},  
\end{align} 
where $\beta = \sqrt{\frac{1}{m} \left( \kappa + \frac{1}{m} \right) }$. 

\end{theorem}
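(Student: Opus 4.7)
The plan is to generalize the proofs of Theorems \ref{thm:coherenceupperbound} and \ref{thm:coherenceupperboundmodd}, which handle the $r=1$ case. From (\ref{eqn:fieldinnerprod}), the normalized inner products between distinct columns of $\M$ take on exactly $\kappa = (n-1)/m$ distinct values---one per coset of $A$ in $\F_{p^r}^\times$---which I will denote $T_1,\ldots,T_\kappa$. Each value arises with equal multiplicity among the ordered pairs of columns, and by Theorem \ref{thm:groupfouriertightframes} the frame is tight, so Lemma \ref{lem:tightframevar} forces $\sum_{j=1}^\kappa |T_j|^2 = (n-m)/m^2$. Likewise, the standard identity $\sum_{y\in\F_{p^r}}\omega^{\Tr{ay}} = 0$ for $a\neq 0$ gives $\sum_{y\in\F_{p^r}^\times}\omega^{\Tr{ay}} = -1$; summing over $a\in A$ and using coset-invariance of the trace sum yields $\sum_{j} T_j = -1/m$.

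For the general-$\kappa$ bound, I would apply Cauchy--Schwarz to $-\sum_{j\ne k} T_j = T_k + 1/m$ to obtain
\[
|T_k + 1/m|^2 \le (\kappa-1)\bigl[(n-m)/m^2 - |T_k|^2\bigr].
\]
At fixed $|T_k|$ the left-hand side is maximized when $T_k$ is real and negative, so the worst-case inequality is $(|T_k|-1/m)^2 \le (\kappa-1)[(n-m)/m^2 - |T_k|^2]$. Expanding this quadratic in $|T_k|$ and simplifying the discriminant using $n = \kappa m + 1$ to the clean form $(\kappa-1)^2 n$ yields $|T_k| \le (1+(\kappa-1)\sqrt{n})/(\kappa m)$; after rewriting $\sqrt{n}/m = \sqrt{(1/m)(\kappa + 1/m)}$ this is precisely the first stated bound.

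For the sharper bound when both $p$ and $m$ are odd, the crucial observation is that $-1 \notin A$, so cosets of $A$ in $\F_{p^r}^\times$ pair up under negation and the $T_j$ come in complex conjugate pairs. Moreover, no pair $(a,a') \in A\times A$ satisfies $a+a'=0$, so a direct expansion of $\sum_{y\in\F_{p^r}^\times}\bigl(\sum_{a\in A}\omega^{\Tr{ay}}\bigr)^2$ collapses to $-m^2$, yielding the third moment $\sum_j T_j^2 = -1/m$. Decomposing $T_j = u_j + iv_j$ and combining the three moments produces $\sum_j u_j^2 = (n-2m)/(2m^2)$ and $\sum_j v_j^2 = n/(2m^2)$. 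I would then bound the real and imaginary parts of $T_k$ separately. The pair structure reduces the $u$-sums to $\kappa/2$ independent pair-entries with $\sum u_j = -1/(2m)$ and $\sum u_j^2 = (n-2m)/(4m^2)$, so Cauchy--Schwarz over the remaining $\kappa/2 - 1$ pair-indices (with the discriminant now simplifying to $(\kappa-2)^2 n$) gives $|u_k| \le (1+(\kappa/2 - 1)\sqrt{n})/(\kappa m)$. Meanwhile $v_{k'} = -v_k$ for the conjugate partner $k'$ implies $2v_k^2 \le \sum_j v_j^2 = n/(2m^2)$, so $|v_k| \le \sqrt{n}/(2m)$. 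Combining $|T_k|^2 = u_k^2 + v_k^2$ and rewriting in terms of $\beta = \sqrt{n}/m$ reproduces the claimed bound.

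The hardest step will be establishing the third moment identity $\sum_j T_j^2 = -1/m$: it rests on the combinatorial observation that $-1 \notin A$ eliminates every ``diagonal'' pair from $A \times A$, after which the character-sum computation is routine. The remaining analysis is Cauchy--Schwarz followed by the algebraic simplification $\kappa m + 1 = n$, closely mirroring how the original cyclic-group bounds were derived.
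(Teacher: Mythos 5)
Your argument is correct and reproduces both stated bounds exactly, but by a genuinely different route from the paper's. The paper (Appendix \ref{sec:generalkappa}) takes the discrete Fourier transform of the vector $\cc=[c_1,c_x,\dots,c_{x^{\kappa-1}}]$ of coset inner-product values and computes the moduli of \emph{all} transform coefficients exactly (Lemma \ref{lem:wvec}: $w_1=-1/m$ and $|w_i|=\beta$ for $i\neq 1$, a Gauss-sum computation organized through the ``translation degrees'' $N_{x_1A,x_2A}$); the first bound is then the termwise triangle inequality applied to $\cc=\frac{1}{\kappa}F^*\w$, and the refined bound exploits the symmetry $w_i^*=(-1)^{i-1}w_{\kappa-i+2}$ of Lemma \ref{lem:cstarwstar} to separate real and imaginary contributions. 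You instead use only the power sums $\sum_j T_j=-1/m$, $\sum_j|T_j|^2=(n-m)/m^2$ and, in the odd case, $\sum_j T_j^2=-1/m$, together with Cauchy--Schwarz; your conjugate pairing $T_{j+\kappa/2}=T_j^*$ (i.e.\ $-1\in x^{\kappa/2}A$, the paper's Lemma \ref{lem:mevennegoneinA}) plays the role of the paper's $w$-symmetry, and your third-moment identity (valid precisely because $-1\notin A$ kills all pairs with $a+a'=0$) encodes the same information. The two routes give identical constants because Parseval turns your second moment into $\sum_i|w_i|^2$ and all $|w_i|$ with $i\neq 1$ are equal, so Cauchy--Schwarz and the triangle inequality coincide here; your version is shorter and avoids the translation-degree bookkeeping entirely, at the cost of the finer spectral information (the individual $|w_i|$) that Lemma \ref{lem:wvec} delivers. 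Two small points to tidy up: in the first Cauchy--Schwarz step the quantity $|T_k+1/m|^2$ is \emph{minimized}, not maximized, when $T_k$ is real and negative --- what you actually need and use is the reverse triangle inequality $|T_k+1/m|\ge\bigl||T_k|-1/m\bigr|$; and the Cauchy--Schwarz over the $\kappa/2-1$ remaining pair-indices tacitly assumes $\kappa\ge 4$, though the case $\kappa=2$ is degenerate ($u_1=-1/(2m)$ exactly) and satisfies the bound directly.
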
 

\begin{proof} 
We present this proof in Appendix \ref{sec:generalkappa}. 
\end{proof}

\subsection{Smaller Alphabets and Frames from Hadamard Matrices} 
\label{sec:HadamardFrames} 

We emphasize that these generalized frames have several advantages over the original frames constructed in Theorem \ref{thm:cosetinnerproducts}. First, the number $n$ of frame vectors is no longer limited to being a prime, but is instead a power of a prime, $n = p^r$. Furthermore, the entries of our frame matrix $\M$ in  (\ref{eqn:Mfieldmat}) are no longer $n^{th}$ roots of unity, but rather $p^{th}$ roots of unity. This allows for more practical implementations of our frames. Indeed, while our original frames did achieve low coherence, the entries of the frame vectors came from an alphabet size as large as the frame itself. Thus even for small examples our frames could require an alphabet size of at least several hundred. In our new frames, we could fix $p$ to be a small prime and take a number of frame elements that is substantially larger, yet our frame vectors will only have entries from an alphabet of size $p$.

For instance, if $p = 2$, then even though our frame can have $n = 2^r$ elements for any $r$, the matrix $\M$ will always have $\pm 1$ entries. In this case, we have the following: 

\begin{theorem} 
When $p = 2$ in our above framework, our frame matrix $\M$ in (\ref{eqn:Mfieldmat}) is a subset of rows of an $n \times n$ Hadamard matrix. 
\end{theorem}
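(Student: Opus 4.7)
The plan is to exhibit an explicit $n \times n$ Hadamard matrix $H$ and show that $\M$ from (\ref{eqn:Mfieldmat}) is literally a row-selection of $H$. With $p = 2$ we have $\omega = e^{2\pi i/2} = -1$ and $\Tr{\cdot}$ takes values in $\F_2 = \{0,1\}$, so every entry of $\M$ is $\pm 1$. The natural candidate is the matrix $H \in \{\pm 1\}^{n \times n}$ whose rows are indexed by all $a \in \F_{2^r}$ and whose columns are indexed by all $x \in \F_{2^r}$, with
\begin{align*}
H_{a,x} = (-1)^{\Tr{ax}}.
\end{align*}
The plan is to show $H$ is Hadamard, and then note by inspection of (\ref{eqn:Mfieldmat}) that $\M$ consists of those rows of $H$ indexed by $a \in A = \{a_1, \ldots, a_m\} \subseteq \F_{2^r}^\times$.

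To verify $H$ is Hadamard, I would compute the inner product of the $a$-th and $a'$-th rows:
\begin{align*}
\sum_{x \in \F_{2^r}} (-1)^{\Tr{ax}}(-1)^{\Tr{a'x}} = \sum_{x \in \F_{2^r}} (-1)^{\Tr{(a+a')x}},
\end{align*}
using additivity of the trace and the fact that in characteristic $2$ we have $a + a' = a - a'$. If $a = a'$, then $a + a' = 0$ and the sum equals $n$. If $a \neq a'$, set $b := a + a' \neq 0$. The map $x \mapsto \Tr{bx}$ is an $\F_2$-linear map $\F_{2^r} \to \F_2$, and since $b \neq 0$ it is nonzero (using that trace is a nondegenerate pairing on $\F_{2^r}$, a standard fact about finite fields). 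A nonzero $\F_2$-linear functional on an $r$-dimensional $\F_2$-vector space is surjective, so exactly $n/2$ of the $x$'s map to $0$ and $n/2$ map to $1$, making the sum vanish. Hence $HH^T = n \matr{I}$, confirming that $H$ is Hadamard.

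Finally, comparing the displayed form of $\M$ in (\ref{eqn:Mfieldmat}) with the definition of $H$, we see that the $i$-th row of $\M$ is exactly the row $H_{a_i, \cdot}$. Thus $\M$ is the $m \times n$ submatrix of $H$ obtained by selecting the rows indexed by $A$, which completes the proof. The only technical ingredient beyond bookkeeping is the nondegeneracy of the trace form on $\F_{2^r}$, which guarantees that $x \mapsto \Tr{bx}$ is nontrivial for every nonzero $b$; this is the one step worth flagging, but it is a standard and well-known property of finite fields and requires no real effort here.
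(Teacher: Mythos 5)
Your proof is correct, and it takes a more explicit route than the paper's. The paper disposes of this in two lines: the entries are $\pm 1$ because $\omega = -1$, and the rows of $\M$ are orthogonal with equal norm because the frame is tight by Theorem \ref{thm:groupfouriertightframes}; the ambient Hadamard matrix is then implicitly the full group Fourier matrix of $(\F_2)^r$, of which $\M$ is a row-subset. You instead construct that ambient matrix concretely as $H_{a,x} = (-1)^{\Tr{ax}}$ and verify the Hadamard property by a direct character-sum computation, with the nondegeneracy of the trace pairing doing the work that Schur's lemma does in the paper's Theorem \ref{thm:groupfouriertightframes}. The two arguments identify the same object (the Sylvester-type Hadamard matrix given by the character table of $(\F_2)^r$), but yours is self-contained and arguably fills a small gap in the paper's phrasing: orthogonality of the $m$ rows of $\M$ alone does not by itself exhibit a completion to $n$ mutually orthogonal $\pm 1$ rows, whereas your explicit $H$ provides that completion outright. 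What the paper's approach buys is brevity and uniformity with the general group-frame machinery; what yours buys is an elementary, representation-theory-free verification.
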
 

\begin{proof} 
We already commented above that when $p = 2$, $\M$ will have $\pm 1$ entries. The theorem then follows from the fact that the frame is tight (i.e. the rows of $\M$ are orthogonal with equal norm) by Theorem \ref{thm:groupfouriertightframes}. 
\end{proof}

This is not the first time that frames with $\pm 1$ entries have been explored. For example, \cite{MixonBajwaCalderbank} designed such frames using codes constructed by \cite{Berlekamp1970} and \cite{YuGong}, and analyzed the frames' geometry. Figure \ref{fig:Hadinnprodhist_341_1024} illustrates the benefit of using our frames to control coherence. Depicting histograms of the inner products resulting from selecting two sets of 341 rows of from a $1024 \times 1024$ Hadamard matrix using our method (red) versus randomly (blue), we can see that our construction actually yields just two distinct inner product values in this case, both much closer to zero than the largest magnitude inner products from the random case. In Table \ref{table:Hadamardcoherencelist}, we compute the coherences of several random vs. group Hadamard frames, and compare to the Welch bound for reference. The group Hadamard frames have consistently lower coherence than the random Hadamard frames, particularly when the frame dimensions $m$ and $n$ are large but the quotient $\kappa = \frac{n-1}{m}$ is small.

\begin{figure}
\begin{center}
 \includegraphics[height = 200pt, width=300pt]{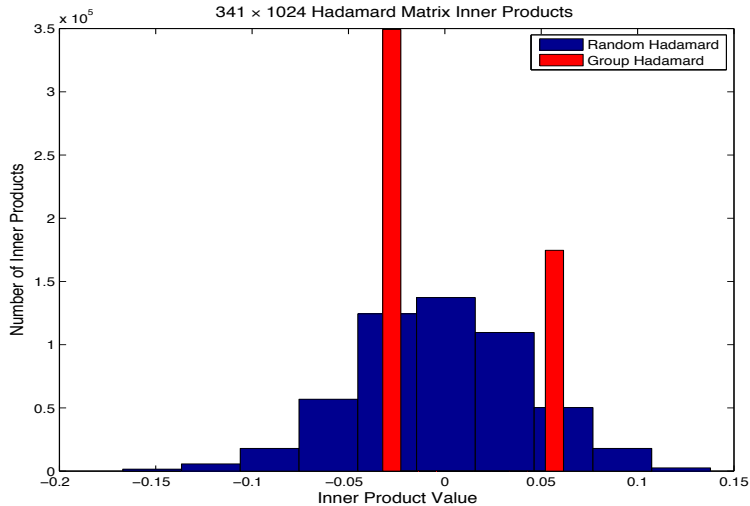}
 \caption{Superimposed histograms of the inner product values between elements of a $341 \times 1024$ frame formed from our method of selecting rows of the Hadamard matrix (red) versus selecting the rows randomly (blue). The red values are concentrated to only two points in this case, resulting in coherence closer to zero. }
 \label{fig:Hadinnprodhist_341_1024}
\end{center}
\end{figure}

\begin{table}[ht]
\caption{Coherences for Random vs. Group Hadamards} 
\centering 
\begin{tabular}{c p{1.5 cm}  p{1.8cm}   p{1.5cm} c }
\hline \hline 
$(n, m)$ &  Random Hadamard & Group Hadamard & $\sqrt{\frac{n - m}{m (n-1)}}$ \\ [0.5ex] 
\hline 
(256, 51) &  .3725 & .2549 & .1256 \\ 
(256, 85) &  .2941 & .1294 & .0888 \\ 
(512, 73) &  .3425 & .2329 & .1085 \\ 
(1024, 341) & .2023 & .0616 & .0442 \\ 
(4096, 455) & .1868 & .1253 & .0442 \\ 
\hline 
\end{tabular} 
\label{table:Hadamardcoherencelist} 
\end{table}

\subsection{Difference Sets} 
\label{sec:differencesets} 

On one final note, we point out that in certain cases the group $A$ forms a \textit{difference set} in $\F_{p^r}$, that is, each nonzero element of $\F_{p^r}$ occurs as a difference $a_i - a_j$ the same number of times. In this case, our frames yield examples of those constructed \cite{Giannakis} and \cite{Ding}: 

\begin{theorem} 
The columns of $\M$ in (\ref{eqn:Mfieldmat}) form a tight equiangular frame if and only if the elements in $A = \{a_i\}_{i = 1}^m$ form a difference set in $\F_{p^r}$. In this case, the coherence of $\M$ achieves the Welch bound. In particular, our construction yields a difference set when $\frac{p^r-1}{m} = 2$ and $m$ is odd. 
\label{thm:fieldframediffsets} 
\end{theorem}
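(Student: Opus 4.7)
The plan is to reduce the equiangularity question to a Fourier-analytic identity on the additive group of $\F_{p^r}$. Starting from (\ref{eqn:fieldinnerprod}), I square the nontrivial inner product to obtain
\begin{align*}
\left| \sum_{a \in A} \omega^{\Tr{a x}} \right|^2 = \sum_{a, a' \in A} \omega^{\Tr{(a - a') x}} = \sum_{y \in \F_{p^r}} c_y \, \omega^{\Tr{yx}},
\end{align*}
where $c_y := \#\{(a,a') \in A \times A : a - a' = y\}$. Note $c_0 = m$ and $\sum_y c_y = m^2$, and the condition that $A$ is a difference set in $\F_{p^r}$ is precisely that $c_y$ is constant on the nonzero elements.

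The forward direction then falls out immediately: assuming $c_y = \lambda := m(m-1)/(p^r - 1)$ for every $y \ne 0$, I apply the standard additive-character identity $\sum_{y \ne 0} \omega^{\Tr{yx}} = -1$ for nonzero $x$, which collapses every nontrivial squared inner product to the single value $m - \lambda$, so the frame is equiangular. Combined with tightness (Theorem \ref{thm:groupfouriertightframes}), Theorem \ref{thm:framebound} gives the Welch bound; alternatively, a direct check shows $(m - \lambda)/m^2 = (p^r - m)/(m(p^r - 1))$, matching the squared Welch bound after accounting for the column norm $\sqrt{m}$. For the converse, I view $\hat{c}(x) := \sum_y c_y \omega^{\Tr{yx}}$ as the Fourier transform of $c$ on $\F_{p^r}$: equiangularity means $\hat{c}$ takes a single constant value $\alpha$ on every nonzero $x$, so Fourier inversion (using $\sum_x \omega^{\Tr{yx}} = p^r \cdot \mathbf{1}_{y = 0}$) yields $c_y = (m^2 - \alpha)/p^r$ for every nonzero $y$, exhibiting $A$ as a difference set.

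For the final assertion, the hypotheses $\kappa = 2$ and $m$ odd force $p^r = 2m+1 \equiv 3 \pmod{4}$, and the unique index-two subgroup of the cyclic group $\F_{p^r}^\times$ is exactly the set of nonzero quadratic residues in $\F_{p^r}$. The claim that these form a difference set is the classical Paley construction, which can be verified by a short quadratic Gauss-sum calculation or cited directly. I do not anticipate a serious obstacle: the main point requiring care is keeping normalizations straight in the Fourier inversion step of the converse, and invoking the Paley construction with the correct congruence hypothesis $p^r \equiv 3 \pmod 4$.
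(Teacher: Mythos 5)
Your proof is correct, and it takes a genuinely more self-contained route than the paper, whose proof of the equivalence consists essentially of a citation to the arguments of Xia--Zhou--Giannakis and to Theorem 3 of Ding, with the Paley difference set cited from Wallis. What you have done is write out the underlying character-sum computation explicitly: expanding $\bigl| \sum_{a \in A} \omega^{\Tr{ax}} \bigr|^2 = \sum_y c_y\, \omega^{\Tr{yx}}$ with $c_y$ the difference-count function, using $\sum_{y \ne 0} \omega^{\Tr{yx}} = -1$ for $x \ne 0$ in the forward direction, and Fourier inversion on the additive group for the converse. This is almost certainly the same mechanism as in the cited references, but your version makes the equivalence transparent, handles both directions symmetrically through the transform $\hat{c}$, and has the added benefit of verifying directly that the common value $m - \lambda = m(p^r - m)/(p^r - 1)$ reproduces the squared Welch bound after dividing by the column norm squared $m$ --- something the paper gets only indirectly from tightness plus equiangularity via Theorem \ref{thm:framebound}. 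Your treatment of the final assertion (forcing $p^r = 2m+1 \equiv 3 \bmod 4$ from $m$ odd, identifying the index-two subgroup with the quadratic residues, and invoking the Paley construction) matches the paper's. The only points requiring care, which you flag and handle correctly, are the normalization in the inversion step and the fact that tightness is automatic from Theorem \ref{thm:groupfouriertightframes} so the equivalence really concerns equiangularity alone.
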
 

\begin{proof} 
Again, this follows from the arguments in \cite{Giannakis} and \cite{Ding} (see Theorem 3 of the latter). When $\frac{p^r - 1}{m} = 2$, $A$ is the group of squared elements in $\F_{p^r}^\times$, which is a well-known difference set when $p^r \equiv 3 \mod 4$ (an example of what is called a ``Paley difference set''). \cite{Wallis} This is precisely the case when $m$ is odd. 
\end{proof}

Unfortunately, the Hadamard frames we constructed in the previous section cannot satisfy the condition $\frac{p^r-1}{m} = 2$, since they require that $p = 2$. We can, however, use our construction to produce tight, equiangular frames whose entries are from an alphabet only of size three---the third roots of unity: 

\begin{corollary} 
Let $p \equiv 3 \mod 4$ be a prime, $r$ an odd integer, and set $m := \frac{p^r - 1}{2}$. Choose the set $A = \{a_i\}_{i = 1}^m$ to be the unique subgroup of $\F_{p^r}$ of size $m$. Then the columns of $\M$ in (\ref{eqn:Mfieldmat}) form a tight equiangular frame whose entries are each one of the distinct $p^{th}$ roots of unity. In particular, when $p = 3$, the entries of $\M$ come from an alphabet of size three. 
\label{cor:finitefieldkappaequalstwo} 
\end{corollary}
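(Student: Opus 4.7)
The plan is to apply Theorem \ref{thm:fieldframediffsets} directly, which reduces the task to verifying its two hypotheses: that $\frac{p^r - 1}{m} = 2$ and that $m$ is odd. The first holds immediately by construction of $m$. For the second, I would argue via a parity calculation modulo 4: since $p \equiv 3 \pmod 4$, we have $p^r \equiv 3^r \pmod 4$, and because $r$ is odd, $3^r \equiv 3 \pmod 4$. Consequently $p^r - 1 \equiv 2 \pmod 4$, which forces $m = (p^r-1)/2$ to be odd.

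With both hypotheses verified, Theorem \ref{thm:fieldframediffsets} immediately yields that $A$ is a (Paley) difference set in $\F_{p^r}$ and that the columns of $\M$ form a tight equiangular frame achieving the Welch bound. The alphabet claim requires essentially no further work: inspecting the frame matrix in (\ref{eqn:Mfieldmat}), every entry has the form $\omega^{\Tr{a_i x_j}}$ with $\omega = e^{2\pi i / p}$, so each entry lies in the set of $p$-th roots of unity. Specializing to $p = 3$ gives an alphabet of size three, namely $\{1, e^{2\pi i/3}, e^{4\pi i/3}\}$.

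The only potential obstacle is really just bookkeeping on the parity step above; there is no deep calculation involved, since all the substantive work (the equivalence between the difference-set property and equiangularity, and the Paley construction producing a difference set when $p^r \equiv 3 \pmod 4$) has already been handled in Theorem \ref{thm:fieldframediffsets}. Thus the corollary is essentially a packaging statement that isolates the smallest interesting alphabet case ($p=3$) of the equiangular construction, by combining the Paley difference set condition with a clean parity criterion on $p$ and $r$.
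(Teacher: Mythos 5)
Your proposal is correct and follows essentially the same route as the paper: both reduce to Theorem \ref{thm:fieldframediffsets} via the observation that $p \equiv 3 \pmod 4$ and $r$ odd force $p^r \equiv 3 \pmod 4$ (equivalently, that $m = (p^r-1)/2$ is odd), which is exactly the Paley difference set condition, and both read off the alphabet claim directly from the form $\omega^{\Tr{a_i x_j}}$ of the entries in (\ref{eqn:Mfieldmat}). Your explicit mod-$4$ bookkeeping for the parity of $m$ is a slightly more detailed rendering of the one-line observation in the paper's proof, but the substance is identical.
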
 

\begin{proof} 
Since $r$ is odd, we have $p^r \equiv 3 \mod 4$, so the set $A$ forms a Paley difference set as mentioned in the proof of Theorem \ref{thm:fieldframediffsets}. Thus the columns of $M$ form a tight, equiangular frame whose elements are integer powers of $\omega = e^{2 \pi i / p}$, i.e., the $p^{th}$ roots of unity.  
\end{proof}

In Table \ref{table:finitefieldcoherencelist}, we list the coherences of several of the tight, equiangular frames arising from Corollary \ref{cor:finitefieldkappaequalstwo}, and compare the coherence to when the matrix $\M$ in (\ref{eqn:Mfieldmat}) is formed by randomly choosing the elements $\{a_i\}_{i = 1}^m$. As expected, our frames consistently have lower coherence, in this case meeting the Welch bound.

\begin{table}[ht]
\caption{Coherences for Random vs. Group Matrices with Small Alphabets, $m = \frac{n-1}{2}$} 
\centering 
\begin{tabular}{c p{1.5 cm}  p{1.8cm}   p{1.5cm} c }
\hline \hline 
$n$ &  Random & Group & $\sqrt{\frac{n - m}{m (n-1)}}$ \\ [0.5ex] 
\hline 
$3^3$ &  .3353 & .2035 & .2035 \\ 
$3^5$ &  .1577 & .0645 & .0645\\ 
$3^7$ &  .0509 & .0214 & .0214 \\ 
$7^3$ & .1110 & .0542 & .0542 \\ 
$11^3$ & .0674 & .0274 & .0274 \\ 
\hline 
\end{tabular} 
\label{table:finitefieldcoherencelist} 
\caption*{\small{Coherences of $m \times n$ frame matrices formed from rows of the group Fourier matrices for the finite fields $\F_q$, $q = n$. We compare choosing the rows randomly with using the group method from Section \ref{sec:vectorspacesfinitefields}, which produces tight, equiangular frames by Corollary \ref{cor:finitefieldkappaequalstwo}. When $n = p^r$, the matrix entries are $p^{th}$ roots of unity. }}
\end{table} 


\section{Frames from Special Linear Groups} 

\begin{table*}[t]
\caption{Character Table of $SL_2(\F_q)$, $q$ even} 
\centering 
\begin{tabular}{p{3.5cm} c c c c }
\hline \hline 
Class Representative: & $\begin{bmatrix} 1 & 0 \\ 0 & 1 \end{bmatrix}$ & $\begin{bmatrix} 1 & 1 \\ 0 & 1 \end{bmatrix}$  & $\begin{bmatrix} c & 0 \\ 0 & c^{-1} \end{bmatrix}$ & \hspace{-10pt}$B \begin{bmatrix} s & 0 \\ 0 & s^{-1} \end{bmatrix} B^{-1}$ \\ 
\hline 
No. of such classes: & 1 & 1 & $\frac{1}{2} (q - 2)$ & $\frac{1}{2} q$  \\ 
Size of class: & 1 & $q^2 - 1$ & $q(q+1)$ & $q(q-1)$\\ [0.5ex] 
\hline 
\hspace{30pt} $1_G$ & 1 & 1 & 1 & 1 \\ 
\hspace{30pt} $St_G$ & $q$ & 0 & 1 & $-1$ \\ 
\hspace{30pt} $\rho_\chi$ & $q+1$ & 1 & $\chi(c) + \chi(c^{-1})$ & 0 \\
\hspace{30pt} $\pi_\eta$ & $q-1$ & $-1$ & 0 & \hspace{-10pt} $-\eta(s) - \eta(s^{-1})$ \\ 

\hline 
\end{tabular} 
\caption*{\small{Here, $c \in \F_q$ and $s \in \F_{q^2}$, where $s$ is an element of norm 1. $B$ is an invertible matrix with entries in $\F_{q^2}$. }} 
\label{table:SL2qeven} 
\end{table*}  

To show how our framework can be applied to more complicated groups, we will demonstrate how to obtain frames with low coherence in the case where $G$ is the special linear group $SL_2(\F_q)$. 
Frames of this type were discussed in \cite{Thill_ICASSP2014}. 
This matrix group is easy to describe, but it is nonabelian and has irreducible representations of degree greater than 1, hence will be interesting for our purposes. 

Let $\F_q$ be the finite field containing $q$ elements, where $q$ is some integral power of a prime number. 
Then $SL_2(\F_q)$ is the set of $2 \times 2$ determinant-1 matrices with entries in $\F_q$, 
$$SL_2(\F_q) := \left\{ \begin{bmatrix} a & b \\ c & d \end{bmatrix}~|~ a, b, c, d \in \F_q, ~ ad - bc = 1 \right\}. $$ 
It is not difficult to check that the size of this group is $|SL_2(\F_q)| = q(q+1)(q-1)$.

Table \ref{table:SL2qeven} is the character table of $SL_2(\F_q)$ for when $q$ is even (a power of $2$). 
As we can see, in this case the matrices fall into four types of conjugacy classes based on how they diagonalize. 
The first is simply the identity matrix, $\begin{bmatrix} 1 & 0 \\ 0 & 1 \end{bmatrix}$. 
The second consists of the matrices that are not diagonalizable, and have the Jordan canonical form $\begin{bmatrix} 1 & 1 \\ 0 & 1 \end{bmatrix}$. 
These first two conjugacy classes contain all the matrices in $SL_2(\F_q)$ with repeated eigenvalues of 1.

Each conjugacy class of the third type has a representative which is a diagonal matrix: $\begin{bmatrix} c & 0 \\ 0 & c^{-1} \end{bmatrix}$, where $c \in \F_q \setminus \{0, 1\}$. 
Since the diagonal matrices $\text{diag}(c, c^{-1})$ and $\text{diag}(c^{-1}, c)$ are conjugate to each other, there are $\frac{1}{2}(q-2)$ such classes.  

The fourth type of conjugacy class consists of matrices whose eigenvalues do not lie in $\F_{q}$. 
These are the matrices that take the form $B \begin{bmatrix} s & 0 \\ 0 & s^{-1} \end{bmatrix} B^{-1}$, where $B \in SL_2(\F_{q^2})$ and $s \in \F_{q^2}$ is one of the \textit{norm-1} elements of $\F_{q^2} \setminus \F_q$, that is, $s^{q+1} = 1$. 
Note that here $\F_{q^2}$ is the finite field of $q^2$ elements, which contains $\F_q$ as a subfield. 
There are $q+1$ elements of $\F_{q^2}$ which satisfy the equation $s^{q+1} = 1$. 
Of these, the only element lying in $\F_q$ is 1, and the remaining $q$ lie in $\F_{q^2} \setminus \F_q$. 
As in the previous case, these $q$ elements pair up to represent a total of $q/2$ distinct conjugacy classes of the fourth type.

There are four types of characters of $SL_2(\F_q)$ for $q$ even, arising as a consequence of the four types of conjugacy classes. 
The interested reader can refer to \cite{Tanaka, Reeder, Prasad} to learn in depth how these characters come about, but for now we will give brief descriptions. 
The first two characters both correspond to degree-1 representations. 
They include the character of the identity representation $1_G$, which maps every element to 1, and that of the Steinberg representation $St_G$, which maps elements of the various conjugacy classes to the values shown in Table \ref{table:SL2qeven}. 
For our purposes, the third and fourth types of characters in the last two rows of the table are of greater interest. 
The third corresponds to what is called an \textit{induced representation}, denoted here as $\rho_\chi$. 
It is a degree-$(q+1)$ representation built from an underlying nontrivial degree-1 representation $\chi$ of the multiplicative group $\F_q^\times$, a cyclic group of size $q-1$. 
If $\tilde{c}$ is a cyclic generator for $\F_{q}^\times$ (so that every element can be written as a power of $\tilde{c}$), and we set $\omega_{-} = e^\frac{2 \pi i}{q-1}$, then $\chi$ is a function of the form $\chi(\tilde{c}^\ell) = \omega_{-}^{a \ell}$, for some fixed $a \in \{1, 2, ..., q-2\}$. 
(It is required that $a$ be nonzero modulo $q-1$ in order for $\rho_{\chi}$ to be irreducible.) 

The final type of character, denoted $\pi_\eta$, corresponds to a degree-$(q-1)$ \textit{cuspidal representation}. 
A cuspidal representation is constructed from a degree-1 representation $\eta$ of the set of norm-1 elements of $\F_{q^2}$, which is a cyclic multiplicative group of size $q+1$. 
Given a cyclic generator $\tilde{s}$ for this group, and setting $\omega_{+} = e^{\frac{2 \pi i}{q+1}}$, then $\eta$ will take the form $\eta(\tilde{s}^\ell) = \omega_{+}^{h \ell}$, where $h$ is some fixed integer in the set $\{1, 2, ..., q\}$. (Again we require $h \not \equiv 0 \mod q+1$ for irreducibility of $\pi_\eta$.

\subsection{Frames from Induced and Cuspidal Representations} 
\label{sec:inducedreps} 
We can now use our previous results to design low-coherence frames in the form of $\mathcal{F}$ in (\ref{eqn:matrixF}) using the characters of $SL_2(\F_{q})$ for $q$ even. 
We emphasize that while explicitly writing out our frame vectors can be cumbersome and requires a certain amount of work in its own right, we will find that \textit{identifying} which representations to use will be quick, as will computing the coherence of the resulting frame.  

We will first focus our attention on only the induced representations. 
For convenience, we will write $\chi_a$ and $\rho_a$ respectively for the representations $\chi$ and  $\rho_\chi$ where $\chi(\tilde{c}) = \omega_{-}^{a}$. 
It remains to identify a suitable group $A$ of automorphisms of $SL_2(\F_q)$ under which we can take the image of an induced representation to construct our frames, as prescribed by Theorem \ref{thm:generalizedfourierframebound}. 
In the last section, when our group was just the additive group of a finite field $\F_q$, our automorphisms corresponded to the nonzero field elements which formed the cyclic multiplicative group $\F_q^\times$. 
These automorphisms were well-described and easy to work with. 
It turns out that each automorphism $\varphi$ of $\F_q$ induces an automorphism of $SL_2(\F_q)$ by simply applying $\varphi$ to the entries of the $2 \times 2$ matrices in the special linear group: 
\begin{align} 
\varphi\left( \begin{bmatrix} a & b \\ c & d \end{bmatrix} \right) := \left( \begin{bmatrix} \varphi(a) & \varphi(b) \\ \varphi(c) & \varphi(d) \end{bmatrix} \right). 
\end{align} 
This observation enables us to continue working with the automorphisms of $\F_q$, so we can again choose $A$ to be a subgroup of $\F_q^\times$. 
If $a' \in A \le \F_q^\times$, then as an automorphism $a'$ acts on $\rho_a$ as 
\begin{align} 
a' \cdot \rho_a = \rho_{a' \cdot a}. 
\end{align} 
Thus, it would be natural to choose for $A$ to act on the representation $\rho_1$, so that the images under $A$ will be the representations $\{\rho_{a}~|~a \in A\}$. For the sake of simplicity, we will set $K = 1$ and $H = A$ in our Theorem \ref{thm:generalizedfourierframebound} notation. 

One caveat that we now face by choosing this set of automorphisms is the following: notice that each element of $A$ fixes the element $u = \begin{bmatrix} 1 & 1 \\ 0 & 1 \end{bmatrix} \in SL_2(\F_q)$, which means that there is a size-1 orbit $KA(u)$. This means that the bound we gave in Equation (\ref{eqn:generalizedfourierframebound}) of Theorem \ref{thm:generalizedfourierframebound} will be somewhat ineffective. We can get around this problem by noticing that from Equations (\ref{eqn:sumhgalphaismu1}) and (\ref{eqn:sumhgalphaismu2}), the magnitude of the largest inner product will still be small as long as the inner product corresponding to $u$ is small in magnitude. We quickly see this to be the case based on the equation for the inner product given in (\ref{eqn:daginnerprod}) and the fact that, from Table \ref{table:SL2qeven}, the character values $\rho_a(u)$ are all equal to 1, a relatively small constant. We will give an explicit formula for the inner product corresponding to $u$ in Equation (\ref{eqn:SL2uinnerprod}), and after normalizing our frame elements (dividing the inner product by the squared norm of a frame element) this inner product becomes very small as $q$ grows.


Since we are working with such a familiar set of automorphisms $A$, we would like to exploit some of the tools we developed for our frames constructed from finite fields. Consider choosing $q$ such that $q-1$ is some prime $p$. 
In this case, $\chi_a$ is simply a representation of the cyclic group $\Z/p\Z$, which is isomorphic to the additive group of the field $\F_p$.
From the preceding sections, we already have powerful tools at our disposal for bounding certain sums of these characters.  
Since the character $\chi_a$ appears in the main part of the character $\rho_a$ (as shown in Table \ref{table:SL2qeven}), we would like to apply these tools to bound sums of the $\rho_a$ as well. 
This will allow us to use our bounds from Theorem \ref{thm:groupframebound} to obtain even tighter bounds on coherence than those we could obtain from Theorem \ref{thm:generalizedfourierframebound}. 

Intuitively, if we take $m$ to be a divisor of $p-1$, and let $A = \{a_1, ..., a_m\}$ be the unique size-$m$ subgroup of $(\Z/p\Z)^\times$ (explicitly the set $\{1, ..., p-1\}$ with multiplication modulo $p$), then we should achieve frames with low coherence by using $A$ to choose the representations $\rho_{a_i}$ to use in our frame matrix $\mathcal{F}$ from (\ref{eqn:matrixF}). 
Note that from our previous notation, 

Now, notice that based on Table \ref{table:SL2qeven}, the characters corresponding to $\rho_a$ and $\rho_{-a}$ are the same (where $-a$ is taken modulo $p$). 
This indicates that $\rho_a$ and $\rho_{-a}$ are in fact equivalent representations. 
If $-1$ is contained in $A$ and is not equivalent to $1$ in $\Z/p\Z$ (which is always the case when $q$ is even, since $p \ne 2$), then for each $a_i \in A$ we also have $-a_i \in A$, and $-a_i \not \equiv a_i$ in $\Z/p\Z$. 
In this case, the set of chosen representations $\{\rho_a~|~a \in A\}$ has repetition, and using these representations as the rows of $\mathcal{F}$ would yield repeated rows of the Group Fourier Matrix of $SL_2(\F_{q})$, and hence would not produce a tight frame (based on Theorem \ref{thm:groupfouriertightframes}). 
More importantly for our purposes, the resulting frame would not fit our criteria from Theorem \ref{thm:generalizedfourierframebound}, which means we could not use the tools we have built to bound its coherence. 
Therefore, if $-1$ lies in the unique subgroup of $(\Z/p\Z)^\times$ of size $m$, we must choose $A$ slightly differently. 
First, let us explicitly describe how the size-$m$ subgroup decomposes into pairs $\{a, -a\}$: 

\begin{lemma} 
Let $q=2^d$ for some positive integer $d$, such that $p = q-1$ is a prime. 
Take a divisor $m$ of $p-1$, and let $A_m$ be the unique size-$m$ subgroup of $(\Z/p\Z)^\times$. 
Then $A_m$ contains $-1$ if and only if $m$ is even. 
In this case, $m/2$ is odd, and $A_m = A_{m/2} \cup -A_{m/2}$ where $A_{m/2}$ is the unique size-$\frac{m}{2}$ subgroup and $-A_{m/2} = \{-a ~|~a \in A_{m/2}\}$.  
\label{lem:meven} 
\end{lemma}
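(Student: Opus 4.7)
The plan is to exploit three classical facts: that $(\Z/p\Z)^\times$ is cyclic of order $p-1$, that $-1$ is its unique element of order $2$, and that for a Mersenne prime $p=2^d-1$ the quantity $p-1$ has exactly one factor of $2$.

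First I would establish the iff. Since $(\Z/p\Z)^\times$ is cyclic of even order $p-1$ (even because $p=q-1$ is odd, given $q$ is a power of $2$, so $p\ne 2$), it has a unique element of order $2$, namely $-1$. The subgroup $A_m$ is itself cyclic of order $m$, so it contains an element of order $2$ if and only if $2\mid m$; that element, when it exists, must coincide with the unique order-$2$ element of the ambient group, i.e.\ $-1$. This proves $-1\in A_m \iff m$ is even.

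Next I would establish the parity of $m/2$. Because $q=2^d$ and $p=2^d-1$ is prime, we have
\[
p-1 \;=\; 2^d - 2 \;=\; 2\,(2^{d-1}-1),
\]
and $2^{d-1}-1$ is odd. Hence $p-1$ has exactly one factor of $2$, so any divisor $m$ of $p-1$ is either odd or twice an odd number. In particular, if $m$ is even then $m/2$ is odd.

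Finally, I would identify $A_m$ with $A_{m/2}\cup(-A_{m/2})$. Since $m/2\mid m$, the uniqueness of subgroups of each order in the cyclic group $A_m$ gives $A_{m/2}\le A_m$. Because $-1\in A_m$, the coset $-A_{m/2}=(-1)\cdot A_{m/2}$ is contained in $A_m$ as well. It is disjoint from $A_{m/2}$, because otherwise $-1$ would lie in the odd-order subgroup $A_{m/2}$, contradicting the fact that a group of odd order has no element of order $2$. Thus $A_{m/2}\cup(-A_{m/2})\subseteq A_m$, and a cardinality count $|A_{m/2}|+|-A_{m/2}|=m/2+m/2=m=|A_m|$ forces equality. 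No step poses a genuine obstacle; the only care needed is to invoke the Mersenne-prime structure $p-1=2\cdot\mathrm{odd}$, without which the claim ``$m/2$ odd'' would fail.
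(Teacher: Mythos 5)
Your proof is correct and follows essentially the same route as the paper's: both arguments identify $-1$ as the unique order-$2$ element of the cyclic group $(\Z/p\Z)^\times$ to get the iff, use the factorization $p-1 = q-2 = 2(2^{d-1}-1)$ to conclude $4 \nmid m$ and hence that $m/2$ is odd, and then obtain $A_m = A_{m/2} \cup (-A_{m/2})$ by noting the two pieces are disjoint and comparing sizes. No substantive differences to report.
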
 

\begin{proof} 
Since $p$ is necessarily odd, $-1$ generates the unique size-2 subgroup of $(\Z/p\Z)^\times$, and $A_m$ contains this subgroup if and only if its size $m$ is even. 

Since $p = q-1$ is prime, then writing $q$ in the form  $2^d$ for some integer $d$, we must have $d>1$.
In this case, $m$ is a divisor of $q-2 = 2(2^{d-1} - 1)$. 
In this form, it is clear that $q-2$ can never be divisible by 4 (since the factor $(2^{d-1} - 1)$ is odd), so neither can its divisor $m$. 
Thus, if $m$ is even, $m/2$ must be odd, so $-1 \notin A_{m/2}$. 
As a result, for any $a \in A_{m/2}$, we must have $-a \in A_m \setminus A_{m/2}$ (since $A_{m/2}$ is a subgroup of $A_m$). 
By comparing sizes, we see that $A_m$ must be equal to the union $A_{m/2} \cup -A_{m/2}$. 
\end{proof}

From Lemma \ref{lem:meven}, we see that when $m$ is an even divisor of $p-1$, the obvious candidate for the group $A$ is the unique size-$\frac{m}{2}$ subgroup of $(\Z/p\Z)^\times$, which will ensure that $-1$ is not in $A$ and that our resulting frame is tight. 
With this in mind, we will simply assume that we choose $m$ to be odd. 
The following theorem uses our previous results on frames constructed from finite fields to give a bound on the coherence of the frames we can construct from the induced representations of $SL_2(\F_q)$, for $q$ even.

\begin{theorem}
Take $q$ a power of 2 such that $q-1$ is a prime $p$, and let $m$ be an odd divisor of $p-1$ and $\kappa = \frac{p-1}{2m}$. Let $A = \{a_1, ..., a_m\}$ be the unique subgroup of $(\Z/p\Z)^\times$ of size $m$, and form $\mathcal{F}$ (as in (\ref{eqn:matrixF})) from the induced representations $\rho_{a_i}$. Then the coherence $\mu_{\mathcal{F}}$ of $\mathcal{F}$ is bounded by 
\begin{align} 
\mu_{\mathcal{F}} \le \frac{1}{q+1} \max \left(  1 , ~ \frac{1}{\kappa} \left( (\kappa-1) \sqrt{\frac{1}{2m} \left(\kappa + \frac{1}{2m}\right)} + \frac{1}{2m}\right) \right).  
\end{align} 
\label{thm:SL2cohbound} 
\end{theorem}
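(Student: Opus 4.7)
The plan is to compute the normalized inner products between columns of $\mathcal{F}$ directly from the character-sum formula (\ref{eqn:daginnerprod}), sort them by the four conjugacy-class types in Table \ref{table:SL2qeven}, and reduce the only nontrivial case to an instance of Theorem \ref{thm:groupframebound}.

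First, by (\ref{eqn:GFMcolnorm}) every column of $\mathcal{F}$ has squared norm $\sum_{a \in A} d_{\rho_a}^2 = m(q+1)^2$, so the normalized inner product attached to $g \in SL_2(\F_q)$ is
\[
\mathcal{I}(g) \;:=\; \frac{1}{m(q+1)} \sum_{a \in A} \chi_{\rho_a}(g).
\]
Reading the $\rho_\chi$-row of Table \ref{table:SL2qeven} yields four cases: (i) $g=I$ gives $\mathcal{I}(I)=1$, i.e.\ the normalization and not part of $\mu_{\mathcal{F}}$; (ii) for the unipotent $u=\begin{bmatrix}1&1\\0&1\end{bmatrix}$, $\chi_{\rho_a}(u)=1$ for every $a$, so $|\mathcal{I}(u)|=1/(q+1)$, supplying the ``$1$'' inside the max; (iii) for $g=\mathrm{diag}(c,c^{-1})$ with $c=\tilde c^{\ell} \in \F_q^\times\setminus\{1\}$, $\chi_{\rho_a}(g)=\omega_-^{a\ell}+\omega_-^{-a\ell}$ with $\omega_-=e^{2\pi i/p}$; (iv) on the non-split torus class, $\chi_{\rho_a}\equiv 0$, so $\mathcal{I}=0$. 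Thus $\mu_{\mathcal{F}}$ is the maximum of $1/(q+1)$ and the maximum absolute value of $\mathcal{I}$ over the split-torus classes.

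The split-torus case is where the machinery from Section \ref{sec:vectorspacesfinitefields} does the real work. Since $m$ is odd, Lemma \ref{lem:meven} gives $-1\notin A$, so $A':=A\cup(-A)$ is a subgroup of $(\Z/p\Z)^\times$ of size $2m$, and
\[
\sum_{a\in A}\bigl(\omega_-^{a\ell}+\omega_-^{-a\ell}\bigr) \;=\; \sum_{a'\in A'}\omega_-^{a'\ell}.
\]
The right-hand side is precisely the kind of character sum bounded by Theorem \ref{thm:groupframebound} applied to $\F_p$ with subgroup $A'$ of order $2m$ and quotient $\kappa=(p-1)/(2m)$. Substituting that bound for the split-torus contribution into $\mathcal{I}$, combining with the $1/(q+1)$ from the unipotent case, and taking the maximum yields the stated inequality.

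The main obstacle is the bookkeeping in case (iii): one must verify that $A\cup(-A)$ really is a subgroup of $(\Z/p\Z)^\times$ of order $2m$ (this is exactly where Lemma \ref{lem:meven} and the odd-$m$ hypothesis are used) and carefully match normalizations between the $SL_2(\F_q)$-frame (column squared-norm $m(q+1)^2$) and the $\F_p$-frame controlled by Theorem \ref{thm:groupframebound} (column squared-norm $2m$). A secondary subtlety is that the unipotent $u$ is fixed by every $a\in A$ acting by entrywise field automorphism, so its $H$-orbit has size $1$ and Corollary \ref{cor:samesizeorbits} does not apply; since $|\mathcal{I}(u)|=1/(q+1)$ is already small, we simply absorb it as the constant $1$ inside the max rather than trying to control it through an orbit-size argument.
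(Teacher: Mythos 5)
Your proposal follows essentially the same route as the paper's own proof: read off the four conjugacy-class types from Table \ref{table:SL2qeven}, observe that only the unipotent class (contributing $1/(q+1)$ after normalizing by the column squared norm $m(q+1)^2$) and the split-torus classes survive, rewrite the split-torus character sum as $\sum_{a'\in A\cup(-A)}\omega_-^{a'\ell}$ with $A\cup(-A)$ the size-$2m$ subgroup via Lemma \ref{lem:meven} and the odd-$m$ hypothesis, and invoke Theorem \ref{thm:coherenceupperbound} before taking the maximum. You also correctly identify the same caveat the paper raises about the size-one orbit of $u$ making Corollary \ref{cor:samesizeorbits} inapplicable, so this is the paper's argument in all essentials.
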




\vspace{-10pt}

\begin{proof} 
From Equation (\ref{eqn:groupFTinnerprodstep2}) and Table \ref{table:SL2qeven}, we see that the only nontrivial inner products between the columns of $\mathcal{F}$ are those corresponding to the conjugacy classes represented by $u := \begin{bmatrix} 1 & 1 \\ 0 & 1 \end{bmatrix} \in SL_2(\F_q)$ and $w_{\ell} := \begin{bmatrix} \tilde{c}^\ell & 0 \\ 0 & \tilde{c}^{-\ell} \end{bmatrix} \in SL_2(\F_2)$ for $\ell \in \{1, ..., q-2\}$. 
These inner products are: 
\begin{align} 
u: \hspace{15pt} \sum_{i = 1}^m d_i \chi_{\rho_{a_i}}(u) & =  m(q+1) \label{eqn:SL2uinnerprod} \\
& \\ 
w_\ell:  \hspace{10pt} \sum_{i = 1}^m d_i \chi_{\rho_{a_i}}(w_\ell) & = \sum_{i = 1}^m (q+1)\cdot (\chi_{a_i}(\tilde{c}^\ell) + \chi_{a_i}(\tilde{c}^{-\ell})) \\ 
 & = (q+1)   \sum_{i = 1}^m (\omega_{-}^{\ell a_i} + \omega_{-}^{-\ell a_i}). \label{eqn:asum} 
\end{align} 

From Lemma \ref{lem:meven} and the fact that $m$ is odd by assumption, we can see that the union $A \cup -A = \{\pm a_1, ..., \pm a_m\}$ is actually the unique subgroup of $(\Z/p\Z)^\times$ of size $2m$. 
If we denote this subgroup as $A_{2m}$, then we can write the sum in (\ref{eqn:asum}) in the form 
\begin{align} 
\sum_{i = 1}^m (\omega_{-}^{\ell a_i} + \omega_{-}^{-\ell a_i}) = \sum_{a \in A_{2m}} \omega_{-}^{\ell a}. 
\end{align} 
But this is just a scaled version of one of our original inner products between the elements of the harmonic frames that we constructed in Theorems \ref{thm:cosetinnerproducts} and \ref{thm:coherenceupperbound}, so we can use Theorem \ref{thm:coherenceupperbound} to bound its magnitude. 

To complete the proof, we simply need to take the maximum of the inner product magnitudes corresponding to the elements $u$ and $w_\ell$. 
This maximum becomes scaled after we normalize the columns of $\mathcal{F}$ by $\sqrt{m(q+1)^2}$, where we obtain the column norm from Equation (\ref{eqn:GFMcolnorm}) and the fact that the induced representations are $(q+1)$-dimensional. 
\end{proof}

\begin{figure}
\begin{center}
 \includegraphics[height = 230pt, width=275pt]{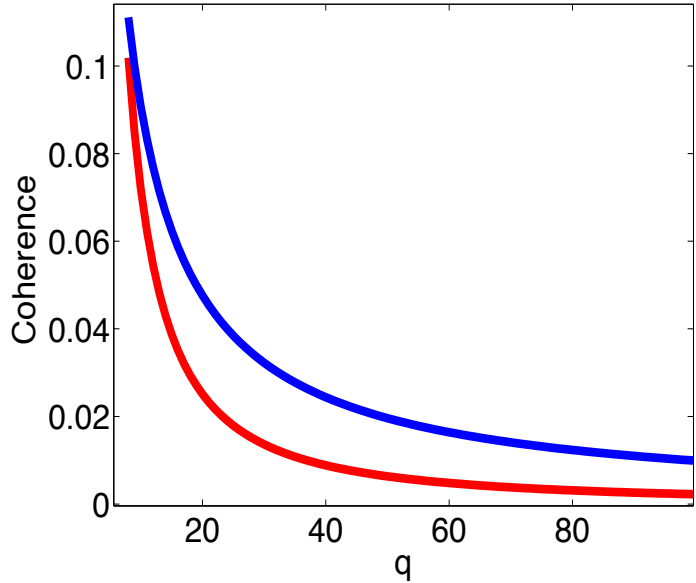}
\caption{Comparison of the Welch lower bound on coherence with the upper bound given by Theorem \ref{thm:SL2cohbound} for frames constructed from the induced representations of $SL_2(\F_q)$, for $q$ a power of 2 such that $q-1$ is prime. The number of frame vectors is $|SL_2(\F_{q})| = q(q+1)(q-1)$, which are $m(q+1)^2$-dimensional. Here, we have fixed $\kappa := \frac{q-2}{2m} = 3$. } 
 \label{fig:SL2inducedboundsr3}
\end{center}
\end{figure}

\begin{table}[ht]
\caption{$SL_2(\F_q)$ vs. Gaussian Frame Coherences}
\centering
\begin{tabular}{c c c c }
\hline \hline
Frame Dimensions & $SL_2(\F_q)$  & Random Gaussian & Welch Bound  \\
\hline
$25 \times 60$& .2000 & .5214 & .1540 \\
$81 \times 504$& .2002 & .3482 & .1019 \\
$243 \times 504$ & .1111 & .2274 & .0462 \\

\hline
\end{tabular}
\label{table:SL2inducedframecoherences}
\end{table}

Theorem \ref{thm:SL2cohbound} gives us a recipe for constructing low-coherence frames from the induced representations of $SL_2(\F_q)$ for $q$ even. 
These frames will consist of $q(q+1)(q-1)$ vectors (one for each element of $SL_2(\F_q)$) which are $m(q+1)^2$-dimensional. 
Figure \ref{fig:SL2inducedboundsr3} shows how our upper bound from the theorem comes decently close to the Welch lower bound on coherence. 
In table \ref{table:SL2inducedframecoherences}, we provide some explicit values of our frames' coherence, and for comparison we have included the coherence of frames of the same dimensions and number of elements whose coordinates are chosen independently from a Gaussian distribution. 
While the frame matrix $\mathcal{F}$ can be concretely written out using the explicit forms of the representations given in \cite{Tanaka, Reeder, Prasad}, we will omit this process since we have already described it in depth and since these particular frames tend to have rather large dimensions. 
We remark that we can obtain similar results using a parallel construction of $\mathcal{F}$ with only cuspidal representations $\pi_\eta$, which works when $q+1$ is prime.

\section{Satisfying the Coherence Property and the Strong Coherence Property} 
\label{sec:strongcoherence} 
A closely related quantity to the coherence of a frame $\{f_i\}_{i = 1}^n$ in $\C^m$ is the \textit{average coherence} $\nu$, defined as 
\begin{align} 
\nu = \frac{1}{n-1} \max_{i \in [n]} \left| \sum_{j \ne i} \langle f_i. f_j \rangle \right|. 
\end{align} 
When discussing the average coherence, the usual quantity $\mu$ is sometimes referred to as the \textit{worst-case} coherence. \cite{BajwaCalderbankJafarpour} and \cite{MixonBajwaCalderbank} use the average coherence to describe the following properties of certain frames: 

\begin{definition} 
A frame $\{f_i\}_{i = 1}^n$ in $\C^m$ with average coherence $\nu$ and worst-case coherence $\mu$ is said to satisfy the \textit{Coherence Property} if 
\begin{enumerate} 
\item $\mu \le \frac{0.1}{\sqrt{2 \log n}}$, and 
\item $\nu \le \frac{\mu}{\sqrt{m}}$. 
\end{enumerate} 
It satisfies the \textit{Strong Coherence Property} if 
\begin{enumerate} 
\item $\mu \le \frac{1}{164 \log n}$, and 
\item $\nu \le \frac{\mu}{\sqrt{m}}$. 
\end{enumerate} 
\end{definition} 

These works also give theoretical guarantees on the sparse-signal-recovery abilities of frames satisfying these properties. In particular, they discuss the One-Step Thresholding (OST) algorithm described in \cite{BajwaCalderbankJafarpour}. If $F \in \C^{m \times n}$ has columns which form a unit-norm frame, $x \in \C^{n \times 1}$ is a sparse signal, and $e \in \C^{n \times 1}$ is a noise vector, OST produces an estimate $\hat{x}$ for $x$ given $y := Fx + e$. If $F$ satisfies the coherence property, \cite{BajwaCalderbankJafarpour} finds regimes in which the support of $\hat{x}$ is equal to that of $x$ with high probability. If $F$ further satisfies the \textit{strong} coherence property, \cite{MixonBajwaCalderbank} further provides high-probability bounds on the error $||x - \hat{x}||_2$. In the absence of an error vector $e$, \cite{BajwaCalderbankJafarpour} also finds cases where $\hat{x}$ is identically equal to $x$ with high probability.

It turns out that we can explicitly compute the average coherence of our frames from Theorem \ref{thm:generalizedfourierframebound}, and indeed any group frame constructed from a set of distinct irreducible representations of the same degree:

\begin{theorem} 
Let $G$ be a finite group of size $n$ and $\rho_1, ..., \rho_m$ a set of distinct nontrivial degree-$d$ irreducible representations of $G$. Then the columns of the matrix ${\M = \sqrt{d} [{\rm vec}\rho_i(g_j)] \in \C^{md \times n}}$ from (\ref{eqn:Fmatmrows}) form a frame with average coherence $\nu = \frac{1}{n-1}$. If $\mu$ is the worst-case coherence of $\M$, then $\nu \le \frac{\mu}{\sqrt{md}}$ provided that $n \ge 2md$. 
\end{theorem}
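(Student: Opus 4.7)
The plan is to compute $\sum_{j' \ne j}\langle f_j, f_{j'}\rangle$ exactly via character orthogonality, find that it is constant in $j$ (hence gives $\nu$ directly), and then compare $\nu$ to $\mu$ using the Welch bound as the lower bound on worst-case coherence.

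First, I would record the column norm. Since $\|\mathrm{vec}(\rho_i(g_j))\|_2^2 = \mathrm{Tr}(\rho_i(g_j)^*\rho_i(g_j)) = \chi_i(1) = d$, each column of $\M$ has squared norm $d\cdot m \cdot d = md^2$. Writing $f_j := \M_j/(d\sqrt{m})$ for the normalized columns, the calculation already performed in (\ref{eqn:groupFTinnerprodstep2}) gives
\begin{align*}
\langle f_j, f_{j'} \rangle \;=\; \frac{1}{md}\sum_{i=1}^{m} \chi_i(g_j^{-1} g_{j'}).
\end{align*}
As $g_{j'}$ ranges over $G\setminus\{g_j\}$, the element $g_j^{-1}g_{j'}$ ranges over $G\setminus\{1\}$. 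Since each $\rho_i$ is a \emph{nontrivial} irreducible representation, the standard orthogonality relation with the trivial character gives $\sum_{g\in G}\chi_i(g)=0$, hence $\sum_{g\ne 1}\chi_i(g) = -\chi_i(1) = -d$. Summing over the $m$ representations,
\begin{align*}
\sum_{j'\ne j}\langle f_j, f_{j'}\rangle \;=\; \frac{1}{md}\sum_{i=1}^{m}(-d) \;=\; -1,
\end{align*}
which is independent of $j$. Taking absolute values and dividing by $n-1$ yields $\nu = \frac{1}{n-1}$.

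For the second inequality, I would apply the Welch bound (Theorem \ref{thm:framebound}) in the ambient dimension $md$: $\mu \ge \sqrt{(n-md)/(md(n-1))}$. The desired bound $\nu \le \mu/\sqrt{md}$, i.e.\ $\mu \ge \sqrt{md}/(n-1)$, is implied once
\begin{align*}
(n-md)(n-1) \;\ge\; (md)^2.
\end{align*}
Under the hypothesis $n \ge 2md$ we have $n-md \ge md$ and $n-1 \ge 2md-1 \ge md$ (whenever $md\ge 1$), so the product is at least $(md)^2$ and the claim follows.

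There is no real obstacle; the only ingredient beyond the inner-product formula (\ref{eqn:groupFTinnerprodstep2}) already established in the paper is the character orthogonality relation $\sum_g \chi_i(g) = 0$ for nontrivial irreducible $\rho_i$, which is a textbook fact. The only small bookkeeping point is confirming that $\M$ should be read with its columns normalized (since the statement specifies a unit-norm coherence $1/(n-1)$), which is handled by dividing by the common norm $d\sqrt{m}$.
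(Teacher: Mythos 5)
Your proposal is correct and follows essentially the same route as the paper's proof: normalize by the common column norm $\sqrt{md^2}$, use the inner-product formula together with character orthogonality against the trivial character to get the constant row sum $-1$ and hence $\nu = \frac{1}{n-1}$, then compare with the Welch bound, where your check that both factors of $(n-md)(n-1)$ exceed $md$ is a minor variant of the paper's chain $\sqrt{(n-md)(n-1)} \ge n-md \ge md$. No gaps.
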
 

\begin{proof} 
From equations (\ref{eqn:groupFTinnerprodstep2}) and (\ref{eqn:GFMcolnorm}), we have that after normalizing the columns of $\M$, the inner product between the $i^{th}$ and $j^{th}$ columns is 
\begin{align}
\frac{\sum_{t = 1}^m d \cdot \text{vec}(\rho_t(g_i))^* \text{vec}(\rho_t(g_j))}{m d^2} & = \frac{1}{md} \sum_{t = 1}^m  \chi_t(g_i^{-1} g_j). 
\end{align}
Then the average coherence of $\M$ (after normalizing the columns) becomes 
\begin{align} 
\nu = \frac{1}{n-1} \max_{i \in [n]} \left| \sum_{j \ne i} \langle f_i, f_j \rangle \right| & = \frac{1}{md(n-1)} \max_{i \in [n]} \left| \sum_{j \ne i} \sum_{t = 1}^m  \chi_t(g_i^{-1} g_j) \right| \\
& = \frac{1}{md(n-1)} \left| \sum_{g \ne 1} \sum_{t = 1}^m  \chi_t(g) \right| \\ 
& = \frac{1}{md(n-1)} \left| \sum_{t = 1}^m \left(\sum_{g \ne 1}  \chi_t(g)\right) \right|. 
\end{align} 
Now from basic character theory (see for example \cite{Serre}), we know that for any character $\chi_t$ of a \textit{nontrivial} irreducible representation, we have the relation 
\begin{align} 
\frac{1}{|G|} \sum_{g \in G} \chi_t(g) = 0. 
\end{align} 
This is due to the orthogonality of irreducible characters, and the above sum is simply the inner product between $\chi_t$ and the trivial character. But this equation gives us 
\begin{align} 
\sum_{g \ne 1} \chi_t(g) = -\chi_t(1) = -d,
\end{align} 
since $\chi_t(1)$ is the degree of the representation $\rho_t$. Thus, 
\begin{align} 
\nu & = \frac{1}{md(n-1)} \left| \sum_{t = 1}^m (-d) \right| = \frac{md}{md(n-1)} = \frac{1}{n-1}. 
\end{align} 
Now from the Welch bound, $\mu \ge \sqrt{\frac{n - md}{md(n-1)}}$. Thus, to show that $\nu \le \frac{\mu}{\sqrt{md}}$ it is sufficient to show that $\frac{1}{n-1} \le \frac{1}{\sqrt{md}} \sqrt{\frac{n - md}{md(n-1)}}$, or equivalently that 
\begin{align} 
md \le \sqrt{(n - md)(n-1)}. \label{eqn:strongcohthmeq1} 
\end{align}
But since $n-1 \ge n - md$, we have $\sqrt{(n - md)(n-1)} \ge n-md$, so (\ref{eqn:strongcohthmeq1}) is satisfied provided that $2md \le n$. 
\end{proof}

\cite{MixonBajwaCalderbank} explored the geometry of several types of frames to see when they satisfied the Coherence and Strong Coherence Properties. In particular, they stated the following theorem: 

\begin{theorem}[\cite{MixonBajwaCalderbank}] 
Let $F$ be an $n \times n$ discrete Fourier matrix, $F_{k \ell} = e^{2 \pi i k \ell/n}$, $k, \ell = 0, ..., n-1$. Then let $M$ be the submatrix formed by randomly selecting a subset of rows of $F$, each row independently selected with probability $\frac{m}{n}$, and then normalizing the columns. If $16 \log n \le m \le \frac{n}{3}$, then with probability exceeding $1 - 4n^{-1} - n^{-2}$ the worst-case coherence of $M$ satisfies $\mu_M \le \sqrt{\frac{118(n-m)\log n}{mn}}$. 
\label{thm:randomharmonicframes} 
\end{theorem}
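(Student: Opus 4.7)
The plan is to encode the random row selection by independent Bernoulli indicators, control every pairwise column inner product by a Bernstein-type concentration inequality, then union bound over pairs and combine with a Chernoff bound on the number of selected rows. Concretely, let $\delta_j \in \{0,1\}$, $j = 0, \ldots, n-1$, be independent with $\Pr(\delta_j = 1) = m/n$, and set $S = \{j : \delta_j = 1\}$, so $|S| = \sum_j \delta_j$ is the number of retained rows. The first step is a multiplicative Chernoff bound applied to $|S|$, whose mean is exactly $m$: for a suitable constant fraction (say $|S| \ge m/2$), the failure probability is at most $\exp(-cm)$, which by hypothesis $m \ge 16\log n$ can be driven below $n^{-2}$. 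This gives uniform control on the column norms of $M$.

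Next, fix distinct column indices $k$ and $\ell$ and let $t = \ell - k \not\equiv 0 \pmod n$ and $\omega = e^{2\pi i/n}$. The unnormalized inner product of the two columns is
\[
Z_{k,\ell} = \sum_{j=0}^{n-1} \delta_j\, \omega^{jt} = \sum_{j=0}^{n-1} \left(\delta_j - \tfrac{m}{n}\right) \omega^{jt},
\]
where the second equality follows from $\sum_j \omega^{jt} = 0$. Splitting into real and imaginary parts, each becomes a sum of independent bounded centered real variables: the summands satisfy $|(\delta_j - m/n)\cos(2\pi jt/n)| \le 1$ and likewise for $\sin$, and the total variance in each part is $\tfrac{m(n-m)}{n^2}\sum_j \cos^2(2\pi jt/n) = \tfrac{m(n-m)}{2n}$. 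Bernstein's inequality then yields
\[
\Pr\bigl(|\mathrm{Re}\, Z_{k,\ell}| > \tau \bigr) \le 2\exp\!\left(-\frac{\tau^2}{m(n-m)/n + 2\tau/3}\right),
\]
and analogously for the imaginary part. Choosing $\tau$ proportional to $\sqrt{m(n-m)\log n / n}$, the hypothesis $m \le n/3$ ensures the variance term dominates the linear correction $2\tau/3$ (because $\tau/\sigma^2 \lesssim \sqrt{\log n / m}$, which is small thanks to $m \ge 16\log n$), so Bernstein delivers the subgaussian bound $\Pr(|Z_{k,\ell}| > \sqrt{2}\tau) \le 4n^{-c}$ for a $c$ growing with the constant absorbed into $\tau$.

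A union bound over the $\binom{n}{2} < n^2/2$ pairs, with $c$ chosen large enough, keeps the total failure probability at most $4n^{-1}$; combined with the at-most-$n^{-2}$ Chernoff event $\{|S| < m/2\}$ this yields the stated overall confidence $1 - 4n^{-1} - n^{-2}$. On the good event, the normalized coherence is bounded by
\[
\mu_M = \max_{k \ne \ell} \frac{|Z_{k,\ell}|}{|S|} \le \frac{\sqrt{2}\,\tau}{m/2} \le \sqrt{\frac{118(n-m)\log n}{mn}},
\]
after tuning the absolute constant. The main obstacle is bookkeeping of constants: to land exactly on $118$ one must use a sharp form of Bernstein with explicit numerical constants, allot the failure budget carefully between the Chernoff event on $|S|$ and the two Bernstein events on $\mathrm{Re}\,Z$ and $\mathrm{Im}\,Z$, and verify that the hypotheses $m \ge 16\log n$ (controlling $|S|$ and keeping Bernstein subgaussian) and $m \le n/3$ (ensuring $n-m$ is comparable to $n$ and again keeping the linear correction negligible) are just strong enough to close the estimates.
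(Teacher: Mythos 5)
The paper does not actually prove this statement; it is quoted directly from \cite{MixonBajwaCalderbank} as an external benchmark, so there is no internal proof to compare against. Your sketch follows the same route as that source: Bernoulli row indicators, a Chernoff bound on the number $|S|$ of retained rows, a Bernstein bound on each centered inner product $\sum_j (\delta_j - m/n)\,\omega^{jt}$ split into real and imaginary parts, and a union bound; the overall architecture and the role of the two hypotheses $16\log n \le m$ and $m \le n/3$ are as you describe. Two points deserve care in a full write-up. First, the identity $\sum_j \cos^2(2\pi j t/n) = n/2$ fails when $n$ is even and $t = n/2$: there the real part carries the full variance $m(n-m)/n$ while the imaginary part vanishes, so the bound on $|Z_{k,\ell}|$ survives but the per-part variance constant must be stated as an upper bound rather than an equality. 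Second, normalizing by $|S| \ge m/2$ costs a factor of $2$ in the coherence (a factor of $4$ in the constant), which makes landing on $118$ very tight; you should instead use a multiplicative Chernoff bound of the form $|S| \ge (1-\epsilon)m$ with small $\epsilon$ (affordable since $m \ge 16\log n$), and you can union over the $n-1$ distinct differences $t = \ell - k$ rather than all $\binom{n}{2}$ pairs, since $Z_{k,\ell}$ depends only on $t$.
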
 

In Figure \ref{fig:harmoniccoherencebounds}, we compare this bound with the bound on our harmonic frames from Theorem \ref{thm:coherenceupperbound} and the Welch lower bound on coherence, in the regimes where $m = \frac{n-1}{3}$ (i.e. $\kappa = 3$) and when $m = n^{4/5}$. In both cases, we can see that the frames from our group-based construction are guaranteed to satisfy the Coherence and Strong Coherence Properties for a wider range of values of $n$ than random harmonic frames, as suggested by Theorem \ref{thm:randomharmonicframes}.

\begin{figure}[!h]
\centering
\begin{tabular}{cccc}
\subfigure[]{\includegraphics[height = 3.25in, width = 4.5in]{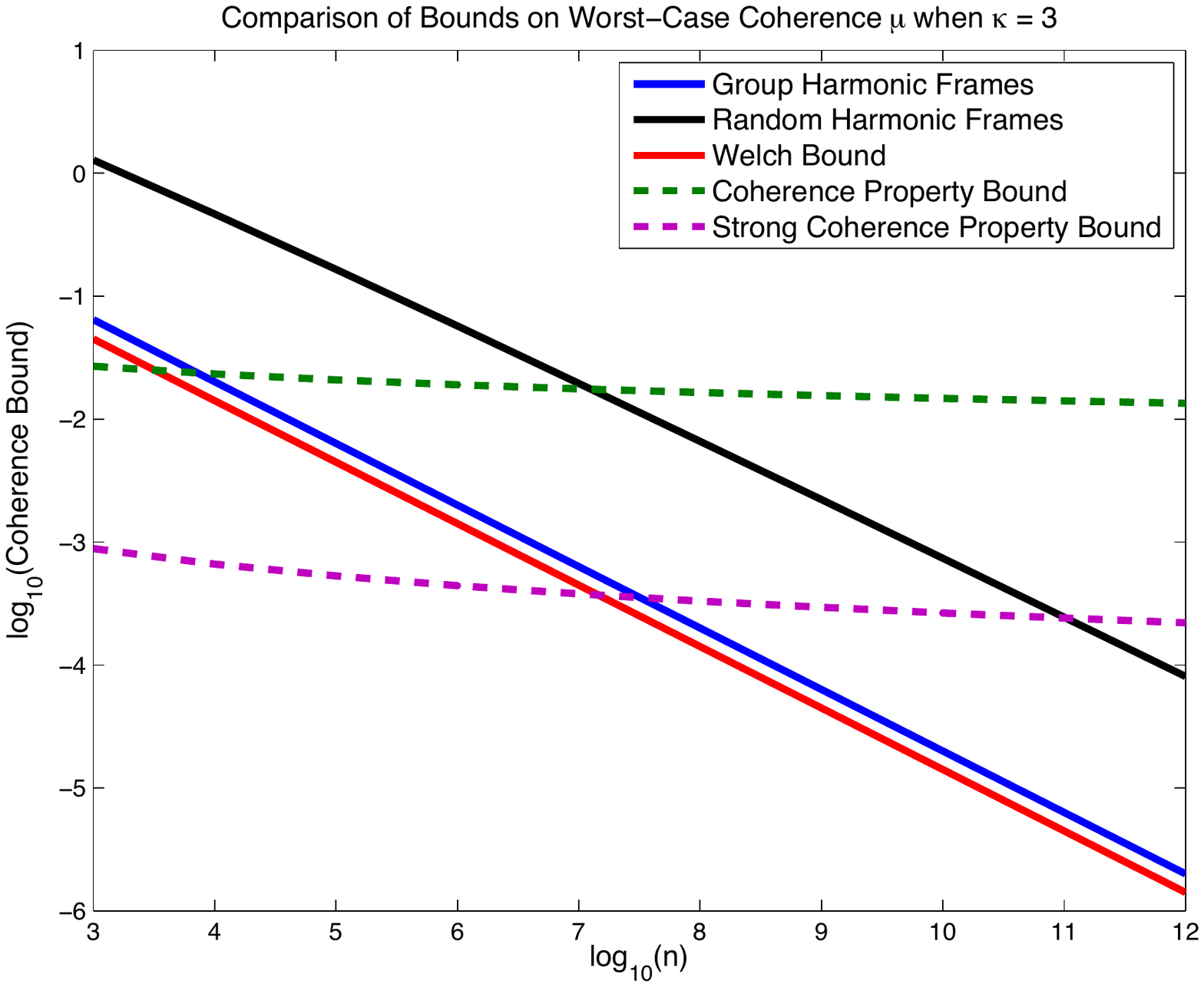} \label{subfig:kappa3}} \\ 
\subfigure[]{\includegraphics[height = 3.25in, width = 4.5in]{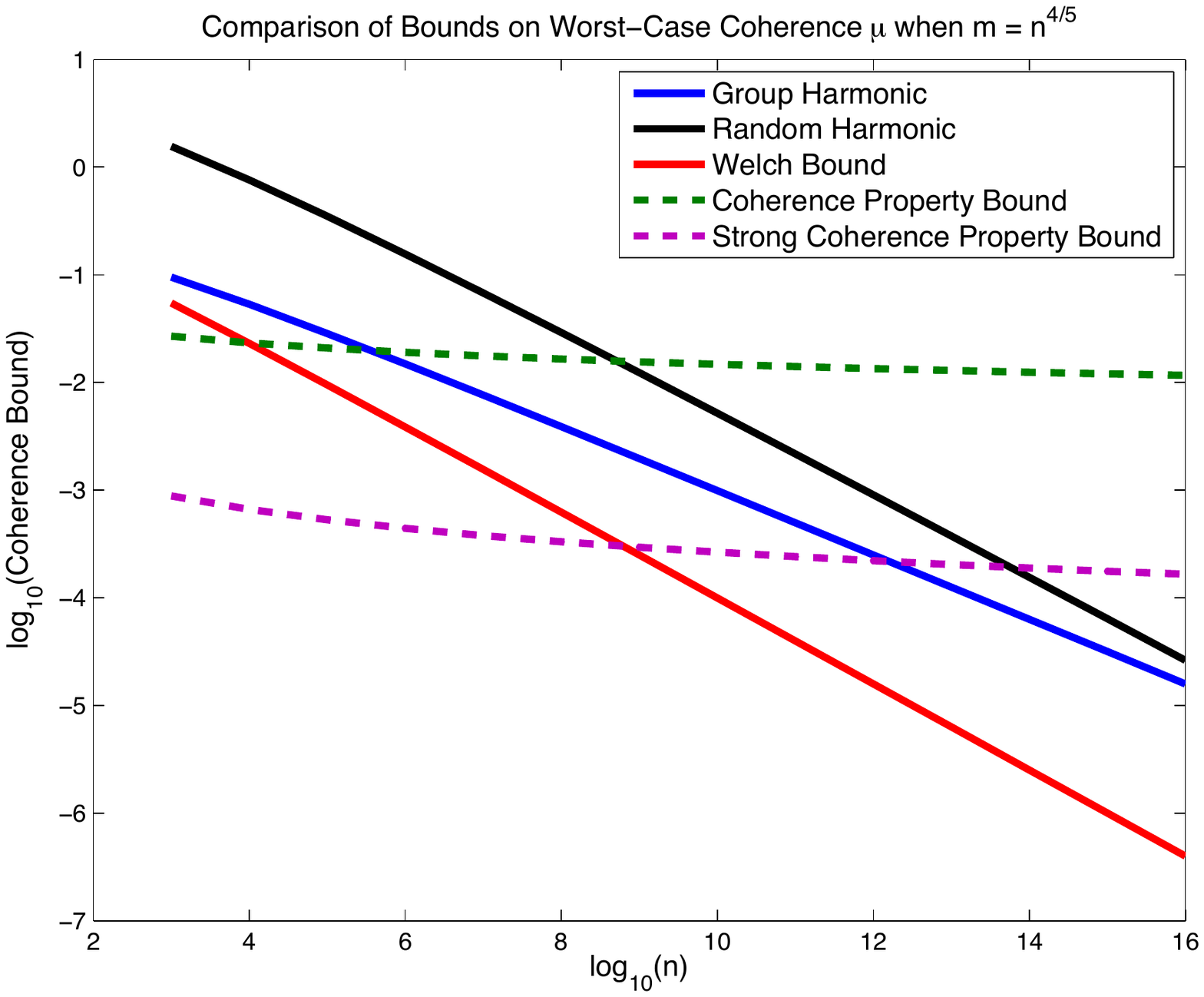} \label{subfig:meqn45}} 
\end{tabular}
\caption{Comparison of the upper bounds on the coherence of $m \times n$ harmonic frames using our group construction (from Theorem \ref{thm:coherenceupperbound}) versus choosing rows randomly from a DFT matrix (Theorem \ref{thm:randomharmonicframes}). In \ref{subfig:kappa3}, $\kappa = \frac{n-1}{m} = 3$, while in \ref{subfig:meqn45}, $m = n^{4/5}$. In both these regimes, the frames from our constructions are guaranteed to satisfy both the Coherence Property and the Strong Coherence Property for smaller dimensions than randomly chosen harmonic frames. }
\label{fig:harmoniccoherencebounds}
\end{figure}

\section{Conclusion} 
In this paper, we have generalized our methods from \cite{Thill_GroupMatrixCoherence} to yield a way to select rows of the group Fourier matrix of a finite group $G$ to produce frames with low coherence. 
By choosing the rows corresponding to the image of a representation under a subgroup of $Aut(G)$, we can reduce the number of distinct inner product values which arise between our frame elements. 
By exploiting the tightness of the resulting frames, we identified cases in which the coherence comes very close to the Welch lower bound. 

We have demonstrated that our method is particularly effective when $G$ is a subgroup or quotient of a group of matrices with entries in a finite field. 
This is a consequence of the manner in which the field automorphisms permute the elements of $G$. 
It is certainly possible that other groups of automorphisms of $G$ can lead to even better coherence when applying our method, though these remain to be explored. 

Furthermore, we emphasize that using the character table of $G$ to identify suitable representations to use in our frame allows us to avoid dealing with the explicit forms of the matrices involved in the representations. 
These matrices are often quite large in dimension and tedious to construct, particularly in the case of the special linear groups we examined in Section \ref{sec:inducedreps}. 
While exploiting the character table makes coherence calculations relatively painless, however, it is ultimately necessary to use the representation matrices to construct the actual frame vectors. 
It is desirable to find a class of groups with uncomplicated representations that allow us to build low-coherence frames in a wide variety of dimensions.

\begin{appendices} 

\section{Universal Upper Bound On Our Frame Coherence: Proof of Theorems \ref{thm:coherenceupperbound},  \ref{thm:coherenceupperboundmodd}, and \ref{thm:groupframebound} } 

\label{sec:generalkappa}

In this section, we return to the framework of Theorems \ref{thm:coherenceupperbound},  \ref{thm:coherenceupperboundmodd}, and \ref{thm:groupframebound}. Let $p$ be a prime and $r$ a a positive integer, and set our group $G$ (in Theorem \ref{thm:generalizedfourierframebound}) to be the finite field $\F_{p^r}$. 

Our frame matrix $\M$ from (\ref{eqn:Mfieldmat}) will take the form 
\begin{align} \M  = \begin{bmatrix} \omega^{\Tr{a_1x_1}} & \omega^{\Tr{a_1x_2}} & \hdots & \omega^{\Tr{a_1 x_{n}}} \\ 
						    \vdots & \vdots & \vdots & \ddots & \vdots \\ 
						    \omega^{\Tr{a_m x_1}} & \omega^{\Tr{a_m x_2}} & \hdots & \omega^{\Tr{a_m x_{n}}}\end{bmatrix}, 
\label{eqn:Mfieldmat2} 
\end{align} 
where $\omega = e^\frac{2 \pi i}{p}$ and we have expressed the elements of $\F_{p^r}$ as $\{x_i\}_{i = 1}^{n}$. In terms of powers of $x$, we may relabel these elements as $x_1 = 0$, and $x_i = x^{i-1}$, $i = 2, ..., n = p^r$. Note that with this relabeling, the first column of $\M$ is all 1's. 

As we commented before Equation (\ref{eqn:fieldinnerprod}), if $x_j - x_i = x_\ell$, the inner product between the $i^{th}$ and $j^{th}$ columns is 
\begin{align} 
\sum_{a_t} \left(\omega^{\Tr{a_t x_i}} \right)^* \left(\omega^{\Tr{a_t x_j} } \right) & = \sum_{a_t} \omega^{\Tr{a_t (x_j - x_i)}} \\
& = \sum_{a_t} \omega^{\Tr{a_t x_\ell} }. 
\label{eqn:fieldinnerprod2} 
\end{align} 

We will be making extensive use of the sums in (\ref{eqn:fieldinnerprod2}) in this section, so we will make the following definition: 

\begin{definition} 
For any $z \in \F_{p^r}$ and $A$ a subgroup of $\F_{p^r}^\times$, we will define $c_z$ to be the normalized inner product sum corresponding to $z$, that is, 
\begin{align} 
c_z = \frac{1}{m} \sum_{a \in A} \omega^{\Tr{a z}}, 
\end{align} 
where $\omega = e^{2 \pi i / p}$. 
\end{definition}

The following property of the values $c_z$ is simple, but worth establishing: 
\begin{lemma} 
For any $z \in \F_{p^r}$, we have $c_z^* = c_{-z}$. 
\end{lemma}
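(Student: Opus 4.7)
The plan is to verify the identity by direct manipulation, using only the fact that $\omega = e^{2\pi i/p}$ is a root of unity of prime order $p$ and that the trace map $\mathrm{Tr}\colon \F_{p^r} \to \F_p$ is additive.

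First I would take the complex conjugate of $c_z$ and pass it inside the finite sum:
\begin{align}
c_z^* = \frac{1}{m} \sum_{a \in A} \left(\omega^{\mathrm{Tr}(az)}\right)^* = \frac{1}{m} \sum_{a \in A} \omega^{-\mathrm{Tr}(az)},
\end{align}
where I have used that $|\omega| = 1$, so $(\omega^k)^* = \omega^{-k}$ for any integer $k$. Note also that the exponent $\mathrm{Tr}(az) \in \F_p$ is well-defined modulo $p$, which is consistent with $\omega$ being a $p$-th root of unity.

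Next I would use additivity of the trace, which was noted earlier in the paper (the trace is the sum of Galois automorphisms fixing $\F_p$, hence $\mathbb{F}_p$-linear, and in particular $\mathrm{Tr}(-y) = -\mathrm{Tr}(y)$ for all $y \in \F_{p^r}$). Applying this with $y = az$ gives $-\mathrm{Tr}(az) = \mathrm{Tr}(-az) = \mathrm{Tr}(a(-z))$, so
\begin{align}
c_z^* = \frac{1}{m} \sum_{a \in A} \omega^{\mathrm{Tr}(a(-z))} = c_{-z},
\end{align}
which is precisely the claim. There is no real obstacle here; the only subtlety worth stating explicitly is the additivity of the field trace, which is what lets us move the minus sign from the conjugated exponent into the argument $z$.
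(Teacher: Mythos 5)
Your proof is correct and follows essentially the same route as the paper's: conjugate the sum term by term, use $(\omega^k)^* = \omega^{-k}$, and identify the result as $c_{-z}$. The only difference is that you make explicit the additivity step $-\mathrm{Tr}(az) = \mathrm{Tr}(a(-z))$, which the paper leaves implicit.
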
 

\begin{proof} 
Expanding $c_z$ as a sum, we have 
\begin{align} 
(c_z)^* & = \left( \frac{1}{m} \sum_{a \in A} \omega^{\Tr{z a}} \right)^* \\ 
& = \frac{1}{m} \sum_{a \in A} \omega^{- \Tr{z a}} \\ 
& = c_{-z}.
\end{align}  
\end{proof}

Recall that the set of nonzero field elements $\F_{p^r}^\times$ is a cyclic group under multiplication, so let $x$ be a multiplicative generator. The elements of $\F_{p^r}$ can now be expressed as $\{0, 1, x, ..., x^{p^r - 1}\}$. The inner product corresponding to $0$ is simply $c_0 = 1$, which arises only when taking the inner product of a frame element with itself. The nontrivial inner products are thus $c_{x^i}$, for $i = 0, ..., p^r-1$. 

We point out that if $A$ is a size-$m$ multiplicative subgroup, it is unique (since $\F_{p^r}$ is cyclic) and is a cyclic group generated by $x^{\kappa}$, where $\kappa = \frac{p^r-1}{m}$.  
The cosets of $A$ are $A, xA, ..., x^{\kappa - 1} A$. 
One interesting observation is that elements in the same coset of $A$ give rise to the same inner product value: 

\begin{lemma} 
If $z$ is in the coset $x^i A$, then $c_{z} = c_{x^i}$. 
\label{lem:xtotheicosets} 
\end{lemma}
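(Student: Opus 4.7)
The plan is to use the fact that $A$ is a multiplicative group and that the map $a \mapsto a a_0$ is a bijection of $A$ onto itself for any fixed $a_0 \in A$. Since $z \in x^i A$ by hypothesis, I would first write $z = x^i a_0$ for some $a_0 \in A$. Substituting this into the definition of $c_z$ gives
\begin{align}
c_z = \frac{1}{m} \sum_{a \in A} \omega^{\Tr{a x^i a_0}}.
\end{align}

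Next I would reindex the sum by setting $a' = a a_0$. Since $a_0 \in A$ and $A$ is closed under multiplication and inverses, left-multiplication by $a_0$ permutes $A$, so $a'$ ranges over all of $A$ as $a$ does. This turns the sum into
\begin{align}
c_z = \frac{1}{m} \sum_{a' \in A} \omega^{\Tr{a' x^i}} = c_{x^i},
\end{align}
which is exactly the claim. There is really no obstacle here — the only subtlety is to notice that the trace $\Tr{\cdot}$ does not need to be pulled apart or manipulated at all, since the rearrangement happens inside its argument via the group structure of $A$. This lemma is essentially the observation (used implicitly in the cyclic case of Theorem \ref{thm:cosetinnerproducts}) that the character sum depends on $z$ only through its coset in $\F_{p^r}^\times / A$, now made explicit for the finite-field generalization.
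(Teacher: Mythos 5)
Your proof is correct and is essentially identical to the paper's: both write $z = x^i a_0$ and use the fact that multiplication by the fixed element $a_0 \in A$ permutes the elements of $A$, so the character sum is unchanged. No issues.
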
 

\begin{proof} 
Write $z = x^i a_z$, for some $a_z \in A$. Then, 
\begin{align} 
c_z & = \sum_{a \in A} \omega^{\Tr{x^i a_z \cdot a}} \\ 
& = \sum_{a \in A} \omega^{\Tr{x^i a}}, 
\end{align} 
where $\omega = e^{2 \pi i / p}$ and the last equality follows from the fact that since $A$ is a group, multiplication by $a_z$ simply permutes its elements. 
\end{proof} 

In light of Lemma \ref{lem:xtotheicosets}, we see concretely that there is indeed only a single nontrivial inner product value for each coset of $A$, and each arises with the same multiplicity (because each coset has the same number of elements). 
Furthermore, since $\{1, x, ..., x^{\kappa-1}\}$ is a set of representatives for each of the cosets of $A$, we only need to be concerned with the values $c_{x^i}$, $i = 0, 1, ..., \kappa-1$. 
The largest absolute value of these will be the coherence.

\begin{lemma} 
The values $c_1, c_x, ..., c_{x^{\kappa}-1}$ satisfy the equation
\begin{align} 
1 + m c_1 + m c_x + ... + m c_{x^{\kappa}-1} = 0, \label{eqn:ceqn} 
\end{align}
where $m$ is the size of $A$. 
\label{lem:csumto0}
\end{lemma}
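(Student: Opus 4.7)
The plan is to compute the sum $\sum_{i=0}^{\kappa-1} m c_{x^i}$ directly by swapping the order of summation, and to show it equals $-1$.

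First I would expand
\begin{align}
\sum_{i=0}^{\kappa-1} m c_{x^i} = \sum_{i=0}^{\kappa-1} \sum_{a \in A} \omega^{\Tr{a x^i}}.
\end{align}
The next step is to recognize that as $i$ ranges over $\{0, 1, \ldots, \kappa-1\}$ and $a$ ranges over $A$, the product $a x^i$ ranges over all of $\F_{p^r}^\times$ exactly once. This is because $\{x^i\}_{i=0}^{\kappa-1}$ is a complete set of representatives for the $\kappa$ cosets of $A$ in $\F_{p^r}^\times$, so $\F_{p^r}^\times$ is the disjoint union $\bigcup_i x^i A$. Therefore
\begin{align}
\sum_{i=0}^{\kappa-1} m c_{x^i} = \sum_{y \in \F_{p^r}^\times} \omega^{\Tr{y}} = \left( \sum_{y \in \F_{p^r}} \omega^{\Tr{y}} \right) - 1,
\end{align}
where we've subtracted off the $y=0$ contribution, which is $\omega^0 = 1$.

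The main step, and probably the only nonroutine part, is showing that $\sum_{y \in \F_{p^r}} \omega^{\Tr{y}} = 0$. The key observation here is that $\Tr : \F_{p^r} \to \F_p$ is a surjective $\F_p$-linear map of additive groups, so its kernel has size $p^{r-1}$ and each value $c \in \F_p$ is attained on exactly $p^{r-1}$ elements of $\F_{p^r}$. Consequently
\begin{align}
\sum_{y \in \F_{p^r}} \omega^{\Tr{y}} = p^{r-1} \sum_{c \in \F_p} \omega^c = 0,
\end{align}
using the fact that $\omega$ is a primitive $p$-th root of unity so $\sum_{c \in \F_p} \omega^c = 0$. Combining, $\sum_{i=0}^{\kappa-1} m c_{x^i} = -1$, which rearranges to the claimed identity.

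The hard part, if any, is justifying the surjectivity of the trace; the remainder is bookkeeping. An alternative, slightly cleaner route is to note that $\sum_{y \in \F_{p^r}} \omega^{\Tr{y}}$ is the value of the additive character $\rho$ from (\ref{eqn:rhoexptrace}) summed over all of $G = \F_{p^r}$, which vanishes by orthogonality since $\rho$ is a nontrivial irreducible character of the additive group (this is already invoked later in the paper, e.g.\ in Section~\ref{sec:strongcoherence}). Either justification suffices.
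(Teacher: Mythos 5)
Your proof is correct and rests on the same key fact as the paper's: the vanishing of the nontrivial additive character sum $\sum_{y \in \F_{p^r}} \omega^{\Tr{y}} = 0$ (the paper phrases this as the regular-representation character $\sum_{z}\omega^{\Tr{zy}}$ vanishing for $y \ne 0$, which is equivalent after the substitution $z \mapsto zy$). The only cosmetic difference is that you collapse the double sum over $(i,a)$ into a single sum over $\F_{p^r}^\times$ via the coset partition and subtract the $y=0$ term at the end, whereas the paper absorbs the $1 = c_0$ into $\sum_{z \in \F_{p^r}} c_z$ and swaps the order of summation.
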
 

\begin{proof} 
If we expand the sum in (\ref{eqn:ceqn}) using the fact that each of the $m$ elements $z \in x^d A$ satisfies $c_z = c_{x^d}$, and that $c_0 = 1$, we get 
\begin{align} 
1 + \sum_{d = 1}^{\kappa - 1} m c_{x^d} & = \sum_{z \in \F_{p^r}} c_z \\ 
& = \sum_{a \in A} \sum_{z \in \F_{p^r}} \omega^{\Tr{z a}}, \label{eqn:internalsumisregularrep} 
\end{align}  
where $\omega = e^\frac{2 \pi i}{p}$. 
But the function $\chi_{reg}(y) := \sum_{z \in \F_{p^r}} \omega^{\Tr{z y}}$ which arises as the internal sum in (\ref{eqn:internalsumisregularrep}) is the well-known character of the ``regular representation'' of $\F_{p^r}$, which is equal to $p^r$ if $y=0$ and $0$ otherwise \cite{Serre}. Since no elements of $A$ are 0, we see that (\ref{eqn:internalsumisregularrep}) sums to zero. 
\end{proof}

Our following work will involve taking many sums and products of field elements, and determining in which coset of $A$ they lie. While it is in general easy to determine in which coset a product lies (for example, if $z_1 \in x^{i_1} A$ and $z_2 \in x^{i_2} A$, then $z_1 z_2 \in x^{i_1 + i_2} A$), it is often not obvious in which coset a sum lies. To get around this problem, we will make use of the following quantities: 

\begin{definition} 
Given two cosets $x_1 A$ and $x_2 A$, we define the \textit{translation degree} from $x_1 A$ to $x_2 A$ to be the quantity 
\begin{align} 
N_{x_1 A, x_2 A} = \# \{z \in x_1 A~|~ 1 + z \in x_2 A\} = |1+x_1 A \cap x_2 A|. 
\end{align} 
Likewise, we define $N_{x_1 A, 0}$ and $N_{0, x_2 A}$ (the translation degrees from $x_1 A$ to $0$ and from $0$ to $x_2 A$, respectively) to be 
\begin{align} 
N_{x_1 A, 0} = |\{-1\} \cap x_1 A|, \\ 
N_{0, x_2 A} = |\{1\} \cap x_2 A|.  
\end{align} 
\end{definition}

We will quickly point out a simple property of the translation degrees: 

\begin{lemma} 
Set $H = \F_{p^r}^\times$. 
For any coset $x_0 A$, we have $$N_{x_0K, 0} + \sum_{x_i K \in H/A} N_{x_0 A, x_i A} = |x_0 A|. $$ 
In particular, if $-1 \in x_0 A$, this equation reduces to $$1 + N_{x_0 A, A} + N_{x_0 A, xA} + N_{x_0 A, x^2 A} + ... + N_{x_0 A, x^{\kappa-1} A} = m, $$ and if $-1 \notin x_0 A$, this equation becomes $$N_{x_0 A, A} + N_{x_0 A, xA} + N_{x_0 A, x^2 A} + ... + N_{x_0 A, x^{\kappa-1} A} = m. $$ 
\label{lem:sumoftransdegs} 
\end{lemma}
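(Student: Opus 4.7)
The plan is to prove this as a straightforward counting/partition argument. For each element $z \in x_0 A$, consider the translate $1 + z \in \F_{p^r}$. Since $\F_{p^r}$ decomposes as the disjoint union $\{0\} \sqcup A \sqcup xA \sqcup x^2 A \sqcup \cdots \sqcup x^{\kappa-1}A$, the element $1+z$ lies in exactly one of these parts. So I would partition $x_0 A$ according to where $1+z$ lands and count.

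More concretely, for each coset $x_i A$, by definition $N_{x_0 A, x_i A}$ counts the number of $z \in x_0 A$ with $1+z \in x_i A$. Likewise, $N_{x_0 A, 0}$ counts the number of $z \in x_0 A$ with $1 + z = 0$, i.e.\ with $z = -1$. Since every $z \in x_0 A$ contributes to exactly one of these counts (the one corresponding to the part of $\F_{p^r}$ containing $1+z$), summing gives
\begin{align}
N_{x_0 A, 0} + \sum_{x_i A \in H/A} N_{x_0 A, x_i A} &= |x_0 A| = m,
\end{align}
which is the first claim.

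For the two particular cases, I would simply observe that $N_{x_0 A, 0} = |\{-1\} \cap x_0 A|$ is either $1$ or $0$ depending on whether $-1 \in x_0 A$. Substituting these two values into the general identity and expanding the sum $\sum_{x_i A} N_{x_0 A, x_i A}$ as $N_{x_0 A, A} + N_{x_0 A, xA} + \cdots + N_{x_0 A, x^{\kappa-1}A}$ gives the two stated equations.

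There is essentially no obstacle here: the lemma is a direct bookkeeping consequence of the fact that the cosets of $A$ together with $\{0\}$ partition $\F_{p^r}$, so the only care needed is verifying that the edge case $1 + z = 0$ is correctly tracked by $N_{x_0 A, 0}$, which is immediate from its definition.
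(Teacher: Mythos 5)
Your proof is correct and is essentially identical to the paper's argument: both rest on the observation that for each $z \in x_0 A$ the element $1+z$ is either $0$ or lies in exactly one coset $x_i A$, so the counts partition the $m$ elements of $x_0 A$. The special cases follow, as you note, from $N_{x_0 A, 0} = |\{-1\} \cap x_0 A| \in \{0,1\}$.
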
 

\begin{proof} 
This simply follows from the observation that any of the $m$ elements of $x_0 A$, when added to 1, must either be equal to 0 or lie in exactly one of the cosets $x_i A \in H/A$. 
\end{proof}

The following lemma will be instrumental in bounding these inner product values. 

\begin{lemma} Let $\cc = [c_1, c_x, c_{x^2}, ..., c_{x^{\kappa-1}}]^T$, and let $F$ be the scaled $\kappa \times \kappa$ Fourier matrix with entries defined by $F_{ij} = \gamma^{(i - 1) (j- 1)}$, where $\gamma = e^{2 \pi i/\kappa}$. 
Then, if we let $\w := [w_1, ..., w_\kappa]^T = F \cc$ so that $w_{d + 1} = \sum_{t = 0}^{\kappa-1} \gamma^{t d} c_{x^{t}}$ for $d = 0, 1, ..., \kappa-1$, we have 

\begin{align} 
w_1 &= - \frac{1}{m}, &  \\ 
|w_i| &= \sqrt{\frac{1}{m} \left(\kappa + \frac{1}{m} \right)}, & i \ne 1. 
\end{align} 

\label{lem:wvec} 
\end{lemma}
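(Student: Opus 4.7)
The plan is to establish the two claims separately. The first, $w_1 = -1/m$, is immediate from the definition: $w_1 = \sum_{t=0}^{\kappa-1} c_{x^t}$, and Lemma~\ref{lem:csumto0} yields $1 + m\sum_t c_{x^t} = 0$, hence $w_1 = -1/m$. For the remaining coordinates, first observe that the claimed magnitude satisfies
\begin{equation*}
\frac{1}{m}\left(\kappa + \frac{1}{m}\right) = \frac{\kappa m + 1}{m^2} = \frac{n}{m^2},
\end{equation*}
since $n = \kappa m + 1$. So the main task reduces to showing $|w_{d+1}|^2 = n/m^2$ for every $d \in \{1, \ldots, \kappa-1\}$.

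To do this, I would reinterpret each $w_{d+1}$ as a (scaled) Gauss sum. Since $A = \langle x^\kappa \rangle$ and $\{x^t\}_{t=0}^{\kappa-1}$ is a complete set of coset representatives for $\F_{p^r}^\times / A$, the map $(a, t) \mapsto a x^t$ is a bijection from $A \times \{0, \ldots, \kappa-1\}$ onto $\F_{p^r}^\times$. Define the multiplicative character $\tilde{\psi} : \F_{p^r}^\times \to \C^\times$ by $\tilde{\psi}(x) = \gamma^d$; it is well-defined and factors through $\F_{p^r}^\times / A$ because $\tilde{\psi}(x^\kappa) = \gamma^{\kappa d} = 1$. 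Expanding $c_{x^t} = \frac{1}{m}\sum_{a \in A} \omega^{\Tr{ax^t}}$ inside $w_{d+1} = \sum_t \gamma^{td} c_{x^t}$ and absorbing the phase $\gamma^{td}$ into the character value $\tilde{\psi}(a x^t) = \gamma^{td}$, one arrives at
\begin{equation*}
w_{d+1} = \frac{1}{m} \sum_{z \in \F_{p^r}^\times} \tilde{\psi}(z)\, \omega^{\Tr{z}}.
\end{equation*}

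To compute $|w_{d+1}|^2$, I expand the square and make the multiplicative substitution $y = z/z'$: for each fixed $y$, the inner sum over $z'$ becomes $\sum_{z' \in \F_{p^r}^\times} \omega^{\Tr{z'(y-1)}}$, which evaluates to $n - 1$ when $y = 1$ and to $-1$ otherwise (using that $\sum_{u \in \F_{p^r}} \omega^{\Tr{u}} = 0$, the same regular-representation identity used in Lemma~\ref{lem:csumto0}). Collecting terms yields
\begin{equation*}
|w_{d+1}|^2 = \frac{1}{m^2}\Bigl[n - \sum_{y \in \F_{p^r}^\times} \tilde{\psi}(y)\Bigr].
\end{equation*}
For $d \in \{1, \ldots, \kappa-1\}$ the character $\tilde{\psi}$ is nontrivial on $\F_{p^r}^\times$ (since $\gamma = e^{2\pi i/\kappa}$ is a primitive $\kappa$-th root of unity and $d \not\equiv 0 \pmod{\kappa}$), so by orthogonality the remaining character sum vanishes, giving $|w_{d+1}|^2 = n/m^2$ as required.

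The main obstacle is not any single deep step but keeping the reindexing consistent: one must verify that the bijection $(a,t) \leftrightarrow z$ genuinely combines the $\gamma^{td}$ weights with the character values $\tilde{\psi}(ax^t)$ in the way claimed, and that $\tilde{\psi}$ is indeed nontrivial for every $d \not\equiv 0 \pmod{\kappa}$. Once these reformulations are in hand, the magnitude computation reduces to two elementary character-theoretic facts---orthogonality of multiplicative characters on $\F_{p^r}^\times$ and the vanishing of $\sum_{u \in \F_{p^r}} \omega^{\Tr{u}}$---so no nontrivial Gauss-sum estimates are needed.
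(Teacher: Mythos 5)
Your proof is correct, and it takes a genuinely different route from the paper's. You identify $w_{d+1}$ as $\frac{1}{m}$ times the Gauss sum $\sum_{z \in \F_{p^r}^\times} \tilde{\psi}(z)\,\omega^{\Tr{z}}$ for the multiplicative character $\tilde{\psi}$ of order dividing $\kappa$ determined by $\tilde{\psi}(x) = \gamma^d$; the reindexing $(a,t) \mapsto ax^t$ is legitimate because $\tilde{\psi}$ is constant on cosets of $A = \langle x^\kappa \rangle$ and equals $\gamma^{td}$ on $x^t A$, and $\tilde{\psi}$ is indeed nontrivial precisely when $d \not\equiv 0 \pmod{\kappa}$. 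From there the classical computation $|G(\tilde{\psi})|^2 = n$ (expand, substitute $y = z/z'$, use that $\sum_{u \in \F_{p^r}} \omega^{\Tr{u}} = 0$ and orthogonality of the nontrivial $\tilde{\psi}$ against the trivial character) gives $|w_{d+1}|^2 = n/m^2 = \frac{1}{m}(\kappa + \frac{1}{m})$, and the $d=0$ case follows from Lemma \ref{lem:csumto0} exactly as you say. The paper instead expands $|w_{d+1}|^2$ as a double sum over $A \times A$ and organizes the cross terms using the translation degrees $N_{x_1A, x_2A} = |(1 + x_1A) \cap x_2A|$, then collapses everything with Lemma \ref{lem:csumto0} and Lemma \ref{lem:sumoftransdegs}. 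The two arguments use the same two underlying facts (vanishing of the full additive character sum, and a counting identity that in your language is character orthogonality), but your Gauss-sum framing is shorter and makes it transparent why the answer is exactly $n/m^2$ independent of $d$, whereas the paper's combinatorial bookkeeping with translation degrees is more self-contained relative to the machinery it has already set up and is reused in the subsequent, more delicate arguments of the appendix.
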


\begin{proof} 
For any $d = 0, 1, ..., \kappa-1$, we have 

\begin{align} 
|w_{d+1}|^2 & = \left(\sum_{t = 0}^{\kappa-1} \gamma^{t d} c_{x^{t}} \right) \left( \sum_{\ell = 0}^{\kappa-1} \gamma^{\ell d} c_{x^{\ell}} \right)^* \\ 
& = \left(\sum_{t = 0}^{\kappa-1} \gamma^{t d} c_{x^{t}} \right) \left( \sum_{\ell = 0}^{\kappa-1} \gamma^{-\ell d} c_{-x^{\ell}} \right) \\ 
& = \sum_{t = 0}^{\kappa-1} \sum_{\ell = 0}^{\kappa-1} \gamma^{(t - \ell) d} c_{x^t} c_{-x^\ell}\\
& = \sum_{s = 0}^{\kappa-1} \sum_{\ell = 0}^{\kappa-1} \gamma^{s d} c_{x^{s+\ell}} c_{-x^\ell} \\ 
& = \sum_{s = 0}^{\kappa-1}  \gamma^{s d} \sum_{\ell = 0}^{\kappa-1} c_{x^{s+\ell}} c_{- x^\ell}  \label{eqn:wnormsq} 
\end{align} 

Now, we note that 

\begin{align} 
m^2 c_{x^{s + \ell}} c_{- x^{\ell}} & = \left(\sum_{a \in A} \omega^{\Tr{x^{s+ \ell} a}}\right) \left(\sum_{a' \in A} \omega^{\Tr{-x^\ell a'}} \right) \\ 
& = \sum_{a, a' \in A} \omega^{\Tr{x^{s+ \ell} a - x^{\ell} a'}} \\ 
& = \sum_{a, a' \in A} \omega^{\Tr{-x^\ell a' (1 - x^{s} a a'^{-1})}} \\ 
& = \sum_{a', a'' \in A} \omega^{\Tr{-x^\ell a' (1 - x^{s} a'')}} \\ 
& = \sum_{t = 0}^{\kappa - 1} \sum_{\substack{\{a', a'' \in A ~: \\ 1 - x^s a'' \in x^t A\}}} \omega^{\Tr{-x^\ell a' (1 - x^{s} a'')}} \\ 
\nonumber & \hspace{40pt}+ \sum_{\substack{\{a', a'' \in A ~: \\ 1 - x^s a'' = 0\}}} 1 \\ 
& = \sum_{t = 0}^{\kappa-1} N_{-x^s A, x^t A} \left( \sum_{a''' \in A} \omega^{\Tr{-x^t x^\ell a'''}} \right) \\ 
& \hspace{40pt} + \sum_{a' \in A}  N_{-x^s A, 0} \\ 
& = m \sum_{t = 0}^{\kappa-1} N_{-x^s A, x^t A} \cdot c_{-x^{t +\ell}} + m N_{-x^s A, 0} 
\end{align} 

Now we can substitute this into (\ref{eqn:wnormsq}), and we obtain: 

\begin{align} 
|w_{d+1}|^2 & = \sum_{s = 0}^{\kappa-1}  \gamma^{s d} \sum_{\ell = 0}^{\kappa-1} \left( \frac{1}{m} \left(\sum_{t = 0}^{\kappa-1} N_{-x^s A, x^t A} \cdot c_{-x^{t +\ell}} + N_{-x^s A, 0} \right) \right) \\ 
& = \sum_{s = 0}^{\kappa-1}  \gamma^{s d}   \frac{1}{m} \left(\sum_{t = 0}^{\kappa-1} N_{-x^s A, x^t A} \sum_{\ell = 0}^{\kappa-1} c_{-x^{t +\ell}} + \sum_{\ell = 0}^{\kappa-1} N_{-x^s A, 0} \right) \\ 
& = \sum_{s = 0}^{\kappa-1}  \gamma^{s d}   \frac{1}{m} \left(\sum_{t = 0}^{\kappa-1} N_{-x^s A, x^t A} \left( -\frac{1}{m} \right) + \kappa N_{-x^s A, 0} \right) \label{eqn:wnormsqeq2} \\ 
& = - \frac{1}{m^2} \sum_{s = 0}^{\kappa-1}  \gamma^{s d} \sum_{t = 0}^{\kappa-1} N_{-x^s A, x^t A}  + \frac{1}{m} \sum_{s = 0}^{\kappa-1}  \gamma^{s d} \kappa N_{-x^s A, 0} \label{eqn:wnormsqeq3} \\ 
& = - \frac{1}{m^2} \sum_{s = 0}^{\kappa-1}  \gamma^{s d} \left(m - N_{-x^s A, 0} \right) + \frac{\kappa}{m} \sum_{s = 0}^{\kappa-1}  \gamma^{s d} N_{-x^s A, 0} \label{eqn:wnormsqeq4}
\end{align} 

\noindent
where (\ref{eqn:wnormsqeq2}) follows from Equation (\ref{eqn:ceqn}), and (\ref{eqn:wnormsqeq4}) follows from Lemma \ref{lem:sumoftransdegs}. Note that $N_{-x^s A, 0}$ is equal to $1$ if $s = 0$ and equal to 0 otherwise. Thus, (\ref{eqn:wnormsqeq4}) becomes 
\begin{align} 
|w_{d+1}|^2 & = -\frac{1}{m^2} \left( (m-1) + m \sum_{s = 1}^{\kappa-1}  \gamma^{s d} \right) + \frac{\kappa}{m}.  \label{eqn:wnormsqeq5} 
\end{align} 

Now, if $d \ne 0$, then $\sum_{s = 1}^{\kappa-1}  \gamma^{s d} = -1$, and after rearranging terms we obtain 
\begin{align} 
|w_{d+1}|^2 & = \frac{1}{m} \left( \kappa + \frac{1}{m} \right).  
\end{align} 

If $d = 0$, then $\sum_{s = 1}^{\kappa-1}  \gamma^{s d} = m$, and (\ref{eqn:wnormsqeq5}) gives us $|w_{1}|^2 = \frac{1}{m^2}$. In fact, in this case, we can compute $w_1$ explicitly, since 
\begin{align} 
w_1 = \sum_{t = 0}^{\kappa-1} c_{x^{t}} = -\frac{1}{m}. 
\end{align} 

\end{proof}

We can now use Lemma \ref{lem:wvec} to bound the coherence of our frames constructed from finite fields. 

\begin{theorem} 
Let $G = \F_{p^r}$ be the finite field with elements $\{x_1, ..., x_{p^r}\}$, and $H = \F_{p^r}^\times$ the (cyclic) multiplicative group of the nonzero field elements. 
If $A$ is the unique subgroup of $H$ of size $m$, with elements $\{a_1, ..., a_m\}$, and $\M$ is the frame with columns defined in (\ref{eqn:Mfieldmat2}), then the coherence $\mu$ of $\M$ is upper-bounded by 
\begin{align} 
\mu & \le \frac{1}{\kappa} \left( (\kappa-1) \sqrt{\frac{1}{m} \left(\kappa + \frac{1}{m}\right)} + \frac{1}{m} \right). 
\end{align} 
\label{thm:coherenceupperbound2}     
\end{theorem}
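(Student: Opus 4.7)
The plan is to treat Lemma \ref{lem:wvec} as a black box which pins down the magnitudes of the discrete Fourier transform $\w = F\cc$ of the inner product vector $\cc = [c_1, c_x, \ldots, c_{x^{\kappa-1}}]^T$, and then recover bounds on the entries of $\cc$ themselves by inverting the transform and applying the triangle inequality.

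First I would observe that, by Lemma \ref{lem:xtotheicosets}, every nontrivial inner product $c_z$ equals $c_{x^t}$ for some $t \in \{0, 1, \ldots, \kappa-1\}$, so the coherence $\mu$ is simply $\max_{0 \le t \le \kappa-1} |c_{x^t}|$. Thus it suffices to bound each $|c_{x^t}|$ uniformly.

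Next, since $F$ is (a normalization of) the $\kappa \times \kappa$ DFT matrix, its inverse is $\frac{1}{\kappa}F^*$, so
\begin{equation}
c_{x^t} \;=\; \frac{1}{\kappa}\sum_{d=0}^{\kappa-1} \gamma^{-td}\, w_{d+1}
\end{equation}
for each $t$. Applying the triangle inequality gives $|c_{x^t}| \le \frac{1}{\kappa}\sum_{d=0}^{\kappa-1}|w_{d+1}|$. Now I would plug in the magnitudes supplied by Lemma \ref{lem:wvec}: namely $|w_1| = \tfrac{1}{m}$, and $|w_{d+1}| = \sqrt{\tfrac{1}{m}(\kappa + \tfrac{1}{m})}$ for $d \ne 0$. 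This yields
\begin{equation}
|c_{x^t}| \;\le\; \frac{1}{\kappa}\left(\frac{1}{m} + (\kappa-1)\sqrt{\frac{1}{m}\left(\kappa + \frac{1}{m}\right)}\right),
\end{equation}
which is exactly the claimed bound, and the bound is independent of $t$, so it also bounds the coherence $\mu$.

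There is essentially no obstacle here beyond Lemma \ref{lem:wvec} itself, which was the real technical work. The only subtlety worth highlighting in the write-up is the normalization convention: the Lemma defines $F$ without a $1/\kappa$ factor, so the inverse picks up the full $1/\kappa$, and one should double-check that the first column of $\M$ is indeed a unit vector (which follows from the relabeling $x_1 = 0$ made at the outset of Appendix \ref{sec:generalkappa}) so that the $c_{x^t}$ already represent normalized inner products and no further scaling enters the final bound.
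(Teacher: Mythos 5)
Your proposal is correct and follows essentially the same route as the paper's own proof: invert the relation $\w = F\cc$ to get $\cc = \frac{1}{\kappa}F^*\w$, apply the triangle inequality, and substitute the magnitudes $|w_1| = \frac{1}{m}$ and $|w_j| = \sqrt{\frac{1}{m}(\kappa + \frac{1}{m})}$ for $j \ne 1$ from Lemma \ref{lem:wvec}. Your additional remarks about the normalization of $F$ and the reduction of the coherence to $\max_t |c_{x^t}|$ via Lemma \ref{lem:xtotheicosets} are accurate and match the paper's setup.
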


\begin{proof} 
The proof follows from Lemma \ref{lem:wvec}. 
Using the notation of this lemma, we may write $\cc = \frac{1}{\kappa} F^* \w$, so that 
\begin{align}
|c_{x^d}| & = \frac{1}{\kappa} \left| \sum_{j = 1}^\kappa \gamma^{d (j-1)} w_j \right| \\ 
& \le \frac{1}{\kappa}  \sum_{j = 1}^\kappa |w_j| \label{eqn:cohprooftriineq} \\  
& = \frac{1}{\kappa} \left( (\kappa-1) \sqrt{\frac{1}{m} \left(\kappa + \frac{1}{m}\right)} + \frac{1}{m} \right), \label{eqn:cohproofwveclemma} 
\end{align} 
where (\ref{eqn:cohprooftriineq}) follows from the triangle inequality and (\ref{eqn:cohproofwveclemma}) follows from Lemma \ref{lem:wvec}. Since the coherence is equal to the largest value among the $|c_{x^d}|$, $d = 0, ..., p^r - 1$, the result now follows immediately. 
\end{proof}

Recall from Theorems \ref{thm:coherenceupperboundmodd} and \ref{thm:groupframebound} that when the size $m$ of $A$ happens to be an odd integer, we can derive even tighter bounds on coherence, provided that $p$ is an odd prime. 
(Note that in our original framework of Theorem \ref{thm:cosetinnerproducts}, when $m$ was taken to be a divisor of $p-1$, the only case where $p$ could be even was when $p = 2$ and $m = 1$ in which case our frames would be 1-dimensional and trivially have coherence equal to 1.)  
We will prove this result shortly, but we first present the following equivalent condition on $A$ for when its size is even or odd. 

\begin{lemma}
Let $p$ be a prime, $r$ an integer, $m$ a divisor of $p^r - 1$, and $\kappa := \frac{p^r - 1}{m}$. 
Let $\F_{p^r}$ be the finite field with $p^r$ elements, whose multiplicative group $\F_{p^r}^\times$ has cyclic generator $x$, and let $A$ be the unique subgroup of $\F_{p^r}^\times$ of size $m$. 
Then $-1 \in A$ if and only if either $p$ or $m$ is even. If $p$ and $m$ are both odd, then $\kappa$ is even and $-1 \in x^\frac{\kappa}{2} A$. 
\label{lem:mevennegoneinA}
\end{lemma}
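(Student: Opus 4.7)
The plan is to exploit the cyclic structure of $\F_{p^r}^\times$ and identify $-1$ explicitly as a power of the generator $x$, then determine which coset of $A$ that power falls into.

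First I would split into two cases based on the parity of $p$. If $p = 2$, then $\F_{p^r}$ has characteristic $2$, so $-1 = 1$, which lies in $A$ trivially, and the ``$p$ even'' branch of the biconditional is immediate. So the substantive content is the case where $p$ is odd. In that case, $p^r - 1$ is even and $\F_{p^r}^\times = \langle x \rangle$ is a cyclic group of even order, so it contains a unique element of order $2$. Since $-1 \ne 1$ and $(-1)^2 = 1$, this unique order-$2$ element must be $-1$, and in terms of the generator we have
\begin{align}
-1 = x^{(p^r-1)/2} = x^{\kappa m/2}.
\end{align}

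Next I would use the standard fact that in a cyclic group of order $p^r - 1$, the unique subgroup of order $m$ is $A = \langle x^\kappa \rangle = \{x^{j\kappa} : 0 \le j < m\}$, so that the cosets of $A$ in $\F_{p^r}^\times$ are $x^i A$ for $0 \le i < \kappa$, and $x^s A = x^t A$ iff $s \equiv t \pmod{\kappa}$. Hence $-1 = x^{\kappa m/2} \in A$ iff $\kappa m/2$ is divisible by $\kappa$, i.e., iff $m$ is even. Combined with the $p=2$ case, this yields the claimed biconditional: $-1 \in A$ iff $p$ is even or $m$ is even.

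Finally, for the last sentence, assume $p$ and $m$ are both odd. Then $p^r - 1$ is even while $m$ is odd, so $\kappa = (p^r-1)/m$ is even, and $\kappa/2$ is an integer. Writing $m = 2k+1$, I compute $\kappa m/2 = (\kappa/2)(2k+1) = k\kappa + \kappa/2 \equiv \kappa/2 \pmod{\kappa}$, which places $-1 = x^{\kappa m/2}$ in the coset $x^{\kappa/2}A$. I don't anticipate a real obstacle here — the whole argument is a routine bookkeeping exercise in cyclic groups; the only mild care needed is to correctly identify $-1$ with the unique order-$2$ element and to handle the characteristic-$2$ case separately, since there $-1 = 1$ and the ``order $2$'' description fails.
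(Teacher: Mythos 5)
Your proof is correct and follows essentially the same route as the paper's: handle $p=2$ separately, use that $-1$ is the unique element of order $2$ in the cyclic group $\F_{p^r}^\times$ when $p$ is odd, and then do coset bookkeeping modulo $\kappa$. The only cosmetic difference is that you pin down the coset of $-1$ by writing $-1 = x^{(p^r-1)/2}$ and reducing the exponent mod $\kappa$, whereas the paper argues that since $(-1)^2 \in A$ the coset index must be $0$ or $\kappa/2$ and rules out $0$; both are equally valid.
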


\begin{proof} 
If $p$ is even, that is $p = 2$, then $-1 \equiv 1$ in $\F_{p^r}$, so trivially $-1 \in A$. If $p$ is odd, then the order $m$ of $A$ is even if and only if $A$ has a subgroup of size $2$, which means there is a nontrivial element in $A$ which is a root of the polynomial $X^2 - 1$. 
The element $-1$ is the only such root. 

If both $m$ and $p$ are odd, then $p^r - 1$ must be even, hence so is $\kappa = \frac{p^r - 1}{m}$. In this case, since the square of $-1$ obviously lies in $A$ (which is equal to $x^\kappa A$), we must have $-1 \in x^\frac{\kappa}{2} A$. 
\end{proof} 


We need one more tool before we can prove our tighter bound: 

\begin{lemma} 
Let $p$, $r$, $m$, $\kappa$, $x$ and $A$ be defined as in Lemma \ref{lem:mevennegoneinA} and $\w = [w_1, ..., w_\kappa]^T$ be defined as in Lemma \ref{lem:wvec}. 
If either $p$ or $m$ is even ($-1 \in A$) then for any $d = 0, 1, ..., \kappa-1$, we have $c_{x^d} = c^*_{x^d}$, and for any $i = 2, 3, ..., \kappa$ we have $w^*_i = w_{\kappa-i + 2}$. If $p$ and $m$ are both odd ($-1 \in x^\frac{\kappa}{2} A$), then  $c_{x^d} = c^*_{x^{d + \kappa/2}}$ and $w^*_i = (-1)^{i - 1} w_{\kappa-i + 2}$. 
\label{lem:cstarwstar} 
\end{lemma}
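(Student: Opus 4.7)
The plan is to establish the two $c$--identities first by a direct computation of the complex conjugate of the defining sum, and then derive the $w$--identities by applying the (scaled) Fourier transform and re-indexing.

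First I would start from the definition $c_{x^d} = \frac{1}{m}\sum_{a\in A}\omega^{\Tr{a x^d}}$ and simply conjugate termwise. Because $\omega^{-1} = \overline{\omega}$ and the trace is additive (in particular $\Tr{-y}=-\Tr{y}$), this yields $c_{x^d}^* = c_{-x^d}$ with no work. The only thing to determine is then which coset of $A$ in $\F_{p^r}^\times$ contains $-x^d$. By Lemma~\ref{lem:mevennegoneinA}, if $p$ or $m$ is even then $-1\in A$, so $-x^d\in x^d A$ and Lemma~\ref{lem:xtotheicosets} gives $c_{-x^d}=c_{x^d}$; if both $p$ and $m$ are odd then $-1\in x^{\kappa/2}A$, so $-x^d\in x^{d+\kappa/2}A$ and Lemma~\ref{lem:xtotheicosets} gives $c_{-x^d}=c_{x^{d+\kappa/2}}$. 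Combining these gives the two claimed $c$--relations.

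Next I would carry the $c$--identities through to $w$. Starting from the defining formula $w_i=\sum_{t=0}^{\kappa-1}\gamma^{t(i-1)}c_{x^t}$ of Lemma~\ref{lem:wvec}, conjugate to obtain
\begin{equation*}
w_i^* = \sum_{t=0}^{\kappa-1}\gamma^{-t(i-1)}\,c_{x^t}^*.
\end{equation*}
In the even case substitute $c_{x^t}^*=c_{x^t}$ and re-index by $j-1\equiv -(i-1)\pmod{\kappa}$ (so $j=\kappa-i+2$ for $i\ge 2$); since $\gamma^\kappa=1$, the exponents line up and the sum becomes $w_{\kappa-i+2}$. In the odd--odd case substitute $c_{x^t}^*=c_{x^{t+\kappa/2}}$ and change variables $s=t+\kappa/2\pmod{\kappa}$. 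This change of variable introduces the factor $\gamma^{(\kappa/2)(i-1)}$, and the key numerical observation is that $\gamma^{\kappa/2}=e^{\pi i}=-1$, so this factor is exactly $(-1)^{i-1}$. After the same re-indexing as in the even case, the remaining sum is $w_{\kappa-i+2}$, giving $w_i^*=(-1)^{i-1}w_{\kappa-i+2}$.

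The argument is essentially bookkeeping, so the only real pitfall is keeping the index conventions of Lemma~\ref{lem:wvec} consistent (the ``$+1$'' shifts between $d\in\{0,\dots,\kappa-1\}$ and $i\in\{1,\dots,\kappa\}$) and making sure the substitution $s=t+\kappa/2 \bmod \kappa$ is a bijection of $\{0,\dots,\kappa-1\}$ so that the outer sum really does reassemble into a $w_j$. I would write the re-indexing step out explicitly to check that the wraparound $j=\kappa+1\mapsto j=1$ is never invoked (this is why the statement is restricted to $i\ge 2$), and verify in passing that in the even case the $i=1$ value $w_1=-1/m$ is already manifestly real, consistent with $w_1^*=w_1$.
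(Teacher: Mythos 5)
Your proposal is correct and follows essentially the same route as the paper's proof: establish the $c$-identities from $c_z^* = c_{-z}$ together with the coset location of $-1$ (the paper writes this out by noting that multiplication by $-1$ or $-x^{\kappa/2}$ permutes $A$, which is the same computation), then conjugate the defining sum for $w_i$, pull out the factor $\gamma^{(i-1)\kappa/2} = (-1)^{i-1}$ in the odd--odd case, and re-index the exponent $-(i-1) \equiv (\kappa-i+2)-1 \pmod{\kappa}$ to recover $w_{\kappa-i+2}$. Your closing remarks on the $i\ge 2$ restriction and the bijectivity of the shift by $\kappa/2$ are exactly the right consistency checks.
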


\begin{proof} 
As usual, set $\omega = e^{2 \pi i / p}$ and $\gamma := e^{2 \pi i / \kappa}$. 
If $-1 \in A$, then multiplication by $-1$ permutes the elements of $A$, so we have 
\begin{align} 
c_{x^d}^* & = \left(\frac{1}{m} \sum_{a \in A} \omega^{x^d a}\right)^* \\ 
&= \frac{1}{m} \sum_{a \in A} \omega^{-x^d a} \\ 
&= \frac{1}{m} \sum_{a \in A} \omega^{x^d a} \\ 
& = c_{x^d}. 
\end{align} 
It follows that $c_{x^d}$ is real. Furthermore, in this case we have 

\begin{align} 
w_i^* & = \left( \sum_{j = 1}^\kappa \gamma^{(i-1)(j-1)} c_{x^{j-1}} \right)^* \\ 
& = \sum_{j = 1}^\kappa \gamma^{-(i-1)(j-1)} c_{x^{j-1}}^* \\ 
& = \sum_{j = 1}^\kappa \gamma^{(-i+1)(j-1)} c_{x^{j-1}} \\ 
& = \sum_{j = 1}^\kappa \gamma^{(\kappa-i+1)(j-1)} c_{x^{j-1}} \\ 
& = \sum_{j = 1}^\kappa \gamma^{((\kappa-i+2) - 1)(j-1)} c_{x^{j-1}} \\ 
& = w_{\kappa - i + 2}.  
\end{align} 

Now, if instead $-1 \in x^{\frac{\kappa}{2}} A$, then multiplication by $-x^\frac{\kappa}{2}$ permutes the elements of $A$, and we have 
\begin{align} 
c_{x^d} & = \frac{1}{m} \sum_{a \in A} \omega^{x^d a} \\ 
& = \frac{1}{m} \sum_{a \in A} \omega^{- x^d x^{\frac{\kappa}{2}} a} \\ 
& = \left( \frac{1}{m} \sum_{a \in A} \omega^{x^{d + \frac{\kappa}{2}} a} \right)^* \\ 
& = c_{x^{d + \kappa/2}}^*. 
\end{align} 

Also in this case, we may write 
\begin{align} 
w_i^* & = \left( \sum_{j = 1}^\kappa \gamma^{(i-1)(j-1)} c_{x^{j-1}} \right)^* \\ 
& = \sum_{j = 1}^\kappa \gamma^{-(i-1)(j-1)} c_{x^{j-1}}^* \\ 
& = \sum_{j = 1}^\kappa \gamma^{-(i-1)(j-1)} c_{x^{j-1 + \frac{\kappa}{2}}} \\ 
& = \sum_{j = 1}^\kappa \gamma^{-(i-1)(j-1 + \frac{\kappa}{2})}  \gamma^{(i-1) \frac{\kappa}{2}} c_{x^{j-1 + \frac{\kappa}{2}}} \\ 
& = \gamma^{(i-1) \frac{\kappa}{2}} \sum_{j = 1}^\kappa \gamma^{(\kappa - i + 1)(j-1 + \frac{\kappa}{2})} c_{x^{j-1 + \frac{\kappa}{2}}} \\ 
& = \gamma^{(i-1) \frac{\kappa}{2}} w_{\kappa - i + 2} \\ 
& = (-1)^{i-1} w_{\kappa - i + 2}, 
\end{align} 
where the last line follows from the fact that $\gamma^\frac{\kappa}{2} = -1$. This completes the proof of the lemma. 
\end{proof}

We are now equipped to prove the second part of Theorem \ref{thm:groupframebound}, which we restate here for convenience:


\begin{theorem} 
\label{thm:groupframebound2ndhalf} 
Let $p$ be a prime, $r$ a positive integer, $m$ a divisor of $p^r-1$, and $A = \{a_i\}_{i = 1}^m$ the unique subgroup of $\F^\times_{p^r}$ of size $m$. Then setting $\omega = e^{\frac{2 \pi i}{p}}$ and $\kappa := \frac{p^r-1}{m}$, if both $p$ and $m$ are odd, the coherence $\mu$ of our frame $\M$ in (\ref{eqn:Mfieldmat2}) satisfies 
\begin{align} 
 \mu & \le \frac{1}{\kappa} \sqrt{\left(\frac{1}{m} + \left(\frac{\kappa}{2} - 1 \right) \beta \right)^2 + \left(\frac{\kappa}{2}\right)^2 \beta^2},  
\end{align} 
where $\beta = \sqrt{\frac{1}{m} \left( \kappa + \frac{1}{m} \right) }$. 
\end{theorem}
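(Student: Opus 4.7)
The plan is to mirror the proof of Theorem \ref{thm:coherenceupperbound2} but to exploit an additional symmetry available only when $p$ and $m$ are both odd. By Lemma \ref{lem:mevennegoneinA}, these hypotheses force $\kappa$ to be even and place $-1$ in the coset $x^{\kappa/2} A$, and Lemma \ref{lem:cstarwstar} then supplies the crucial symmetry $c_{x^d}^{\ast} = c_{x^{d+\kappa/2}}$. Invoking Lemma \ref{lem:wvec}, the starting point is $c_{x^d} = \frac{1}{\kappa} \sum_{j=1}^{\kappa} \gamma^{-d(j-1)} w_j$, where $w_1 = -1/m$ and $|w_j| = \beta$ for $j \ne 1$. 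Unlike in Theorem \ref{thm:coherenceupperbound2}, I will not apply the triangle inequality monolithically across all $\kappa$ terms, but rather separate real and imaginary contributions.

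First I would split $c_{x^d}$ into its real and imaginary parts using this symmetry. Since $\gamma^{\kappa/2} = -1$, replacing $d$ by $d + \kappa/2$ multiplies the $j$-th summand by $(-1)^{j-1}$. Therefore $2\,\mathrm{Re}(c_{x^d}) = c_{x^d} + c_{x^{d+\kappa/2}}$ retains only the odd-$j$ terms, while $2i\,\mathrm{Im}(c_{x^d}) = c_{x^d} - c_{x^{d+\kappa/2}}$ retains only the even-$j$ terms. Concretely, the odd-$j$ block consists of $j = 1$ (contributing $|w_1| = 1/m$) together with $\kappa/2 - 1$ further terms of magnitude $\beta$, while the even-$j$ block consists of $\kappa/2$ terms all of magnitude $\beta$.

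I would then apply the triangle inequality separately to each block, obtaining $|\mathrm{Re}(c_{x^d})| \le \frac{1}{\kappa}\bigl(\tfrac{1}{m} + (\tfrac{\kappa}{2} - 1)\beta\bigr)$ and $|\mathrm{Im}(c_{x^d})| \le \frac{1}{\kappa}\cdot\tfrac{\kappa}{2}\,\beta$. Combining via $|c_{x^d}|^2 = \mathrm{Re}(c_{x^d})^2 + \mathrm{Im}(c_{x^d})^2$ and maximizing over $d$ then yields exactly the claimed bound. The main obstacle will be careful bookkeeping: one must verify that the parity of $(-1)^{j-1}$ arising from the Fourier shift aligns with the parity of the sign in Lemma \ref{lem:cstarwstar}, and that the lone atypical term $w_1$ (magnitude $1/m$ rather than $\beta$) lands in the odd-$j$ block so that it enters the real part. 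This parity matching is precisely what fails when $p$ or $m$ is even, which is why the hypothesis is essential. The savings over Theorem \ref{thm:coherenceupperbound2} come from replacing a single $\kappa$-term triangle inequality with two Pythagorean-aggregated $\kappa/2$-term inequalities.
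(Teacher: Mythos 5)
Your proof is correct, and it reaches the paper's bound by a genuinely different (and cleaner) decomposition. The paper works inside the single sum $c_{x^{i-1}} = \frac{1}{\kappa}\sum_{j=1}^{\kappa}\gamma^{-(i-1)(j-1)}w_j$, pairing the indices $\{j,\kappa-j+2\}$ and using the $w$-level relation $w_j^* = (-1)^{j-1}w_{\kappa-j+2}$ from Lemma \ref{lem:cstarwstar} to turn each pair into $2\Re(\cdot)$ or $2i\Im(\cdot)$; this forces separate handling of the fixed points $j=1$ and $j=\frac{\kappa}{2}+1$, a case split on the parity of $\frac{\kappa}{2}$, the introduction of phases $\tilde\theta_j$, and the counting of $n_e$ and $n_o$, with the two cases reconciling only at the very end. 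You instead use the $c$-level symmetry $c_{x^d}^* = c_{x^{d+\kappa/2}}$ together with the observation that shifting $d$ by $\frac{\kappa}{2}$ multiplies the $j$-th Fourier summand by $(-1)^{j-1}$, so that $\Re(c_{x^d})$ is exactly the odd-$j$ partial sum ($w_1$ of magnitude $\frac{1}{m}$ plus $\frac{\kappa}{2}-1$ terms of magnitude $\beta$, by Lemma \ref{lem:wvec}) and $i\,\Im(c_{x^d})$ is exactly the even-$j$ partial sum ($\frac{\kappa}{2}$ terms of magnitude $\beta$). Two triangle inequalities and Pythagoras then give the bound uniformly in $d$, with no case analysis and no special treatment of $w_{\kappa/2+1}$. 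The ``parity alignment'' you flag as the main obstacle is in fact a non-issue in your formulation: you never need the $w$-level identity at all, only $c_{x^d}=c^*_{x^{d+\kappa/2}}$ (the two statements are equivalent under the length-$\kappa$ DFT, which is why the paper's bookkeeping and yours land on the same answer). Both arguments degrade identically to the bound of Theorem \ref{thm:coherenceupperbound2} when $-1\in A$, since then the imaginary block collapses and the Pythagorean gain disappears.
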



\begin{proof} 
Since both $p$ and $m$ are odd, then from Lemma \ref{lem:mevennegoneinA} we know that $\kappa$ is even and $-1$ lies in the coset $x^{\frac{\kappa}{2}} A$. 
There is a 1-1 correspondence between the set of integers $\{1, ..., \kappa\}$ and itself which sends $j \mapsto \kappa - j + 2 \mod \kappa$, for $j = 1, ..., \kappa$. 
This mapping fixes the singletons $\{ 1 \}$ and $\{\frac{\kappa}{2} + 1\}$ and interchanges the elements in the pairs $\{j, \kappa - j + 2\}$ for $j = 2, ..., \frac{\kappa}{2}$. 

As in Lemma \ref{lem:wvec}, set $\cc = [c_1, c_x, c_{x^2}, ..., c_{x^{\kappa-1}}]^T$ and $\w := [w_1, ..., w_\kappa]^T = F \cc$, where $F$ is the scaled $\kappa \times \kappa$ Fourier matrix with entries $F_{ij} = \gamma^{(i - 1) (j- 1)}$, where $\gamma = e^{2 \pi i/\kappa}$. 
Since $-1 \in x^{\frac{\kappa}{2} A}$, then by Lemmas \ref{lem:wvec} and \ref{lem:cstarwstar} we have 
\begin{align} 
w_j \cdot w_{\kappa - j + 2} & = w_j \left(\left((-1)^{j-1}\right)^{-1} w_j^* \right) \\ 
& = (-1)^{j-1} |w_j|^2 \\ 
& = (-1)^{j-1} \beta^2 
\end{align} 
where $\beta = \sqrt{\frac{1}{m} \left(\kappa + \frac{1}{m}\right)}$.

We quickly note that given integers $i$ and $j$, the conjugate of $\gamma^{-(i-1)(j-1)}$ can be expressed as 
\begin{align} 
\left(\gamma^{-(i-1)(j-1)}\right)^* & = \gamma^{-(i-1)(-j+1)}\\ 
& = \gamma^{-(i-1)(\kappa - j + 1)} \\ 
& = \gamma^{-(i-1)( (\kappa-j + 2) - 1)}. 
\end{align}  

Note that the inverse of $F$ is $\frac{1}{\kappa} F^*$. 
From the equation $\cc = \frac{1}{\kappa} F^* \w$, we may write 
\begin{align} 
c_{x^{i-1}} & = \frac{1}{\kappa} \sum_{j = 1}^{\kappa} \gamma^{-(i-1)(j-1)} w_j. 
\end{align} 

Now we can group the terms of the summation of $c_{x^{i-1}}$ by our above subsets of indices ($\{j, \kappa - j + 2\}$ for $j = 2, .\
.., \frac{\kappa}{2}$) as follows: 
\begin{align} 
c_{x^{i-1}} & = \frac{1}{\kappa} \left[ w_1 + \gamma^{-(i-1)\frac{\kappa}{2}} w_{\frac{\kappa}{2} + 1}  + \sum_{j = 2}^{\frac{\kappa}{2}} \left( \gamma^{-(i-1)(j-1)} w_j + \gamma^{-(i-1)((\kappa - j + 2) -1)} w_{\kappa - j + 2} \right) \right]. \label{eqn:cwsummation0} 
\end{align} 

We know from Lemma \ref{lem:wvec} that $w_1 = - \frac{1}{m}$ and $|w_{\frac{\kappa}{2} + 1}| = \beta$. 
Also, $$\gamma^{-(i-1)\frac{\kappa}{2}} = (\gamma^\frac{\kappa}{2})^{-(i-1)} = (-1)^{-(i-1)} = (-1)^{i-1}, $$ and from Lemma \ref{lem:cstarwstar} we know that $w_{\frac{\kappa}{2} + 1} = (-1)^{\frac{\kappa}{2}} w_{\frac{\kappa}{2} + 1}^*$. 
Thus, if $\frac{\kappa}{2}$ is even we have that $w_{\frac{\kappa}{2} + 1}$ is purely real, so $ \gamma^{-(i-1)\frac{\kappa}{2}} w_{\frac{\kappa}{2} + 1} = \pm \beta$. 
And if $\frac{\kappa}{2}$ is odd, we have that $w_{\frac{\kappa}{2} + 1}$ is purely imaginary, in which case $ \gamma^{-(i-1)\frac{\kappa}{2}} w_{\frac{\kappa}{2} + 1} = \pm i \beta$.

From these observations and Lemma \ref{lem:cstarwstar}, we have 
\begin{align} 
\nonumber \gamma^{-(i-1)(j-1)} w_j + \gamma^{-(i-1)((\kappa - j + 2) -1)} w_{\kappa - j + 2} \\ 
= \gamma^{-(i-1)(j-1)} w_j + (-1)^{j-1} \gamma^{-(i-1)((\kappa - j + 2) -1)} w_j^* \\ 
= \gamma^{-(i-1)(j-1)} w_j + (-1)^{j-1} \left( \gamma^{-(i-1)(j-1)} w_j\right)^*. \label{eqn:gammadubeq1} 
\end{align} 
If $j$ is even (\ref{eqn:gammadubeq1}) becomes $2i \Im(\gamma^{-(i-1)(j-1)} w_j)$ and if $j$ is odd it becomes $2 \Re(\gamma^{-(i-1)(j-1)} w_j)$, where $\Im(z)$ and $\Re(z)$ denote the imaginary and real parts of the complex number $z$ respectively. 
If we define the phase $\theta_j$ such that $w_j = \beta e^{i \theta_j}$, we can further express these as 
\begin{align}
2i \Im(\gamma^{-(i-1)(j-1)} w_j) & = 2i \beta \sin \left(\theta_j - \frac{2 \pi}{\kappa} (i-1)(j-1)\right)  
\end{align} 
and 
\begin{align} 
2 \Re(\gamma^{-(i-1)(j-1)} w_j) & = 2 \beta \cos \left(\theta_j - \frac{2 \pi}{\kappa} (i-1)(j-1) \right) . 
\end{align} 
To simplify our notation, we will define 
$$\tilde{\theta}_j := \theta_j - \frac{2 \pi}{\kappa} (i-1)(j-1), $$ 
allowing us to write the summation in (\ref{eqn:cwsummation0}) as 
$$\sum_{j = 2}^{\frac{\kappa}{2}} \left( \gamma^{-(i-1)(j-1)} w_j + \gamma^{-(i-1)((\kappa - j + 2) -1)} w_{\kappa - j + 2} \right)$$ 
$$ = \sum_{j \text{ even} } 2i \beta \sin(\tilde{\theta}_j) + \sum_{j \text{ odd}} 2 \beta \cos(\tilde{\theta}_j). $$


Now, we can bound the coherence by $$\mu \le \max_{\{\theta_j\}} \max_i |c_{x^{i-1}}| \le \max_i \max_{\{\theta_j\}} |c_{x^{i-1}}|, $$ and from our above discussion this becomes 
\begin{align} 
\max_{\{\tilde{\theta}_j\} } \frac{1}{\kappa} \left| - \frac{1}{m} \pm \beta + \sum_{j \text{ even} } 2i \beta \sin(\tilde{\theta}_j) + \sum_{j \text{ odd}} 2 \beta \cos(\tilde{\theta}_j) \right| \label{eqn:cohboundrdiv2even} 
\end{align} 
if $\frac{\kappa}{2}$ is even, and 
\begin{align} 
\max_{\{\tilde{\theta}_j\} } \frac{1}{\kappa} \left| - \frac{1}{m} \pm i \beta + \sum_{j \text{ even} } 2i \beta \sin(\tilde{\theta}_j) + \sum_{j \text{ odd}} 2 \beta \cos(\tilde{\theta}_j) \right| \label{eqn:cohboundrdiv2odd} 
\end{align} 
if $\frac{\kappa}{2}$ is odd. 

If we set $$n_e := \#\{j \text{ even}~|~ 2 \le j \le \frac{\kappa}{2} \}$$ and $$n_o := \#\{j \text{ odd}~|~ 2 \le j \le \frac{\kappa}{2} \}, $$ then by speculation, (\ref{eqn:cohboundrdiv2even}) becomes bounded by
\begin{align} 
\frac{1}{\kappa} \left|\frac{1}{m} + \beta +  n_e 2i \beta + n_o 2 \beta \right| = \frac{1}{\kappa} \sqrt{ \left(\frac{1}{m} + \beta (1 +  2 n_o) \right)^2  + (2 n_e \beta)^2 } \label{eqn:cohboundrdiv2even_2}
\end{align} 
and (\ref{eqn:cohboundrdiv2odd}) becomes bounded by 
\begin{align} 
\frac{1}{\kappa} \left| \frac{1}{m} + i \beta +  n_e 2i \beta + n_o 2 \beta \right| = \frac{1}{\kappa} \sqrt{ \left(\frac{1}{m} + 2 n_o \beta \right)^2  + \beta^2 (1 + 2 n_e)^2 }\label{eqn:cohboundrdiv2odd_2}
\end{align}

Finally, we note that when $\frac{\kappa}{2}$ is even, then $n_e = \frac{\kappa}{4}$ (half the numbers between 1 and $\frac{\kappa}{2}$, inclusive, are even), and hence $n_o = \left(\frac{r}{\kappa} - 1\right) - n_e = \frac{\kappa}{4} - 1$. 
Thus, (\ref{eqn:cohboundrdiv2even_2}) becomes 
\begin{align} 
\frac{1}{\kappa} \sqrt{ \left(\frac{1}{m} + \beta \left(1 +  2 \left(\frac{\kappa}{4} - 1 \right) \right) \right)^2  + \left(2 \cdot \frac{\kappa}{4} \cdot \beta\right)^2 } \\ 
= \frac{1}{\kappa} \sqrt{\left(\frac{1}{m} + \left(\frac{\kappa}{2} - 1 \right) \beta \right)^2 + \left(\frac{\kappa}{2}\right)^2 \beta^2}. 
\end{align} 
 
We get the same bound when $\frac{\kappa}{2}$ is odd. Indeed, in this case $n_e = \frac{1}{2} \left( \frac{\kappa}{2} - 1 \right) = \frac{\kappa}{4} - \frac{1}{2}$ (now half the numbers between $1$ and $\frac{\kappa}{2} - 1$, inclusive, are even), and $n_o = \left(\frac{\kappa}{2} - 1\right) - n_e = \frac{\kappa}{4} - \frac{1}{2}$. Then (\ref{eqn:cohboundrdiv2odd_2}) also becomes 
\begin{align} 
\frac{1}{\kappa} \sqrt{ \left(\frac{1}{m} + 2 \left(\frac{\kappa}{4} - \frac{1}{2}\right) \beta \right)^2  + \beta^2 \left(1 + 2 \left( \frac{\kappa}{4} - \frac{1}{2} \right) \right)^2 } \\ 
= \frac{1}{\kappa} \sqrt{\left(\frac{1}{m} + \left(\frac{\kappa}{2} - 1 \right) \beta \right)^2 + \left(\frac{\kappa}{2}\right)^2 \beta^2}. 
\end{align} 
 
This concludes the proof. 
\end{proof}

\noindent 
\textit{Remark:} If were to mimic the proof of Theorem \ref{thm:coherenceupperboundmodd} in the case when $m$ is even (so $-1 \in A$ and the $c_x^{d}$ are real), then we would arrive at the same bound as in Theorem \ref{thm:coherenceupperbound2}. Indeed, in this case from Lemma \ref{lem:cstarwstar} we have 
\begin{align} 
\left(\gamma^{-(i-1)(j-1)}w_j\right)^* = \gamma^{-(i-1)((\kappa-j+2)-1)} w_{\kappa-j+2}, \label{eqn:mevenwcondition} 
\end{align} 
for $j = 2, ..., \kappa$, and hence if $\kappa$ is odd we have 
\begin{align} 
c_{x^{i-1}} & = \frac{1}{\kappa} \sum_{j = 1}^{\kappa} \gamma^{-(i-1)(j-1)} w_j \\ 
& = \frac{1}{\kappa} \left[ w_1 + \sum_{j = 2}^{\frac{\kappa+1}{2}} \left( \gamma^{-(i-1)(j-1)} w_j + \gamma^{-(i-1)((\kappa-j+2)-1)} w_{\kappa - j + 2} \right) \right] \\ 
& = \frac{1}{\kappa} \left[ -\frac{1}{m} + \sum_{j = 2}^{\frac{\kappa+1}{2}} 2 \beta \cos(\tilde{\theta}_j) \right]. 
\end{align} 

If $\kappa$ is even, we note that our above condition (\ref{eqn:mevenwcondition}) implies that $\left(\gamma^{-(i-1)\frac{\kappa}{2} }w_{\frac{\kappa}{2} + 1} \right)^* = \gamma^{-(i-1)\frac{\kappa}{2}} w_{\frac{\kappa}{2} + 1}$, so we must have that $\gamma^{-(i-1)\frac{\kappa}{2}} w_{\frac{\kappa}{2} + 1}$ is real, and hence equal to $\pm \beta$. 
Thus we get 
\begin{align} 
c_{x^{i-1}} & = \frac{1}{\kappa} \left[ w_1 + \gamma^{-(i-1)\frac{\kappa}{2}} w_{\frac{\kappa}{2} + 1} + \sum_{j = 2}^{\frac{\kappa}{2}} \left( \gamma^{-(i-1)(j-1)} w_j + \gamma^{-(i-1)((\kappa-j+2)-1)} w_{\kappa - j + 2} \right) \right] \\
& = \frac{1}{\kappa} \left[ - \frac{1}{m} \pm \beta + \sum_{j = 2}^{\frac{\kappa}{2}} 2 \beta \cos( \tilde{\theta}_j) \right].  
\end{align} 

In either case, maximizing over $\{\theta_j\}$ gives us an upper bound of 
\begin{align} 
\mu \le \frac{1}{\kappa} \left( \frac{1}{m} + (\kappa-1)\beta \right), 
\end{align} 
which matches with our bound from Theorem \ref{thm:coherenceupperbound2}.

\end{appendices}


\end{document}